\newcommand{\pl}{\parallel}
\renewcommand{\div}{\mbox{div\,}}
\newcommand{\curl}{\mbox{curl\,}}
\newcommand{\tr}{\mbox{tr\,}}
\newtheorem{thm}{Theorem}[section]
\newtheorem{lem}[thm]{Lemma}
\newtheorem{rem}[thm]{Remark}
\newtheorem{defn}[thm]{Definition}
\newtheorem{prop}[thm]{Proposition}
\newcommand{\R}{\mathbb{R}}
\newcommand{\Z}{\mathbb{Z}}
\renewcommand{\S}{\mathbb{S}}
\title{Some mathematics for quasi-symmetry}
\author{J.W.Burby}
\address{Los Alamos National Laboratory, Los Alamos, NM 87545, USA}
\email{jburby@lanl.gov}
\author{N.Kallinikos, R.S.MacKay}
\address{Mathematics Institute, University of Warwick, Coventry CV4 7AL, U.K.}
\email{Nikos.Kallinikos@warwick.ac.uk, R.S.MacKay@warwick.ac.uk}
\date{\today}                                           
\begin{document}

\begin{abstract}
Quasi-symmetry of a steady magnetic field means integrability of first-order guiding-centre motion.
Here we derive many restrictions on the possibilities for a quasi-symmetry.  We also derive an analogue of the Grad-Shafranov equation for the flux function in a quasi-symmetric magnetohydrostatic field.
\end{abstract}

\keywords{quasi-symmetry, Hamiltonian dynamics, stellarator, charged particle}
\subjclass[2010]{37K05, 70S05}

\maketitle

\section{Introduction}

The concept of quasi-symmetry was introduced in \cite{Bo2} and then distilled into a design principle for stellarators by \cite{NZ}.
In its strongest sense it means integrability of first-order guiding-centre motion.
An excellent survey of the subject was provided by \cite{H}, assuming magnetohydrostatic (MHS) fields, that is, magnetohydrodynamic equilibrium with isotropic pressure and no mean flow.

A fundamental step was made by \cite{BQ}, who stated necessary and sufficient local conditions for integrability of guiding-centre motion in terms of a continuous symmetry of three differential forms derived from the magnetic field and made clear that quasi-symmetry can be separated from the issue of whether the magnetic field is MHS or not.

Perturbative calculations of \cite{GB}, however, make it look very likely that the only possibility for exact quasi-symmetry for MHS fields with bounded magnetic surfaces is axi\-symmetry.  Our paper gives first steps to deciding whether or not this is true.

In this paper we prove many consequences of quasi-symmetry and thereby restrictions on possible quasi-symmetric fields.  In the case of a quasi-symmetric MHS field we derive a generalisation of the axisymmetric Grad-Shafranov equation.

\cite{BQ} built in an assumption that a quasi-symmetry must be a circle-action.  Here we relax this requirement, though prove that under some mild conditions it is actually a circle-action.

We write many equations using differential forms.  For those unfamiliar with differential forms, Ch.7 of \cite{A} is a classic and there is a tutorial \cite{M} specifically for plasma physicists.

Throughout the paper we will assume enough smoothness that the equations we write make sense, at least in a weak sense.

\section{Guiding-centre motion}

We consider non-interacting charged particles in a steady, smooth (at least $C^1$) magnetic field $B$ in 3D satisfying $\div B = 0$, with $B\ne 0$ in the region of interest. 

The (non-relativistic) motion of a particle of mass $m$, charge $e$, position $q$ in a magnetic field $B$ on oriented Euclidean $\R^3$ has a formulation as a Hamiltonian system of 3 degrees of freedom (DoF), 
\begin{equation}
i_V\omega = dH
\end{equation}
for the vector field $V = (\dot{q},\dot{p})$ on the cotangent bundle $T^*\R^3$, with Hamiltonian function and symplectic form (non-degenerate closed 2-form) given by
\begin{align}
H(q,p) &= \frac{|p|^2}{2m} \label{eq:H2}\\
\omega &= -d\vartheta - e \pi^*\beta .
\end{align}
Here, $p$ is a cotangent vector at $q \in \R^3$ (applied to a tangent vector $\xi$ to $\R^3$ it produces $p(\xi) = p\cdot \xi$), $|p|$ is its Euclidean norm, $\vartheta$ is the tautological 1-form on $T^*\R^3$ defined by $\vartheta_{(q,p)}(\delta q,\delta p) = p(\delta q)$, $\pi: (q,p) \mapsto q$ is the natural map from $T^*\R^3$ to $\R^3$, $\pi^*$ is the pullback by $\pi$, and $\beta = i_B \Omega$ for volume-form $\Omega$ on $\R^3$ corresponding to the Euclidean metric and chosen orientation.  Note that $\div B=0$ is equivalent to $d\beta=0$.

One could allow time-dependent $B$, electric fields, arbitrary oriented Riemannian 3-manifold, and relativistic effects, but to focus ideas we avoid all of these (the cases with electrostatic fields and relativity are treated in an appendix).

If the perpendicular speed $v_\perp$ is less than $r_B |\Omega_B|$, where $r_B$ is the radius of curvature of the fieldlines and $\Omega_B = -e|B|/m$ is the ``gyrofrequency", then there is a locally unique ``guiding centre" $X$ within $r_B$ of $q$ and ``gyro-radius vector" $\rho$ perpendicular to $B(X)$ and smaller than $r_B$ such that
\begin{align}
v &= \frac{e}{m}B(X)\times \rho + v_\pl b(X) \\
q &= X+\rho,
\end{align}
where $v=\dot{q}$, $b = B/|B|$ and $v_\pl = v\cdot b$.  Indeed, the above formulae provide a local diffeomorphism from $(X,\rho,v_\pl)$ to $(q,v)$ for $|\rho| < r_B$. 

If $B$ varies slowly on the length-scales of $\rho$ and $v_\pl/\Omega_B$, then rotation of $\rho$ about $B(X)$ is an approximate symmetry of the particle motion.  There is a corresponding adiabatic invariant 
\begin{equation}
\mu = \frac{mv_\perp^2}{2|B(X)|} = \frac12 {|e\Omega_B(X)| |\rho|^2},
\end{equation}
called the ``magnetic moment".

If one neglects the variation of $\mu$ with time, one can reduce charged particle motion by gyro-rotation \cite{L} to obtain a Hamiltonian system of 2DoF with state $(X,v_\pl)$ and 
\begin{align}
H &= \frac12{mv_\pl^2} + \mu |B(X)|, \label{eq:H7} \\
\omega &= -e\pi^*\beta - md(v_\pl \pi^*b^\flat), \label{eq:omegaGC}
\end{align}
with $\pi^*$ now being the pullback for the map $\pi(X,v_\pl)=X$.
The equation $i_V\omega = dH$ for $V=(\dot{X},\dot{v}_\pl)$ implies
\begin{align}
e\dot{X}\times\widetilde{B} &= \mu\nabla |B| + m \dot{v}_\pl b \\
\dot{X}\cdot b &= v_\pl,
\end{align}
with the modified field
\begin{equation}
\widetilde{B} = B+ \frac{m}{e}v_\pl c, \mbox{ where } c = \curl b.
\end{equation}
These can be rearranged to give 
\begin{align}
\dot{X} &= \left({v_\pl} \widetilde{B}(X) + \frac{\mu}{e} b \times \nabla|B|\right)/\widetilde{B}_\pl \label{eq:drift}\\
\dot{v}_\pl &= -\frac{\mu}{m} \frac{\widetilde{B}}{\widetilde{B}_\pl} \cdot \nabla |B|, \label{eq:vdrift}
\end{align}
where 
\begin{equation}
\label{eq:mmB}
\widetilde{B}_\pl = \widetilde{B}\cdot b .
\end{equation}
We call (\ref{eq:drift})-(\ref{eq:vdrift}) {\em first-order guiding-centre motion} (FGCM) -- ``first-order" because, as shown in \cite{L}, it is possible to derive higher order approximations, but we will restrict attention to first-order in this paper.

The Hamiltonian formulation (\ref{eq:H7})-(\ref{eq:omegaGC}) and drift equations (\ref{eq:drift})-(\ref{eq:vdrift}) hold for an arbitrary oriented 3D Riemannian manifold, with $|\ |, \cdot, \times, \nabla, \div$ and $\curl$ interpreted appropriately.  Note that the above system is defined for $\widetilde{B}_\pl\neq0$, which is a reasonable assumption because the zeroth-order term in (\ref{eq:mmB}) is $|B|\neq0$. In toroidal geometry, however, one can treat the degeneracy at $\widetilde{B}_\pl=0$ to avoid any arising inconsistencies in gyrokinetics and derive at the same time a canonical Hamiltonian structure for the purpose of symplectic integration \cite{BE}.

The zeroth-order approximation to FGCM (using $1/e$ as convenient smallness parameter) is
\begin{align}
\dot{X} &= {v_\pl}b(X) \\
\dot{v}_\pl &= -\frac{\mu}{m} b\cdot \nabla|B| .
\end{align}
We call this ZGCM.

Both FGCM and ZGCM conserve $H$ of (\ref{eq:H7}).  We write $E$ for the value of $H$.

In ZGCM, the guiding centre moves along a fieldline.  It may be {\em circulating}, meaning $v_\pl$ has constant sign, or {\em bouncing}, meaning it is confined to an interval where $|B(X)| \le E/\mu$ and $v_\pl$ changes sign on reaching each end\footnote{the usual terminology for ``bouncing'' is ``trapped'' but this is inappropriate in a context where the whole point is to determine whether the particles are confined!}.

In FGCM, there are drifts of the guiding centre across the field.  These come from the modification $\widetilde{B}$, the $\nabla |B|$ term and the $\widetilde{B}_\pl$ denominator in (\ref{eq:drift}).
There are variants of FGCM which agree to first order in $1/e$, but we choose the one above because it has a natural Hamiltonian formulation, which we believe is important and in particular allows us to discuss its integrability.

\section{Continuous symmetries of Hamiltonian systems}
\label{sec:ctssymm}
\begin{defn}
\label{dfn:symmetry}
A {\em continuous symmetry} of a Hamiltonian system $(M,H,\omega)$ on a manifold $M$ is a $C^1$ vector field $U$ on $M$ such that the Lie derivatives $L_UH$ and $L_U\omega$ are both zero.
\end{defn}
It follows that there is a conserved quantity locally, and globally under mild conditions.  This is a Hamiltonian version of Noether's theorem.

\begin{thm}
\label{thm:Noether}
If $U$ is a continuous symmetry for a Hamiltonian system $(M,H,\omega)$ with vector field $V$ then there is a conserved local function $K$ for $V$. 
If there are combinations $f U + g V$ of  $U$ and $V$ with closed or recurrent trajectories realising a basis of first homology $H_1(M)$ then $K$ is global. 
\end{thm}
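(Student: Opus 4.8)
The plan is to realise the Noether charge as a local primitive of the $1$-form $i_U\omega$ and then to control its single-valuedness through the periods of this form over a homology basis. The starting point is that $\omega$ is closed, so Cartan's magic formula gives $L_U\omega = d\,i_U\omega + i_U\,d\omega = d\,i_U\omega$; hence the hypothesis $L_U\omega = 0$ says exactly that $i_U\omega$ is a closed $1$-form. By the Poincar\'e lemma it is locally exact, so on any contractible neighbourhood there is a function $K$, unique up to an additive constant, with $dK = i_U\omega$. This is the candidate conserved quantity.

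To see that $K$ is conserved along the dynamics, I would compute $L_V K = dK(V) = (i_U\omega)(V) = \omega(U,V)$. Since $i_V\omega = dH$ gives $\omega(V,U) = dH(U) = L_U H = 0$, antisymmetry of $\omega$ yields $\omega(U,V) = 0$, so $L_V K = 0$. This establishes the first assertion, the existence of a local constant of the motion.

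For globality, the obstruction to patching the local primitives into a single-valued $K$ is the de Rham class $[i_U\omega] \in H^1(M)$, and by de Rham duality this class vanishes precisely when $\int_\gamma i_U\omega = 0$ for each $\gamma$ in a basis of $H_1(M)$. I would invoke the hypothesis that such a basis is realised by closed or recurrent trajectories of vector fields $W = fU + gV$. Along any such trajectory the velocity is $W$, so the integrand is $(i_U\omega)(fU+gV) = f\,\omega(U,U) + g\,\omega(U,V) = 0$ pointwise, again by antisymmetry and $\omega(U,V)=0$. Hence every period vanishes, $[i_U\omega] = 0$, and $K$ becomes globally well defined. The one delicate point I expect is the recurrent, non-closed case: there the orbit does not literally close, so one must close it up by a short arc near a near-return point and show that the contribution of this arc to the integral tends to zero as the return tightens, recovering the vanishing period in the limit; the strictly periodic case is immediate from the pointwise vanishing of the integrand.
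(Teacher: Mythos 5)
Your proposal is correct and follows essentially the same route as the paper: closedness of $i_U\omega$ via $L_U\omega=0$ and $d\omega=0$, the Poincar\'e lemma for the local primitive $K$, conservation from $\omega(U,V)=0$ (using $i_V\omega=dH$, $L_UH=0$ and antisymmetry), and vanishing of the periods over closed trajectories of $fU+gV$ by the pointwise vanishing of the integrand, with recurrent trajectories handled by closing with a short arc and bounding the error. The only cosmetic difference is that you phrase the global obstruction explicitly as the de Rham class $[i_U\omega]\in H^1(M)$, which the paper leaves implicit.
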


\begin{proof}
$L_U\omega=0$ and $d\omega=0$ imply $d i_U \omega=0$, so by Poincar\'e's lemma $i_U \omega = dK$ for some local function $K$, and then 
\begin{equation}
i_V dK = i_V i_U \omega = -i_U dH = -L_UH= 0.
\label{eq:iVdK}
\end{equation}

If there is a combination $w= f U + g V$ of $U$ and $V$ with a closed trajectory $\gamma$ then  
\begin{equation}
\int_\gamma i_U \omega = \int_0^T (f i_U + g i_V) i_U\omega \ dt,
\end{equation}
where $t$ is time along $w$ and $T$ is the period.
The first term vanishes by antisymmetry of $\omega$ and the second because of (\ref{eq:iVdK}).  For a recurrent trajectory, close it by a short arc and bound the error to obtain that the integral of $i_U\omega$ in its homology direction is zero (the concept of homology direction is described in \cite{Fr}).  If $\int_\gamma i_U\omega = 0$ holds for $\gamma$ representing a basis of $H_1(M)$ we deduce that $K$ is global.
\end{proof}

\begin{defn}
\label{def:integrable}
A 2DoF Hamiltonian system with vector field $V$ is {\em integrable} if it has a continuous symmetry $U$ with global conserved quantity $K$ and $U,V$ are linearly independent almost everywhere (a.e.) (equivalently $dK, dH$ are linearly independent a.e.).
\end{defn}

Note that Definition \ref{dfn:symmetry} implies that the symmetry $U$ and the Hamiltonian vector field $V$ commute, because $i_{[U,V]}\omega = L_Ui_V\omega-i_VL_U\omega = L_UdH = dL_UH = 0$ and $\omega$ is non-degenerate.

For an integrable 2DoF system, the bounded regular\footnote{A component $C$ of a level set of a $C^1$ function $F:M\to N$ is {\em regular} if $DF$ is surjective everywhere on $C$; in the present context, $F=(K,H)$ and $N=\R^2$.} components of level sets of $(K,H)$ are 2-tori and there is a coordinate system in which $U, V$ are both constant vector fields on each of them.  This is a special case of the Arnol'd-Liouville theorem \cite{A}.  Here is a statement and proof. 

\begin{thm}
\label{thm:AL}
If $U,V$ are commuting vector fields on a bounded surface $S$, independent everywhere on it, then $S$ is a 2-torus and there are coordinates $(\theta^1, \theta^2)\!\!\mod 2\pi$ on it in which $U,V$ are constant.
\end{thm}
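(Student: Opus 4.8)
The plan is to realise $S$ as a homogeneous space for an $\R^2$-action built from the joint flow of $U$ and $V$, and then to pin down that homogeneous space using boundedness. First I would check that $U$ and $V$ are complete. In the application $S$ is a bounded regular component of a level set of $(K,H)$, hence a closed embedded surface without boundary, and a closed bounded subset, so it is compact; a $C^1$ vector field on a compact manifold without boundary is complete. Writing $\phi^s_U$ and $\phi^t_V$ for the resulting flows, the hypothesis $[U,V]=0$ (which holds here, as noted after Definition \ref{def:integrable}) guarantees that the flows commute, so $\Phi(s,t):=\phi^s_U\circ\phi^t_V$ defines a smooth action of the additive group $\R^2$ on $S$, with $\Phi(s,0)=\phi^s_U$ and $\Phi(0,t)=\phi^t_V$; consequently the infinitesimal generators of the action are exactly $U$ and $V$.

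Next I would use independence to control the orbits. Since $U$ and $V$ are linearly independent at every point, the generators span the tangent space everywhere, so the orbit map $\pi(s,t):=\Phi(s,t)x_0$ is a local diffeomorphism; in particular each orbit is open and the action is locally free (every isotropy group is discrete). Taking $S$ connected, the complement of any orbit is a union of open orbits and hence open, so each orbit is also closed; by connectedness there is a single orbit and the action is transitive. Fixing a basepoint $x_0$ with isotropy subgroup $\Gamma=\{(s,t):\Phi(s,t)x_0=x_0\}$, the orbit map descends to a bijection $\bar\pi:\R^2/\Gamma\to S$ which, being a local diffeomorphism between manifolds (the discreteness of $\Gamma$ making $\R^2\to\R^2/\Gamma$ a covering), is a diffeomorphism.

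It then remains to identify $\Gamma$. As a discrete subgroup of $\R^2$ it is $\{0\}$, a rank-one lattice, or a rank-two lattice, giving $S$ diffeomorphic to the plane, an infinite cylinder, or a $2$-torus respectively. Only the torus is bounded (equivalently compact), so the hypothesis forces $\Gamma$ to be a full rank-two lattice and $S\cong T^2$. Finally, choosing a basis $(f_1,f_2)$ of $\Gamma$ and the linear isomorphism $L:\R^2\to\R^2$ with $f_1\mapsto(2\pi,0)$ and $f_2\mapsto(0,2\pi)$, the quantities $(\theta^1,\theta^2)=L(s,t)\bmod 2\pi$ are well-defined coordinates on $S$. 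In the flow coordinates $(s,t)$ one has $U=\partial_s$ and $V=\partial_t$, and since $L$ is linear it carries these constant fields to constant fields, so $U$ and $V$ are constant in the angle coordinates $(\theta^1,\theta^2)$, as required.

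The step I expect to be the main obstacle is not any single computation but the passage from the local to the global picture: establishing completeness of the flows, and then verifying that the orbit map descends to a genuine diffeomorphism $\R^2/\Gamma\cong S$ rather than merely a local one. Once the transitive $\R^2$-action is in hand, the classification of discrete subgroups of $\R^2$ and the role of boundedness in selecting the rank-two case are routine.
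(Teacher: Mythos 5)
Your proof is correct and takes essentially the same route as the paper's: both construct the commuting $\R^2$-action $\Phi(s,t)=\phi^s_U\circ\phi^t_V$, identify $S$ with $\R^2/\Gamma$ for the discrete isotropy subgroup $\Gamma$, use boundedness (read as compactness) to force $\Gamma\cong\Z^2$, and obtain the angle coordinates via the linear map sending a lattice basis to $(2\pi,0),(0,2\pi)$, under which $U,V$ remain constant. The only difference is that you fill in steps the paper leaves implicit (completeness of the flows and the open-closed argument for transitivity of the action), which strengthens rather than changes the argument.
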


\begin{proof}
Let $\phi^U$ be the flow of $U$ and $\phi^V$ the flow of $V$.  For $t=(t_1,t_2) \in \R^2$ let $\phi_t = \phi_{t_1}^U \circ \phi_{t_2}^V$. Because the two commute and are independent, $\phi$ is a transitive action of the group $\R^2$ on $S$.  Choose a point $0 \in S$ and let $T$ be the set of $t \in \R^2$ such that $\phi_t(0)=0$.  It is a discrete subgroup of $\R^2$ (as a group under addition).  Then $S$ is diffeomorphic to $\R^2/T$.  Since $S$ is bounded, $T$ must be isomorphic to $\Z^2$.  Thus $T$ is generated by a pair $(T^1, T^2)$ of independent vectors in $\R^2$.  Let $A$ be the matrix with columns $(T^1,T^2)$.  Then we obtain an action of $\S^1\times\S^1$ (with $\S^1 = \R/2\pi \Z$) on $S$ by $\theta \mapsto \phi_{A\theta/2\pi}(0)$ where $\theta = (\theta^1,\theta^2)\in \S^1\times \S^1$.  Keeping $0\in S$ fixed this action defines a diffeomorphism $\mathbb{S}^1\times\mathbb{S}^1\rightarrow S$.  In these coordinates, $U$ is the first column of $2\pi A^{-1}$ and $V$ is its second column, thus constant vector fields.
\end{proof}

\begin{defn}
Coordinates $(\theta^1,\theta^2)\!\!\mod 2\pi$ on a 2-torus in which commuting vector fields $U,V$ on it are constant are called {\em Arnol'd-Liouville (AL) coordinates}.
\end{defn}

The concepts of integrability and AL coordinates have higher dimensional analogues but the 2DoF context suffices here.

\section{Quasi-symmetry}

\begin{defn}
\label{def:qs}
Given a magnetic field $B$ on an oriented 3D Riemannian manifold $Q$, a vector field $u$ on $Q$ is a {\em quasi-symmetry} of $B$ if $U=(u,0)$ is a continuous symmetry for FGCM for all values of magnetic moment $\mu$.  
\end{defn}

We assume $B$ nowhere zero on $Q$ in order for FGCM to make sense.
Note that in contrast to most of the literature (e.g.~\cite{H}) we do not assume that $B$ is MHS.  Indeed, one might like to apply the concept of quasi-symmetry to magnetohydrodynamic equilibria with a mean flow (cf.~\cite{SH}) or with anisotropic pressure, for example.  The concept of FGCM does not require an MHS field, so neither should quasi-symmetry.

A simple example of a quasi-symmetric magnetic field is any axisymmetric $B$ in Euclidean space.  Take $u=\partial_\phi = r \hat{\phi}$ in cylindrical coordinates $(r,\phi,z)$.  Axisymmetry of $B$ can be defined in various ways, e.g.~$L_uB=0$ or $L_u\beta = 0$ for this $u$.  They are equivalent because $\div u = 0$ and 
\begin{equation}
i_{[u,B]}\Omega = L_u\beta - (\div u)\beta.
\label{eq:uB}
\end{equation}

Our first main theorem is:

\begin{thm}
\label{thm:main}
A vector field $u$ is a quasisymmetry of a magnetic field $B$ iff
\begin{align}
L_u|B| &= 0 \\
L_u\beta &= 0 \\
L_u b^\flat &= 0.
\end{align}
\end{thm}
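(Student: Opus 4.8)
The plan is to unwind the two defining conditions of a continuous symmetry, $L_UH=0$ and $L_U\omega=0$, directly from the explicit formulas (\ref{eq:H7}) and (\ref{eq:omegaGC}) for the reduced system, taking $U=(u,0)$. Since $\omega$ carries no $\mu$-dependence while $H$ does, the ``for all $\mu$'' clause in Definition \ref{def:qs} will bite only on the $H$-condition. The converse direction is then immediate: if the three displayed equations hold, every term appearing below vanishes, so $U$ is a symmetry for each $\mu$; hence only the forward implication needs real work.

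First I would treat $L_UH=0$. As $U=(u,0)$ has vanishing $v_\pl$-component and $\tfrac12 mv_\pl^2$ does not depend on $X$, only the $\mu|B|$ term contributes, giving $L_UH=\mu\,L_u|B|$. Demanding this vanish for all $\mu$ (indeed any single nonzero value suffices) yields $L_u|B|=0$.

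Next I would compute $L_U\omega$. The structural facts I would lean on are that $U=(u,0)$ is $\pi$-related to $u$, so that the Lie derivative commutes with the pullback, $L_U\pi^*\beta=\pi^*L_u\beta$ and $L_U\pi^*b^\flat=\pi^*L_ub^\flat$, and that $L_Uv_\pl=0$, which kills the derivative of the coefficient $v_\pl$. Using also that $d$ commutes with both $\pi^*$ and $L_U$, this gives
\[
L_U\omega=-e\,\pi^*L_u\beta-m\,dv_\pl\wedge\pi^*L_ub^\flat-m\,v_\pl\,\pi^*dL_ub^\flat .
\]

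The key step is then to split this $2$-form into a horizontal part (a pullback of a form on $Q$) and a part proportional to $dv_\pl$; because $\pi$ is a submersion, $\pi^*$ is injective and this splitting is pointwise unique, so the two parts must vanish separately. Contracting with $\partial_{v_\pl}$ isolates the middle term and forces $\pi^*L_ub^\flat=0$, hence $L_ub^\flat=0$; the surviving horizontal part then collapses to $-e\,\pi^*L_u\beta=0$, giving $L_u\beta=0$. I expect the only delicate point to be presenting this horizontal/$dv_\pl$ decomposition cleanly and confirming that the three conditions decouple rather than entangle, but contracting against $\partial_{v_\pl}$ makes the extraction transparent.
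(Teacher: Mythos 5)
Your proposal is correct and follows essentially the same route as the paper: expand $L_U\omega$ via the product rule into the three terms $-e\,\pi^*L_u\beta$, $-m\,dv_\pl\wedge\pi^*L_ub^\flat$, $-m\,v_\pl\,\pi^*dL_ub^\flat$, and separate them by contracting against vertical and horizontal directions (the paper evaluates on horizontal pairs at $v_\pl=0$ to get $L_u\beta=0$ first, then contracts with $(0,1)$ to get $L_ub^\flat=0$; you do the vertical contraction first, which is the same idea in the opposite order). Your explicit appeal to $\pi$-relatedness and injectivity of $\pi^*$ just makes precise what the paper leaves implicit.
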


\begin{proof}
Recall the Hamiltonian and symplectic form for FGCM: 
\begin{align}
H &= \frac12 m{v_\pl^2} + \mu |B(X)| \\
\omega &= -e\pi^*\beta - d(p_\pl \pi^*b^\flat) .
\end{align}
Then $L_UH = \mu L_u|B|$, so $L_UH=0$ for all $\mu$ iff $L_u|B|=0$.
Next 
\begin{equation}
L_U\omega = -e L_u\beta - md(v_\pl L_u b^\flat) = -e L_u\beta - mdv_\pl \wedge L_ub^\flat - mv_\pl dL_ub^\flat.
\end{equation}
Apply this to an arbitrary pair of tangents to $Q$ and set $v_\pl=0$ to deduce that $L_U\omega=0$ implies $L_u\beta=0$.
Apply it to an arbitrary tangent $\xi$ to $Q$ and the vector $(0,1)$ tangent to $Q\times \R$ to deduce that $i_\xi L_ub^\flat = 0$, so $L_ub^\flat=0$.

In the other direction, if $L_u\beta=0$ and $L_ub^\flat=0$ then $L_U\omega=0$.

So we have proved that $u$ is a quasi-symmetry of $B$ iff $L_u|B|=0$, $L_u\beta=0$ and $L_ub^\flat=0$.
\end{proof}

We write the three conditions of Theorem~\ref{thm:main} in vector calculus or suffix notation for comparison:
\begin{align}
u\cdot \nabla |B| &= 0 \\
\curl(B\times u) &= 0 \\
u^i \partial_ib_j + b_i \partial_j u^i &= 0 . \label{eq:Lubflat}
\end{align}

Quasi-symmetry has the following significant consequences. 

\begin{thm}
\label{thm:conseq}
If $u$ is a quasi-symmetry of a magnetic field $B$ then $L_uB^\flat=0$, 
$L_u\Omega=0$ and $L_uB = 0$.
\end{thm}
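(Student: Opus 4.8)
The plan is to prove the three identities in the order $L_uB^\flat=0$, then $L_u\Omega=0$, then $L_uB=0$, each building on the previous, using throughout Theorem~\ref{thm:main} (which supplies $L_u|B|=0$, $L_u\beta=0$, $L_ub^\flat=0$) together with the Leibniz rule for Lie derivatives. The first identity is immediate: since $B=|B|\,b$ we have $B^\flat=|B|\,b^\flat$, so $L_uB^\flat=(L_u|B|)\,b^\flat+|B|\,L_ub^\flat=0$.

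The crux is $L_u\Omega=0$, equivalently $\div u=0$. The obstacle here is that the Lie derivative does not commute with the metric raising/lowering isomorphisms unless $u$ is a Killing field, so I cannot deduce this simply by dualising $L_ub^\flat=0$. Instead I will realise $\Omega$ as a wedge of forms I already control. Writing $\beta=|B|\,i_b\Omega$ and combining $L_u|B|=0$ with $L_u\beta=0$ yields $L_u(i_b\Omega)=0$. I then use the pointwise identity $\Omega=b^\flat\wedge i_b\Omega$, which follows by expanding $0=i_b(b^\flat\wedge\Omega)=(i_bb^\flat)\,\Omega-b^\flat\wedge i_b\Omega$ and noting $i_bb^\flat=|b|^2=1$. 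Applying $L_u$ and the Leibniz rule gives $L_u\Omega=(L_ub^\flat)\wedge i_b\Omega+b^\flat\wedge L_u(i_b\Omega)=0$.

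Finally, for $L_uB=0$ I return to the flux-form identity (\ref{eq:uB}), namely $i_{[u,B]}\Omega=L_u\beta-(\div u)\beta$. Since $L_u\beta=0$ and $\div u=0$ (the latter from $L_u\Omega=0$), the right-hand side vanishes, so $i_{[u,B]}\Omega=0$; non-degeneracy of the volume form then forces $[u,B]=0$, i.e.\ $L_uB=0$.

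I expect the middle step to be the main obstacle: it is tempting but incorrect to read $L_u\Omega=0$ straight off $L_ub^\flat=0$, and the real content is recognising that the decomposition $\Omega=b^\flat\wedge i_b\Omega$ packages exactly the two Lie-invariant forms at our disposal. Once $\div u=0$ is secured, the passage from invariance of the flux form $\beta$ back to invariance of the vector field $B$ through (\ref{eq:uB}) is routine.
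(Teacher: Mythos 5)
Your proposal is correct and follows essentially the same route as the paper: the first identity via $B^\flat=|B|\,b^\flat$, the second via the wedge decomposition of $\Omega$ into $b^\flat$ and $\beta/|B|$ (the paper writes this as $\beta\wedge b^\flat=|B|\,\Omega$, which is the same identity up to commuting factors), and the third via the commutator--flux formula, where your equation (\ref{eq:uB}) is exactly the paper's (\ref{eq:iuBOmega}) after substituting $L_u\Omega=(\div u)\,\Omega$. The only addition is your derivation of $\Omega=b^\flat\wedge i_b\Omega$ from contracting the vanishing 4-form $b^\flat\wedge\Omega$, which the paper states without proof.
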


\begin{proof}
To prove $L_uB^\flat=0$, use $B^\flat = |B| b^\flat$, so 
\begin{equation}
L_uB^\flat = (L_u|B|) b^\flat + |B| L_u b^\flat.  
\label{eq:LuBflat}
\end{equation}
By Theorem~\ref{thm:main}, $L_u|B|=0$ and $L_ub^\flat=0$.  So $L_uB^\flat=0$.

To prove $L_u\Omega=0$ note that $\beta \wedge b^\flat = |B| \Omega$.  Applying $L_u$, we obtain 
\begin{equation}
L_u\beta \wedge b^\flat + \beta \wedge L_u b^\flat = (L_u |B|) \Omega + |B| L_u\Omega .
\label{eq:LubetawedgeBflat}
\end{equation}
According to Theorem~\ref{thm:main}, the first three terms of this are zero.  As $|B| \ne 0$, we obtain $L_u\Omega=0$.

To prove that $L_uB=0$, note that it can alternatively be written as $[u,B]=0$.  Use the formula
\begin{equation}
i_{[u,B]}\Omega = L_u i_B \Omega - i_B L_u\Omega ,
\label{eq:iuBOmega}
\end{equation}
which holds for any pair of vector fields $u, B$ and any differential form $\Omega$.
By Theorem~\ref{thm:main}, $L_u\beta=0$, and we just proved that $L_u\Omega=0$.  So using $\Omega$ non-degenerate, we see that $[u,B]=0$, cf.~(\ref{eq:uB}).
\end{proof}

In suffix notation, the first result of Theorem~\ref{thm:conseq} is written analogously to that for $L_u b^\flat$ in (\ref{eq:Lubflat}).
Alternatively and more usefully it can be written in vector calculus as 
\begin{equation}
\label{eq:LubigBflat}
u\times J = \nabla (u\cdot B),
\end{equation}
because
\begin{equation}
L_uB^\flat = i_udB^\flat + di_uB^\flat = i_ui_J\Omega + d(u\cdot B).
\label{eq:iuiJ1}
\end{equation}
The second says $\div u = 0$, and for the third, 
\begin{equation}
L_uB = [u,B]=u\cdot\nabla B - B\cdot\nabla u = \curl(B\times u) + (\div B) u - (\div u) B.  
\end{equation}
Since $\div B =0$ and we already proved that $\div u=0$, then $[u,B]=0$ can be written in this case as $\curl(B\times u)=0$.

Noting that some steps in the above proof are reversible, we can derive various alternative necessary and sufficient conditions for quasi-symmetry.  The following theorem gives some examples, from which we shall frequently use (i) or (ii).  Case (i) is a slight generalisation of the formulation in \cite{BQ}.

\begin{thm}
\label{thm:alt}
A vector field $u$ is a quasi-symmetry of a magnetic field $B$ iff any of the following sets of conditions hold:
\begin{enumerate}
\item $L_u |B| = 0$, $L_u\beta = 0$, $L_u B^\flat = 0$;
\item $L_u\Omega=0$, $L_u\beta = 0$, $L_u B^\flat=0$;
\item $L_u\Omega=0$, $L_uB=0$, $L_uB^\flat=0$.
\end{enumerate}
\end{thm}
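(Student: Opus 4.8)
The plan is to observe that essentially every implication in the proofs of Theorems~\ref{thm:main} and~\ref{thm:conseq} is reversible once one knows which terms vanish, and then to arrange the three condition-sets into a single cycle anchored at the original characterisation $L_u|B|=0$, $L_u\beta=0$, $L_ub^\flat=0$ of Theorem~\ref{thm:main}, which I abbreviate (QS). The forward direction costs nothing: if $u$ is a quasi-symmetry then (QS) holds, and Theorem~\ref{thm:conseq} adds $L_uB^\flat=0$, $L_u\Omega=0$ and $L_uB=0$, so each of (i), (ii), (iii) is a sub-collection of the resulting pool of vanishing Lie derivatives. It therefore remains to prove the reverse implications, which I would peel off one condition at a time along the chain (iii)$\Rightarrow$(ii)$\Rightarrow$(i)$\Rightarrow$(QS).

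Two of these links are immediate re-readings of earlier identities. For (i)$\Rightarrow$(QS) the only missing datum is $L_ub^\flat=0$; reading~(\ref{eq:LuBflat}) backwards, the hypotheses $L_u|B|=0$ and $L_uB^\flat=0$ force $|B|L_ub^\flat=0$, and $|B|\ne0$ finishes it. For (iii)$\Rightarrow$(ii) the missing datum is $L_u\beta=0$; in~(\ref{eq:iuBOmega}) the hypothesis $L_u\Omega=0$ collapses the right-hand side to $i_{[u,B]}\Omega=L_u\beta$, so non-degeneracy of $\Omega$ makes $L_uB=[u,B]=0$ equivalent to $L_u\beta=0$, and the hypothesis $L_uB=0$ delivers the latter.

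The remaining link (ii)$\Rightarrow$(i), whose missing datum is $L_u|B|=0$, is the step I expect to be the \emph{main obstacle}, since none of the displayed identities connects $L_u\Omega$ to $L_u|B|$ while keeping $B^\flat$ (rather than $b^\flat$) in play. The device I would introduce is the pointwise identity $\beta\wedge B^\flat=|B|^2\Omega$, immediate from $\beta\wedge b^\flat=|B|\Omega$ and $B^\flat=|B|b^\flat$. Applying $L_u$ as a derivation gives
\begin{equation}
L_u\beta\wedge B^\flat+\beta\wedge L_uB^\flat=2|B|(L_u|B|)\Omega+|B|^2L_u\Omega.
\end{equation}
Under the hypotheses of (ii) the two left-hand terms and the final right-hand term all vanish, leaving $2|B|(L_u|B|)\Omega=0$; since $|B|\ne0$ and $\Omega$ is a volume form, $L_u|B|=0$.

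These three reverse implications close the cycle, so (QS), (i), (ii) and (iii) are mutually equivalent and each is equivalent to quasi-symmetry. The only genuinely new ingredient beyond re-reading the identities of Theorem~\ref{thm:conseq} is the wedge relation $\beta\wedge B^\flat=|B|^2\Omega$ feeding (ii)$\Rightarrow$(i); everything else is a bookkeeping exercise in identifying which vanishing terms render the relevant maps injective.
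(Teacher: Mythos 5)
Your proof is correct and follows essentially the same route as the paper: the equivalences are anchored at Theorem~\ref{thm:main}, your links (i)$\Rightarrow$(QS) and (iii)$\Rightarrow$(ii) are exactly the paper's reversals of (\ref{eq:LuBflat}) and (\ref{eq:iuBOmega}), and the forward direction is Theorem~\ref{thm:conseq}. The only (cosmetic) difference is in (ii)$\Rightarrow$(i): the paper cites (\ref{eq:LubetawedgeBflat}), built on $\beta\wedge b^\flat=|B|\,\Omega$, which tacitly requires re-expressing $L_ub^\flat$ in terms of $L_uB^\flat$ and $L_u|B|$, whereas you apply $L_u$ to the equivalent identity $\beta\wedge B^\flat=|B|^2\,\Omega$, keeping $B^\flat$ in play throughout --- a slightly more self-contained version of the same computation.
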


\begin{proof}
(i) To prove the first set, we use (\ref{eq:LuBflat}).  Thus under $L_u|B|=0$ and $B \ne 0$, we obtain $L_uB^\flat=0$ iff $L_ub^\flat=0$, which converts Theorem~\ref{thm:main} to (i).

(ii) The second comes from the first and (\ref{eq:LubetawedgeBflat}).

(iii) The third comes from the second and (\ref{eq:iuBOmega}).
\end{proof}

Here are some additional consequences of quasi-symmetry.

\begin{thm}
\label{thm:add}
If $u$ is a quasi-symmetry of $B$ then
\begin{enumerate}
\item $L_u(u\cdot B)=0, L_u(u\cdot b)=0,$
\item $[u,b]=0, [u,B/|B|^2]=0, [u,u_\perp]=0$ (where $u_\perp$ is the component of $u$ perpendicular to $B$),
\item $[u,J]=0$ (where $J=\curl B$), $[u,[J,B]]=0$, $L_u(J\cdot B)=0$, $L_J(u\cdot B)=0$.
\end{enumerate}
\end{thm}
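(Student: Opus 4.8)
The plan is to exploit the single structural fact that the Lie derivative $L_u$ is a derivation commuting with contractions and with $d$, together with the library of quantities it already annihilates. By Theorems~\ref{thm:main}, \ref{thm:conseq} and \ref{thm:alt} we have $L_u|B|=0$, $L_ub^\flat=0$, $L_uB^\flat=0$, $L_u\beta=0$, $L_u\Omega=0$ and $L_uB=0$, and trivially $L_uu=[u,u]=0$. Almost every item below is then obtained by writing the quantity as an algebraic or differential combination of these annihilated objects and letting $L_u$ pass through by the Leibniz rule.

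For (i) I would write $u\cdot B=i_uB^\flat$ and $u\cdot b=i_ub^\flat$. Since $L_u(i_uX)=i_{[u,u]}X+i_uL_uX=i_uL_uX$ for any form $X$, and $L_uB^\flat=L_ub^\flat=0$, both scalars are killed. For (ii) I would write $b=|B|^{-1}B$, $B/|B|^2=|B|^{-2}B$, and $u_\perp=u-(u\cdot b)b$; expanding each by Leibniz and using $L_u|B|=0$, $L_uB=0$ and the two identities just established in (i), every term vanishes, so $L_ub=L_u(B/|B|^2)=L_uu_\perp=0$, which is the assertion that the three commutators vanish.

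For the first three parts of (iii) the same mechanism works. The key observation for $[u,J]=0$ is the three-dimensional identity $dB^\flat=i_J\Omega$ already used in (\ref{eq:iuiJ1}): applying $L_u$, commuting it past $d$, and using $L_u\Omega=0$ gives $i_{[u,J]}\Omega=L_u(i_J\Omega)-i_JL_u\Omega=d(L_uB^\flat)=0$, whence $[u,J]=0$ by non-degeneracy of $\Omega$. Then $[u,[J,B]]=0$ follows immediately from the Jacobi identity together with $[u,J]=0$ and $[u,B]=0$. And $L_u(J\cdot B)=L_u(i_JB^\flat)=i_{[u,J]}B^\flat+i_JL_uB^\flat=0$.

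The one place the uniform argument does not apply is $L_J(u\cdot B)=0$, since here the differentiation is along $J$ rather than $u$; this I expect to be the only step needing a genuinely different idea. For it I would use the vector-calculus form (\ref{eq:LubigBflat}) of quasi-symmetry, namely $u\times J=\nabla(u\cdot B)$, so that $L_J(u\cdot B)=J\cdot\nabla(u\cdot B)=J\cdot(u\times J)=0$ by the vanishing of the scalar triple product. Thus every assertion reduces either to a Leibniz computation with the already-annihilated building blocks or, in this last case, to the identity (\ref{eq:LubigBflat}).
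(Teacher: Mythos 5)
Your proposal is correct and follows essentially the same route as the paper: items (i) and (ii) by the Leibniz rule applied to the quantities already annihilated by $L_u$, and $[u,J]=0$ from $L_u$ applied to $i_J\Omega = dB^\flat$, then the Jacobi identity and the commutation relation for the remaining parts. Even your final step for $L_J(u\cdot B)$ is just the vector-calculus translation of the paper's argument, since (\ref{eq:LubigBflat}) is the restatement of $L_uB^\flat=0$ via (\ref{eq:iuiJ1}) and the triple-product identity $J\cdot(u\times J)=0$ is exactly $i_Ji_ui_J\Omega=0$.
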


\begin{proof}
(i) $u\cdot B = i_u B^\flat$ so $L_u(u\cdot B) = i_u L_uB^\flat +i_{[u,u]}B^\flat$, both of which are zero.

For $u\cdot b$, apply $L_u$ to $u.B = |B| u\cdot b$ and use the above plus $B \ne 0$ to deduce that $L_u(u\cdot b)=0$.

(ii) For $[u,b]$, use $[u,B]=0$, $B=|B| b$ and $L_u|B|=0$, to obtain $[u,b]=0$.

Similarly, $[u,B/|B|^2]=0$ or indeed $[u, f(|B|)B]=0$ for any function $f$.

$u_\perp = u - (u\cdot B) B/|B|^2$ and $L_u$ on each of these terms is zero, so $L_u u_\perp = 0$.

(iii) $J=\curl B$ translates to $i_J\Omega = dB^\flat$.  Apply $L_u$ to each side.  $L_u i_J\Omega = i_J L_u \Omega + i_{[u,J]}\Omega$ and $L_u dB^\flat = dL_uB^\flat=0$.  But $L_u\Omega=0$ so we deduce that $i_{[u,J]}\Omega=0$.  $\Omega$ is non-degenerate, so $[u,J]=0$.

For $[u,[J,B]]$, we use the Jacobi identity $[u,[J,B]] + [J,[B,u]] + [B,[u,J]] = 0$.  We already proved that $[B,u]=0$ and $[u,J]=0$.  So $[u,[J,B]]=0$.

$L_u(J\cdot B) = L_u i_J B^\flat = i_J L_u B^\flat = 0$, using $[u,J]=0$.

$L_J(u\cdot B) = i_Jd(u\cdot B) = i_JL_uB^\flat = 0$, using (\ref{eq:iuiJ1}).

\end{proof}

\section{Flux function}

The condition $L_u\beta=0$ of Theorem~\ref{thm:main} merits additional comment.  We discuss it in a more general context than quasi-symmetry.  Specifically, we require only $L_u\beta=0$, $\div B=0$ and $\div u = 0$.

Because $d\beta=0$ and $\beta = i_B\Omega$, $L_u\beta=0$ is equivalent to $di_ui_B\Omega=0$.  Thus by Poincar\'e's lemma $i_ui_B\Omega = d\psi$ for some function $\psi$ locally (in vector calculus, $B\times u = \nabla \psi$), and both $u$ and $B$ are tangent to regular level sets of $\psi$.  

An important question is whether $\psi$ is global.  It is global if there are combinations $fu+gB$ with closed or recurrent trajectories realising a basis of $H_1(Q)$.  For the case of $Q$ being a solid torus with a circulating magnetic field, then $B$ has a closed trajectory realising $H_1(Q)$ so $\psi$ is global.  For more complicated domains, it might fail.

\begin{defn}
A {\em flux function} for a field $B$ on $Q$ is a globally defined function $\psi: Q \to \R$ with $i_Bd\psi = 0$ and $d\psi \ne 0$ a.e.
\end{defn}
Note that existence of a flux function $\psi$ is an assumption of the standard approach to quasi-symmetry (e.g.~\cite{H}), whereas here we derived it as a consequence, at least as a local function.
For many purposes, however, we will need to assume that $\psi$ is global and has non-zero derivative a.e. (see ahead to Definition~\ref{def:Bqs}).
Note that in \cite{H}, $\psi$ is chosen to be the toroidal flux enclosed by the level set and ``flux function" is used for any function of $\psi$.

If $\psi$ is global, it follows from the classification of surfaces that bounded regular components of level sets of $\psi$ are 2-tori.    
\begin{defn}
The bounded regular components of level sets of a flux function are called {\em flux surfaces}.
\end{defn}
Furthermore, $u$ and $B$ are independent everywhere on such a 2-torus, because $i_u i_B \Omega = d\psi$.  $L_u\beta=0$ with $\div u = 0$ imply that $[u,B]=0$, cf.~(\ref{eq:uB}).  Using Theorem~\ref{thm:AL}, it follows that there are coordinates $(\theta^1,\theta^2)\mod 2\pi$ on the 2-torus in which both $u$ and $B$ are constant vector fields.  
There are many works on ``flux coordinates", including the book \cite{DHCS} and the recent paper \cite{KG}, but we are not aware of any of them using this very natural AL approach.  The closest we have seen is \cite{Ha}.

We now derive a formula for the winding ratios of $X=u, B$ on a flux surface.
\begin{defn} 
The {\em winding ratio} $\iota_X$ of a vector field $X$ on a 2-torus with coordinates $(\theta^1,\theta^2) \in \S^1\times \S^1$ is
the limit as $t\to \infty$ of the ratio of the number of revolutions made in $\theta^1$ along a trajectory of vector field $X$, to that in $\theta^2$.  $\iota_X$ is considered as a point in the projective line $\R P^1$, to include the option of $\infty$ and to ignore the sign of $X$.
\end{defn}
For vector fields $X$ of ``Poincar\'e type'' (those having a cross-section) on a 2-torus, the limit exists and is the same for all trajectories and for both signs of $X$.
Since $u$ and $B$ are conjugate to non-zero constant vector fields on each flux surface, they are of Poincar\'e type.

As $X$ conserves $\Omega$ and $\psi$, it also conserves an area-form on flux surfaces.  Indeed, let
\begin{equation}
\mathcal{A} = i_n\Omega, \mbox{ where } n = \frac{\nabla \psi}{|\nabla \psi|^2}.
\end{equation}
In vector calculus, $\mathcal{A}(\xi,\eta) = n\cdot (\xi \times \eta)$.  Then the restriction $\mathcal{A}_C$ of $\mathcal{A}$ to
a regular component $C$ of a level set of $\psi$ is non-degenerate and conserved by $X$.  To prove the conservation it is enough to work out $L_X\mathcal{A}$ on $(u,B)$, which form a basis of tangents to $C$.  Using $[u,B]=0$, we have
\begin{equation}
i_ui_BL_X\mathcal{A} = L_X i_u i_B i_n\Omega = L_X i_n d\psi = L_X 1 = 0.
\end{equation}

\begin{thm}
\label{thm:wr}
If $\div u = \div B = 0$ and $i_u i_B \Omega = d\psi$ then for $X=u$ or $B$ on a bounded regular component $C$ of a level set of $\psi$,
\begin{equation}
\iota_X = -\frac{\int_{\gamma_1} i_X\mathcal{A}_C}{\int_{\gamma_2}i_X\mathcal{A}_C},
\label{eq:iota}
\end{equation}
where $\gamma_j$ is any closed loop on $C$ making one turn in $\theta^j$ and none in the other.
\end{thm}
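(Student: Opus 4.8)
The plan is to extract everything from one structural fact: the $1$-form $i_X\mathcal{A}_C$ is closed on the torus $C$ and annihilates the flow direction $X$. The winding ratio is then forced by its two periods. Set $\sigma=i_X\mathcal{A}_C$.

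First I would establish closedness. Since $C$ is a surface, $\mathcal{A}_C$ is a top-degree form, so $d\mathcal{A}_C=0$; together with $L_X\mathcal{A}_C=0$ (verified just before the statement) and Cartan's formula $L_X=d\,i_X+i_X\,d$ this gives $d\sigma=0$. Hence $\int_\gamma\sigma$ depends only on the homology class of $\gamma$, so the two periods $\int_{\gamma_1}\sigma$ and $\int_{\gamma_2}\sigma$ are well defined and independent of the chosen representatives. I would also record the pointwise identity $\sigma(X)=\mathcal{A}_C(X,X)=0$, so that $\sigma$ vanishes along every trajectory of $X$ and its integral over any arc of such a trajectory is exactly zero.

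Next I would convert this vanishing into the winding statement. By the Arnol'd--Liouville normal form (Theorem~\ref{thm:AL}) $X$ is conjugate to a nonzero constant field on $C$, so a trajectory run for time $T$ makes $n_1(T)$ turns in $\theta^1$ and $n_2(T)$ turns in $\theta^2$ with $n_1(T)/n_2(T)\to\iota_X$. Closing such a long arc by a single bounded connecting path, the resulting loop is homologous to $n_1(T)\gamma_1+n_2(T)\gamma_2$, and integrating the closed form $\sigma$ gives
\begin{equation}
n_1(T)\int_{\gamma_1}\sigma+n_2(T)\int_{\gamma_2}\sigma=O(1),
\end{equation}
where the $O(1)$ comes from the bounded connecting path and the arc integral being zero. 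Dividing by $n_2(T)$ and letting $T\to\infty$ yields $\iota_X\int_{\gamma_1}\sigma+\int_{\gamma_2}\sigma=0$, which is the claimed formula up to fixing the order of the periods.

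To pin down the sign and the correct numerator/denominator I would run the computation once in explicit AL coordinates. Writing $\mathcal{A}_C=\rho\,d\theta^1\wedge d\theta^2$, the invariance $L_X\mathcal{A}_C=0$ with $X$ constant forces $X^1\partial_1\rho+X^2\partial_2\rho=0$; a short Fubini argument then shows that $\int_0^{2\pi}\rho\,d\theta^1$ and $\int_0^{2\pi}\rho\,d\theta^2$ are each independent of the transverse coordinate and equal to the common value $\frac{1}{2\pi}\int_C\mathcal{A}_C$. Substituting $\sigma=\rho\,(X^1\,d\theta^2-X^2\,d\theta^1)$ into the two periods makes each period this common value times a single component of $X$, reproducing $\iota_X=X^1/X^2$ as minus the ratio of the two periods. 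I expect the real work to be precisely this orientation- and index-bookkeeping: keeping the sign in $i_X(d\theta^1\wedge d\theta^2)$, the orientation fixed by $\mathcal{A}_C=i_n\Omega$, the labelling of $\gamma_1,\gamma_2$, and the direction counted by $\iota_X$ mutually consistent, so that the periods land in the correct numerator and denominator. The only genuinely analytic point is the mild one above, namely that the bounded closing-arc contribution drops out in the limit, legitimising the asymptotic homology direction.
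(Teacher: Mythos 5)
Your proposal is correct in substance and follows essentially the same route as the paper's proof: $i_X\mathcal{A}_C$ is closed (by $L_X\mathcal{A}_C=0$, Cartan's formula and $d\mathcal{A}_C=0$ on a surface), it vanishes along trajectories of $X$, and integrating it over a long trajectory closed by a short arc kills the asymptotic homology direction of $X$. The explicit AL-coordinate computation you append is a convention check rather than a different argument.

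However, the point you defer to ``orientation- and index-bookkeeping'' deserves to be made precise, because it exposes a transposition in the paper itself rather than a defect in your argument. Your limit argument gives
\begin{equation}
\iota_X\int_{\gamma_1}i_X\mathcal{A}_C+\int_{\gamma_2}i_X\mathcal{A}_C=0,
\qquad\mbox{i.e.}\quad
\iota_X=-\frac{\int_{\gamma_2}i_X\mathcal{A}_C}{\int_{\gamma_1}i_X\mathcal{A}_C},
\end{equation}
which is the \emph{reciprocal} of (\ref{eq:iota}), and your coordinate check confirms that this is the version consistent with the paper's definition of $\iota_X$ (revolutions in $\theta^1$ over revolutions in $\theta^2$): writing $\mathcal{A}_C=\rho\,d\theta^1\wedge d\theta^2$ with $X=X^1\partial_{\theta^1}+X^2\partial_{\theta^2}$ constant, invariance of $\rho$ along the flow makes $\int_0^{2\pi}\rho\,d\theta^1=\int_0^{2\pi}\rho\,d\theta^2$ as you say, while $\int_{\gamma_1}i_X\mathcal{A}_C=-X^2\int_0^{2\pi}\rho\,d\theta^1$ and $\int_{\gamma_2}i_X\mathcal{A}_C=X^1\int_0^{2\pi}\rho\,d\theta^2$, so minus the ratio with $\gamma_2$ in the numerator equals $X^1/X^2=\iota_X$, whereas the right-hand side of (\ref{eq:iota}) as printed equals $X^2/X^1$. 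The root cause is visible in the paper's proof: a trajectory making $n_1$ turns in $\theta^1$ and $n_2$ turns in $\theta^2$ has homology class $n_1[\gamma_1]+n_2[\gamma_2]\approx N\left(\iota_X[\gamma_1]+[\gamma_2]\right)$, not $N\left([\gamma_1]+\iota_X[\gamma_2]\right)$ as asserted there. So either the definition of $\iota_X$ or the formula (\ref{eq:iota}) needs its indices swapped; your derivation is the internally consistent one, and your write-up should state the corrected formula explicitly rather than leave the order of the periods as unfinished bookkeeping.
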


\begin{proof}
As $L_X\mathcal{A}_C = 0$, we deduce that $i_X\mathcal{A}_C$ is closed so its integral round a closed loop $\gamma$ depends on only the homology class $[\gamma]$ of the loop.
Take a long piece of trajectory of $X$ on $C$ and close it by a short arc on $C$, making a closed loop $\gamma$.  It has homology class close to $N([\gamma_1] + \iota_X[\gamma_2])$ for some large integer $N$.  $i_X\mathcal{A}_C(\dot{\gamma})$ is zero except on the short arc.  Taking the limit we obtain
\begin{equation}
\int_{[\gamma_1]+\iota_X [\gamma_2]} i_X\mathcal{A}_C = 0.
\end{equation}
Hence the formula of the theorem.
\end{proof}

The same formula applies to the current density $J$ for an MHS field with $p$ constant on flux surfaces.

\section{The invariant tori of FGCM}
Let us compute the conserved quantity $K$ of FGCM resulting from quasi-symmetry.  Recall from Theorem~\ref{thm:Noether} that $K$ results from $i_U\omega=dK$.  Recall from (\ref{eq:omegaGC}) that $\omega = -e\pi^*\beta - m d(v_\pl \pi^*b^\flat)$ and from Definition~\ref{def:qs} that $U=(u,0)$.  So
\begin{equation}
i_U\omega = -ei_u\beta -m L_U(v_\pl \pi^*b^\flat) + m di_u(v_\pl b^\flat) = dK,
\end{equation}
with
\begin{equation}
K = -e\psi - m v_\pl u\cdot b,
\end{equation}
using $L_U(v_\pl \pi^* b^\flat)=v_\pl L_u b^\flat = 0$.
In particular, we see that $K$ is global iff $\psi$ is global.  

$K$ governs how far particles move from a flux surface.
Using conservation of $K$, we see that 
\begin{equation}
v_\pl = -\frac{e\psi+K}{m u\cdot b}.
\label{eq:vpl}
\end{equation}
Hence by conservation of $H$ in (\ref{eq:H7}), the tori for FGCM are given in projection to guiding-centre position by
\begin{equation}
\frac12 \frac{(e\psi+K)^2}{m(u\cdot b)^2} + \mu |B| = E,
\end{equation}
with parallel velocity recovered by (\ref{eq:vpl}). 
This can be written as
\begin{equation}
\psi = -\frac{K}{e} \pm \frac{u\cdot b}{e}\sqrt{2m(E-\mu|B|)}.
\end{equation}
We see the same division of motion into circulating and bouncing, as for ZGCM.  Note that by $L_u|B|=0$, the set of $X$ where $|B(X)|>E/\mu$ is a set of $u$-lines.

\section{Effect on the metric}

Next we examine the relation of a quasi-symmetry $u$ to the Riemannian metric $g$.  The conjecture of \cite{GB} is equivalent to $u$ being a Killing field for Euclidean metric $g$.  
\begin{defn}
A vector field $u$ is a {\em Killing field} for a Riemannian metric $g$ if $L_ug=0$.
\end{defn}
In Euclidean space $L_ug=0$ can be written as $\nabla u+(\nabla u)^T=0$.

We have not managed to prove or disprove that a quasi-symmetry is a Killing field yet, but the following theorem gets two-thirds of the way (by showing the subspace of possibilities for $L_ug$ at a point is constrained to a codimension-4 subspace of the 6D space of symmetric $3\times 3$ matrices).   It applies to an arbitrary Riemannian metric $g$.

\begin{thm}
\label{thm:Lug}
Let $u$ be a quasi-symmetry for magnetic field $B$, with flux function $\psi$.
Where $u,B$ are independent, then $d\psi \ne 0$ and $(B,u,n)$ is a basis, with $n=\nabla \psi / |\nabla \psi|^2$. 
With respect to this basis, $L_u g$ has matrix
\begin{equation}
\left[\begin{array}{ccc}
0 & 0 & 0 \\
0 & L_u|u|^2 & i_n L_u u^\flat \\
0 & u\cdot [n,u] & L_u |\nabla \psi|^{-2} 
\end{array} \right]
\label{eq:Lug}
\end{equation}
and the diagonal terms are related by
\begin{equation}
L_u |\nabla \psi|^{-2} = -\frac{|B|^2}{|\nabla \psi|^4} L_u |u|^2.
\label{eq:diag}
\end{equation}
\end{thm}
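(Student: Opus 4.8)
The plan is to evaluate the symmetric tensor $L_u g$ on every pair drawn from the basis $(B,u,n)$ by means of the Leibniz identity
\[
(L_u g)(X,Y)=L_u\big(g(X,Y)\big)-g([u,X],Y)-g(X,[u,Y]),
\]
and then to read off the relation between its diagonal entries from the incompressibility of $u$. First I would record the geometry of the basis. From $d\psi=i_ui_B\Omega$ and the fact that interior products anticommute (with $i_ui_u=0$) one gets $i_B\,d\psi=0$ and $i_u\,d\psi=0$, so that $B\cdot n=|\nabla\psi|^{-2}\,i_B d\psi=0$ and $u\cdot n=0$, while $i_n\,d\psi=1$ gives $n\cdot n=|\nabla\psi|^{-2}$. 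Hence the Gram matrix of $(B,u,n)$ is block diagonal, with upper block $\left(\begin{smallmatrix}|B|^2&B\cdot u\\ B\cdot u&|u|^2\end{smallmatrix}\right)$ of determinant $|B\times u|^2=|\nabla\psi|^2$, and lower entry $|\nabla\psi|^{-2}$.

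Next I would fill in the entries of $L_u g$. The entire first row and column vanish: $[u,B]=0$ together with $L_u|B|=0$ kills the $(B,B)$ entry, $L_u(u\cdot B)=0$ from Theorem~\ref{thm:add}(i) kills $(B,u)$, and for $(B,n)$ I would evaluate the identity $L_uB^\flat=0$ on $n$, which, since $B\cdot n=0$, forces $B\cdot[u,n]=0$. The $(u,u)$ entry is $L_u|u|^2$ at once. For $(u,n)$, since $u\cdot n=0$ the Leibniz identity collapses $(L_u g)(u,n)$ to $-u\cdot[u,n]=u\cdot[n,u]$; this equals $i_nL_uu^\flat$ as well, because $(L_uu^\flat)(n)=L_u(u\cdot n)-u\cdot[u,n]=-u\cdot[u,n]$, which explains the two equivalent expressions in \eqref{eq:Lug}. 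Finally, for $(n,n)$ the identity gives $L_u|\nabla\psi|^{-2}-2\,n\cdot[u,n]$, and I would show the correction vanishes. The key is $L_ud\psi=0$, which holds because $L_u\psi=i_ud\psi=0$; then $L_un^\flat=(L_u|\nabla\psi|^{-2})\,d\psi$, and evaluating on $n$ with $i_nd\psi=1$ yields $n\cdot[u,n]=0$. This establishes \eqref{eq:Lug}.

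For the diagonal relation \eqref{eq:diag} I would invoke $L_u\Omega=0$ in its metric form $\div u=0$, equivalently the vanishing of the metric trace $g^{ij}(L_u g)_{ij}$. Contracting \eqref{eq:Lug} against the inverse Gram matrix---whose upper block inverts using the determinant $|\nabla\psi|^2$ and whose $n$-direction contributes the factor $|\nabla\psi|^2$---only the two diagonal entries survive, giving $\tfrac{|B|^2}{|\nabla\psi|^2}L_u|u|^2+|\nabla\psi|^2\,L_u|\nabla\psi|^{-2}=0$, which rearranges to \eqref{eq:diag}. The main obstacle is not any single long computation but keeping two subtle points straight: (i) the identity $L_ud\psi=0$, which is exactly what makes the $(n,n)$ entry clean by forcing $n\cdot[u,n]=0$; and (ii) contracting correctly with the inverse metric in this non-orthonormal basis, where everything hinges on $\det\left(\begin{smallmatrix}|B|^2&B\cdot u\\ B\cdot u&|u|^2\end{smallmatrix}\right)=|B\times u|^2=|\nabla\psi|^2$.
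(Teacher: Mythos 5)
Your proof is correct, and most of it runs parallel to the paper's. The bilinear Leibniz identity $(L_ug)(X,Y)=L_u\bigl(g(X,Y)\bigr)-g([u,X],Y)-g(X,[u,Y])$ that you use is precisely the content of the paper's Lemma~\ref{lem:1} ($i_XL_ug=L_ui_Xg-i_{[u,X]}g$), and the individual entries are killed by the same facts in both arguments: $[u,B]=0$ and $L_uB^\flat=0$ for the first row and column, $u\cdot n=0$ for the $(u,n)$ entry, and $L_ud\psi=0$ forcing $n\cdot[u,n]=0$ for the $(n,n)$ entry. Where you genuinely diverge is the diagonal relation \eqref{eq:diag}: the paper's main proof applies $L_u$ to the Lagrange identity $|\nabla\psi|^2=|B|^2|u|^2-(u\cdot B)^2$ (a consequence of $B\times u=\nabla\psi$) and invokes $L_u|B|=0$ and $L_u(u\cdot B)=0$, whereas you invoke $\div u=0$ in the form $\tr(g^{-1}L_ug)=0$ and contract the matrix \eqref{eq:Lug} against the inverse Gram matrix of $(B,u,n)$ — a route the paper only sketches in a remark after its proof. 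Your contraction is done correctly: the block-diagonal structure of the Gram matrix (from $B\cdot n=u\cdot n=0$) is exactly what makes the off-diagonal entries drop out of the trace, and the determinant $|B|^2|u|^2-(B\cdot u)^2=|\nabla\psi|^2$ produces the right coefficients. The trade-off is worth noting: your route identifies \eqref{eq:diag} as nothing but the statement of incompressibility of $u$ expressed in this basis, so it carries no information beyond the matrix entries together with $\div u=0$; the paper's route is more elementary, requiring neither the identity $\div u=\tfrac12\tr(g^{-1}L_ug)$ nor any inverse-metric bookkeeping, and it displays where the remaining quasi-symmetry conditions ($L_u|B|=0$, $L_u(u\cdot B)=0$) enter the relation.
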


\begin{proof}
$d\psi \ne 0$ where $u,B$ are independent because $i_u i_B \Omega = d\psi$ and $\Omega$ is nondegenerate.
$\Omega(B,u,n) = i_n i_u i_B \Omega = i_n d\psi = 1 \ne 0$, so $(B,u,n)$ is a basis.

The calculation of $L_ug$ makes use of Lemma~\ref{lem:1} to follow, which says that the standard commutation relation $i_XL_u = L_ui_X - i_{[u,X]}$ applies not only to differential forms but also to any covariant 2-tensor, thus including the case of a metric tensor.
We apply this to the metric tensor $g$ for $X=B,u,n$ in turn.

For $X=B$ we obtain $i_BL_ug = 0$, so the first row of $L_ug$ is zero, and also the first column by symmetry of $g$.

For $X=u$ we obtain $i_u L_u g = L_u i_u g = L_u u^\flat$.  The diagonal component is obtained by contracting this with $u$: $i_u L_u u^\flat = L_u i_u u^\flat = L_u |u|^2$.  

For $X=n$ we obtain 
\begin{equation}
i_n L_u g = L_u i_n g - i_{[u,n]}g.  
\label{eq:inLug}
\end{equation}
For the off-diagonal term we contract (\ref{eq:inLug}) with $u$.  Firstly, $i_u L_u i_n g = L_u i_ui_n g = L_u (u\cdot n)$ but $u\cdot n=0$ from $i_u d\psi = 0$.  Secondly, $i_u i_{[u,n]}g = u\cdot [u,n]$.  
For the diagonal term, we contract (\ref{eq:inLug}) with $n$.  Firstly,
\begin{equation}
i_n L_u i_n g = i_nL_un^\flat = i_n L_u (d\psi/|\nabla \psi|^2) = |\nabla \psi|^{-2} i_nL_u d\psi + (L_u |\nabla\psi|^{-2}) i_n d\psi = L_u |\nabla\psi|^{-2}, 
\end{equation}
using $L_u\psi=0$ and $i_n d\psi=1$.  Secondly,
\begin{equation}
i_ni_{[u,n]}g = i_{[u,n]}n^\flat = L_ui_n n^\flat - i_nL_un^\flat =  L_u|\nabla\psi|^{-2} - |\nabla \psi|^{-2}i_nL_ud\psi - (L_u |\nabla\psi|^{-2}) i_n d\psi = 0.
\label{eq:iniung}
\end{equation}

Finally, we prove the indicated relation between the diagonal terms.
Using $B\times u = \nabla \psi$,
\begin{equation}
|\nabla \psi|^2 = |B|^2 |u|^2 - (u\cdot B)^2.
\end{equation}
We proved $L_u(u\cdot B) = 0$ and $L_u|B|=0$, so applying $L_u$ to the above equation, we obtain $L_u |\nabla \psi|^2 = |B|^2 L_u|u|^2$.  Hence the result.
\end{proof}

Because $g$ is symmetric, $L_ug$ is symmetric but we give the two alternative expressions for the off-diagonal components in (\ref{eq:Lug}) and one can check they are equal ($u\cdot [n,u] = i_{[n,u]}u^\flat = i_nL_uu^\flat-L_ui_nu^\flat$ but $n\cdot u=0$).  

One could choose other normalisations of $\nabla\psi$, but an advantage of the chosen one is that $n\cdot [u,n]=0$, as proved in (\ref{eq:iniung}).

The relation (\ref{eq:diag}) can alternatively be obtained by using $\div u=0$ and the local expression $\Omega = \pm \sqrt{\det g}\,dx^1\wedge dx^2 \wedge dx^3$ with $\det g$ being the determinant of the matrix representing $g$ in coordinate system $(x^1,x^2,x^3)$ (i.e.~$g(\xi,\eta) = g_{ij} \xi^i \eta^j$).  A calculation shows that $\div u = \frac12 \tr(g^{-1}L_ug)$ and hence (\ref{eq:diag}).

The previous theorem highlights the importance of $L_u u^\flat$, not just for the off-diagonal term but also because $L_u |u|^2 = i_u L_u u^\flat$.  So

\begin{thm}
\label{thm:killingqs}
A quasi-symmetry $u$ is a Killing field iff $L_u u^\flat=0$.
\end{thm}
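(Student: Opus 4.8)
The plan is to read off both implications directly from the matrix~(\ref{eq:Lug}) for $L_ug$ already established in Theorem~\ref{thm:Lug}, together with the two scalar identities recorded around it, namely $L_u|u|^2 = i_uL_uu^\flat$ and $u\cdot[n,u]=i_nL_uu^\flat$ (the latter from the remark that $u\cdot[n,u]=i_nL_uu^\flat - L_ui_nu^\flat$ with $n\cdot u=0$). Essentially all the work has been done in Theorem~\ref{thm:Lug}; this statement is an assembly of its consequences.

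For the forward direction I would argue pointwise and without any basis. Suppose $u$ is Killing, so $L_ug=0$ as a covariant $2$-tensor. Contracting $u$ into one slot and using the computation $i_uL_ug = L_ui_ug = L_uu^\flat$ from the proof of Theorem~\ref{thm:Lug} (which rests on Lemma~\ref{lem:1}), I conclude $L_uu^\flat = i_uL_ug = 0$ at once.

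For the backward direction, suppose $L_uu^\flat=0$. On the set where $u$ and $B$ are independent, $(B,u,n)$ is a basis and $L_ug$ is represented by~(\ref{eq:Lug}), so it suffices to show every entry vanishes. The first row and column are already zero. The off-diagonal entry $i_nL_uu^\flat$ is zero by hypothesis, and hence so is its symmetric partner $u\cdot[n,u]=i_nL_uu^\flat$. The first diagonal entry is $L_u|u|^2 = i_uL_uu^\flat = 0$, and then the relation~(\ref{eq:diag}) forces $L_u|\nabla\psi|^{-2} = -\tfrac{|B|^2}{|\nabla\psi|^4}L_u|u|^2 = 0$. Since a symmetric bilinear form that vanishes on every pair of basis vectors is identically zero, $L_ug=0$ on this set.

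The only genuine point to check, and thus the mild obstacle, is global closure: the matrix representation is available only where $u,B$ are independent, whereas being a Killing field is a condition at every point of $Q$. I would resolve this by noting that the independence set is precisely $\{d\psi\ne0\}$, which is dense because $\psi$ is a flux function (so $d\psi\ne0$ a.e.), that $L_uu^\flat=0$ is assumed throughout, and that $L_ug$ is continuous; vanishing on a dense set then gives $L_ug\equiv0$ on all of $Q$. Everything beyond this is direct bookkeeping with Theorem~\ref{thm:Lug}.
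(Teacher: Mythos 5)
Your proof is correct and is essentially the paper's own argument: the theorem is presented there as an immediate corollary of Theorem~\ref{thm:Lug}, since every entry of the matrix~(\ref{eq:Lug}) is controlled by $L_uu^\flat$ (the off-diagonal entries are $i_nL_uu^\flat$, the diagonal satisfies $L_u|u|^2=i_uL_uu^\flat$, and (\ref{eq:diag}) then kills the remaining entry), while the forward direction is the contraction $L_uu^\flat=i_uL_ug$. Your closing density argument on the set where $u,B$ are dependent is a small extra precision that the paper leaves implicit, not a different route.
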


In vector calculus $L_uu^\flat=0$ can be written as $w=0$, where $w=v\times u+\nabla|u|^2$ with $v=\curl u$. The relation between the two is $w^\flat=L_uu^\flat$.

We conclude this section with the required lemma.

\begin{lem}
\label{lem:1}
For any covariant 2-tensor $g$, and vector fields $u,X$,
\begin{equation}
i_{[u,X]} g = L_u i_X g - i_X L_u g.
\end{equation}
\end{lem}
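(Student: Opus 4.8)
The plan is to prove the identity $i_{[u,X]}g = L_u i_X g - i_X L_u g$ by recognizing it as a statement about the compatibility of the Lie derivative with contraction, valid for an arbitrary covariant $2$-tensor $g$. The cleanest route is to verify the identity pointwise by evaluating both sides on an arbitrary vector field $Y$, thereby reducing everything to the well-known Leibniz rule for the Lie derivative acting on the contraction $g(X,Y)$, which is a scalar function. First I would recall the fundamental Leibniz property of the Lie derivative: for any tensor fields $S,T$, one has $L_u(S\otimes T) = (L_uS)\otimes T + S\otimes(L_uT)$, and contraction commutes with $L_u$. Applying this to the full double contraction $g(X,Y)$, which is simply a function, gives
\begin{equation}
L_u\big(g(X,Y)\big) = (L_u g)(X,Y) + g([u,X],Y) + g(X,[u,Y]),
\label{eq:leibniz}
\end{equation}
where I have used that $L_uX = [u,X]$ and $L_uY = [u,Y]$ for vector fields.

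Next I would rewrite \eqref{eq:leibniz} in the language of iterated contractions to isolate the claimed identity. Observe that $g(X,Y) = i_Y i_X g$ and $(L_u g)(X,Y) = i_Y i_X L_u g$, and likewise each term on the right is a contraction of $g$ (or $L_u g$) against the relevant vector fields. The left-hand side $L_u(i_Y i_X g)$ can be expanded by peeling off the contractions one at a time using the scalar Leibniz rule again, namely $L_u(i_Y \alpha) = i_Y L_u\alpha + i_{[u,Y]}\alpha$ for a $1$-form $\alpha = i_X g$. Rearranging the resulting terms so that the two pieces involving $[u,Y]$ cancel against each other leaves exactly
\begin{equation}
i_Y i_{[u,X]} g = i_Y\big(L_u i_X g - i_X L_u g\big).
\end{equation}
Since $Y$ is arbitrary, the two $1$-forms agree, which is the assertion of the lemma.

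The only genuine subtlety—and the point I would state carefully rather than grind through—is justifying that the commutation rule $L_u(i_Y\alpha) = i_Y L_u\alpha + i_{[u,Y]}\alpha$, standard for differential forms, applies here: in \eqref{eq:leibniz} I am contracting a covariant $2$-tensor $g$ that is not assumed antisymmetric, so I cannot simply cite Cartan's calculus for forms. The resolution is that the graded Leibniz identity $i_{[u,Y]} = L_u i_Y - i_Y L_u$ holds for the contraction operator acting on the full tensor algebra, not merely on $\Lambda^\bullet$, because it follows from the tensorial Leibniz rule for $L_u$ together with the defining property $L_u Y = [u,Y]$; symmetry or antisymmetry of the tensor being contracted plays no role. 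I would therefore present the argument as a direct consequence of the tensorial Leibniz rule and the commutativity of $L_u$ with contractions, emphasizing that this is precisely what makes the identity valid for a general metric tensor $g$ and not only for forms.
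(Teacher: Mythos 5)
Your proposal is correct and takes essentially the same route as the paper's own proof: both start from the tensorial Leibniz rule for $L_u$ applied to the scalar $g(X,Y)$, rewrite it in iterated-contraction notation, invoke the standard commutation relation $L_u i_Y - i_Y L_u = i_{[u,Y]}$ on the $1$-form $i_X g$ to cancel the $[u,Y]$ terms, and conclude by arbitrariness of $Y$. The subtlety you flag at the end is resolved in the paper more simply than you suggest---$i_X g$ is itself a genuine differential $1$-form regardless of any symmetry of $g$, so the ordinary Cartan-calculus relation applies directly and no extension to the full tensor algebra is needed.
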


\begin{proof}
For arbitrary vector fields $u$, $X$ and $Y$, and covariant 2-tensor $g$,
\begin{equation}
(L_ug)(X,Y) = L_u(g(X,Y)) - g(L_uX,Y) - g(X, L_uY).
\end{equation}
This says
\begin{equation}
i_Yi_XL_ug = L_ui_Yi_Xg - i_Yi_{[u,X]}g - i_{[u,Y]}i_Xg.
\end{equation}
Now $i_Xg$ is a differential form, so the usual commutation relation
\begin{equation}
L_ui_Y i_Xg = i_Y L_u i_Xg + i_{[u,Y]}i_X g
\end{equation}
can be employed for the first term on the right.  
It results that
\begin{equation}
i_Yi_XL_ug = i_Y L_u i_Xg - i_Y i_{[u,X]}g.
\end{equation}
This is true for all $Y$, hence the result.
\end{proof}

\section{Circle action}

In the case of axisymmetry, the trajectories of $u$ are all closed and have a common period (single points on the axis of symmetry and circles elsewhere, of period $2\pi$).  We say the flow of $u$ generates a circle action.  

\begin{defn}
A {\em circle action} on a manifold $M$ is a differentiable mapping $\Phi:\S^1 \times M \to M$, $(\theta,x) \mapsto \Phi_\theta(x)$ such that $\Phi_0(x) = x$ and $\Phi_{\theta+\theta'}(x) = \Phi_\theta(\Phi_{\theta'}(x))$ for all $\theta,\theta' \in \S^1$ and $x \in M$. 
\end{defn}
The orbit of a point $x$ under a circle action is either a point or diffeomorphic to a circle.

\cite{BQ} formulated quasi-symmetry in terms of a circle action preserving FGCM.  Given a circle action one can obtain a vector field $u = \partial_\theta \Phi_\theta |_{\theta=0}$.  It is a quasi-symmetry if $\Phi$ preserves FGCM.  In our treatment of quasi-symmetry here, we do not require the trajectories of a quasi-symmetry to be circles, but we prove now that under mild conditions any quasi-symmetry does generate a circle action locally.


\begin{thm}
If $u$ is a quasi-symmetry for $B$, $\psi$ is global, $S$ is a bounded regular component of a level set of $\psi$ (flux surface), and S contains a regular component $T$ of a joint level set of either
\begin{enumerate}
\item $(|B|,\psi)$ with $u\cdot B \ne 0$, or
\item $(u\cdot B,\psi)$,
\end{enumerate}
then $T$ is a circle and a closed $u$-line, and all $u$-lines on $S$ are closed and of the same period.  Furthermore, all nearby flux surfaces have the same properties and the same period.  The circles are non-contractible on the flux surfaces and all have the same rational winding ratio on this interval of flux surfaces.
\end{thm}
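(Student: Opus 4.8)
The plan is to extract everything concerning the single surface $S$ from the Arnol'd--Liouville picture, then propagate it to neighbouring surfaces, leaving the equality of periods as the delicate point. First I would treat $S$ itself. By the discussion preceding Theorem~\ref{thm:wr}, on the $2$-torus $S$ the fields $u$ and $B$ are independent and commuting, so by Theorem~\ref{thm:AL} there are coordinates on $S$ in which $u$ is a \emph{constant} vector field. Write $f=|B|$ in case (i) and $f=u\cdot B$ in case (ii). Since $L_u\psi=0$ and $L_uf=0$ (the latter being $L_u|B|=0$ from Theorem~\ref{thm:main} in case (i), and $L_u(u\cdot B)=0$ from Theorem~\ref{thm:add}(i) in case (ii)), the field $u$ is tangent to $S$ and to the level sets of $f$, hence tangent to $T$. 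As $T$ is a connected regular level-set component it is a compact $1$-manifold without boundary, i.e.\ a circle, and since $u\neq0$ on $S$ it is a single closed $u$-line. A constant vector field on a $2$-torus with one closed orbit has all orbits closed with a common period and rational winding ratio, and a closed orbit of a nowhere-zero constant field carries a nonzero class in $H_1(S)$, hence is non-contractible. This gives every claim about $S$.

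Next I would pass to neighbouring surfaces. Regularity of $T$ as a component of a joint level set of $(f,\psi)$ means $df\wedge d\psi\neq0$ on $T$, an open condition, so $(f,\psi)$ is a submersion on a neighbourhood $N$ of $T$ whose fibres foliate $N$ by circles; each such fibre is a single closed $u$-line because $u$ is tangent to it, nowhere zero, and the fibre component near $T$ is compact and connected. Shrinking $N$ so that the set $R\subset\R^2$ of $(f,\psi)$-values is a disc, we get $N\cong \S^1\times R$, and every nearby flux surface inherits the same description. Because the circles $T(r)$, $r\in R$, form a smooth family, their homology class is constant along it; identifying the first homology of neighbouring tori through the product structure of $N$, the winding ratio is the same rational number on the whole interval of flux surfaces, and non-contractibility persists.

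The remaining and only subtle point is that the period is the \emph{same} on all these surfaces, and this is the step I expect to be the main obstacle, since for a general vector field all of whose orbits are closed the period need not be locally constant. Here quasi-symmetry supplies an invariant $1$-form that does the job: set $\eta=b^\flat/(u\cdot b)$, valid where $u\cdot b\neq0$, which holds near $T$ in case (i) by hypothesis. Then $i_u\eta=1$, and $L_u\eta=0$ because $L_ub^\flat=0$ (Theorem~\ref{thm:main}) and $L_u(u\cdot b)=0$ (Theorem~\ref{thm:add}(i)); hence $\oint_{T(r)}\eta=\tau(r)$ is exactly the period. The key observation is that $\sigma:=d\eta$ satisfies $i_u\sigma=L_u\eta-d(i_u\eta)=0$ and $L_u\sigma=dL_u\eta=0$, so $\sigma$ is basic for the circle fibration $\pi:N\to R$, i.e.\ $\sigma=\pi^*\bar\sigma$ for a $2$-form $\bar\sigma$ on $R$. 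Joining two fibres $T(r_0),T(r_1)$ by the cylinder $\Sigma=\pi^{-1}(\text{path from }r_0\text{ to }r_1)$, Stokes gives $\tau(r_1)-\tau(r_0)=\int_\Sigma\sigma=\int_{\pi(\Sigma)}\bar\sigma=0$, the last integral vanishing because $\pi(\Sigma)$ is only $1$-dimensional. Hence $\tau$ is constant, which is the desired equality of periods.

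Finally I would cover case (ii) when $u\cdot B$ (and hence $u\cdot b$) vanishes on $T$, where the form above is singular. There I would instead take $\eta=B^\flat/(u\cdot B)$ on the open set where $u\cdot B\neq0$, using $L_uB^\flat=0$ (Theorem~\ref{thm:conseq}) and $L_u(u\cdot B)=0$; the locus $\{u\cdot B=0\}$ is a single $u$-line on each surface, so its complement in $N$ is connected and still joins distinct flux surfaces, and the same basic-form computation forces $\tau$ to be constant off this locus, whence constancy extends to $T$ by continuity of the period. The whole argument thus reduces to producing a $u$-invariant $1$-form $\eta$ with $i_u\eta=1$, which is where quasi-symmetry enters decisively; the rigidity of the period is the crux, and everything else is soft topology of the torus fibration.
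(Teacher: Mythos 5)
Your proposal is correct, and its treatment of the single surface $S$ and of the existence of closed $u$-lines on nearby surfaces coincides with the paper's (nowhere-vanishing of $u$ from $i_ui_B\Omega = d\psi \ne 0$, $T$ a circle as a compact regular level-set component, Theorem~\ref{thm:AL} to make $u$ constant, continuity plus rationality for the winding ratio). Where you genuinely diverge is the crux: equality of periods across flux surfaces. The paper handles the two cases by different mechanisms: in case (i) it sets $f=2\pi/\tau$, notes $R=u/f$ generates a circle action, and circle-averages $0=L_uB^\flat=(B\cdot R)\,df+fL_RB^\flat$ to get $df=0$ where $u\cdot B\ne0$; in case (ii) it uses $[u,J]=0$ and $i_ui_J\Omega=-d(u\cdot B)$ (Theorem~\ref{thm:add}) to produce a second, transverse family of invariant tori (level sets of $u\cdot B$) carrying the commuting pair $(u,J)$, and transports the period across flux surfaces along that foliation. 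You instead build the invariant form $\eta$ with $i_u\eta=1$, $L_u\eta=0$, identify the period with the holonomy integral $\oint\eta$ over each fibre, and kill the difference of periods by Stokes on a cylinder of fibres, using $i_u\,d\eta=0$ (in fact you do not need the basic-form machinery: the cylinder is everywhere tangent to $u$, so $d\eta$ restricts to zero on it). Note also that your two forms are literally the same, $b^\flat/(u\cdot b)=B^\flat/(u\cdot B)$, so your argument unifies the cases up to the singular locus. What your route buys: you never differentiate the period function (the paper tacitly assumes $f$ is differentiable when writing $df$), and you avoid $J$ and the Jacobi identity entirely. What the paper's case (ii) buys in exchange: it works directly on the locus $u\cdot B=0$, which you must excise; and there your claim that the complement of $\{u\cdot B=0\}$ in $N$ is connected is wrong --- it has two components, $u\cdot B>0$ and $u\cdot B<0$ --- though this is harmless, since either component meets every nearby flux surface and the period depends only on $\psi$ by the Arnol'd--Liouville argument on each surface. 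Two small points to tighten: justify $u\ne0$ on $S$ (it is $i_ui_B\Omega=d\psi\ne0$), and the comparison of winding ratios across surfaces needs an identification of the homology of the full tori, not merely the product structure of your tubular neighbourhood $N$ of the circle $T$; the paper's own one-line continuity argument is, however, at the same level of detail.
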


\begin{proof}
The vector field $u$ is nowhere zero on $S$ because $i_ui_B\Omega = d\psi \ne 0$.
A bounded regular component $T$ of a level set of two functions in 3D is a circle.  $L_u \psi = 0$.  In case (i), $L_u|B|= 0$ and $d|B|, d\psi$ are independent, so $T$ is a $u$-line.  In case (ii), $L_u(u\cdot B)=0$ and $d(u\cdot B), d\psi$ are independent, so $T$ is a $u$-line.  

Now $[u,B]=0$ so by Theorem~\ref{thm:AL}, $u$ is conjugate to a constant vector field on $S$. Because one $u$-line on $S$ is closed, it follows that all $u$-lines on $S$ are closed and have the same period.  

Independence of $d\psi$ and $d|B|$ (respectively $d(u\cdot B)$) on $T$ implies the same for all nearby components of level sets of $(\psi,|B|)$ (respectively $(\psi,u\cdot B)$).  So we obtain the same result for all nearby flux surfaces.

To prove the period of the $u$-lines is the same for nearby flux surfaces, we treat the two cases separately.

In case (i), let the function $f(x)=2\pi/\tau(x)$ where $\tau(x)$ is the period of the $u$-line through the given point $x$, and define vector field $R = u/f$.  Then $R$ generates a circle action $\Phi$.  Define the circle-average $\langle \omega \rangle$ of any differential form or vector field $\omega$ by
\begin{equation}
\langle \omega \rangle = \frac{1}{2\pi} \int_0^{2\pi} \Phi^*_\theta \omega \ d\theta.
\label{eq:circleavg}
\end{equation}
Note that the circle-average of $L_R \omega$ is zero for any $\omega$.  From $u = f R$ follows
\begin{equation}
0=L_uB^\flat = (B\cdot R)\ df + f L_R B^\flat .
\end{equation}
Take the circle-average of this equation to obtain
\begin{equation}
0=(B\cdot R)\ df,
\end{equation}
because $f$ and $B\cdot R$ are constant along $\Phi$-orbits.
So if $u\cdot B \ne 0$ then $df = 0$.  This applies on a neighbourhood of $S$, so $f$ is locally constant.

In case (ii), $L_uB^\flat=0$ implies that $i_u i_J\Omega = -d(u\cdot B)$ (equation (\ref{eq:iuiJ1})) and $[u,J]=0$ (Theorem~\ref{thm:add}(iii)).  So $u,J$ are commuting vector fields on each level set of $u\cdot B$.  Now $d(u\cdot B) \ne 0$, so the $u$-lines on the level set of $u\cdot B$ are closed and have the same period (using Theorem~\ref{thm:AL}).  By independence of $d(u\cdot B)$ and $d\psi$ this gives us a $u$-line on each nearby flux surface and it has the same period.  Thus they all have the same period.

The $u$-lines are non-contractible on the flux surfaces because of the conjugacy to a constant vector field.  The winding ratio of $u$ as a function of $\psi$ is continuous and rational so it is constant.
\end{proof}

Note that case (ii) can not occur for an MHS field with $p$ constant on flux surfaces because of Theorem~\ref{thm:u.B} to follow, but it might be useful in other situations.


The rational winding ratio $m:n$ of the $u$-lines is called the {\em type} of the quasi-symmetry.  With $\theta^1$ poloidal and $\theta^2$ toroidal, $m:n=0:1$ is called quasi-axisymmetric (QA), $1:0$ is called quasi-poloidal (QP), and anything else is called quasi-helical (QH).

Note that both quasi-symmetry and circle action allow the possibility of short fibres, for example, a region of $m:n$ QH may shrink onto a closed $u$-line (which will be also a $B$-line) around which the rest have winding ratio $m:n$ (a ``Seifert fibration").  If the $u$-period of the main $u$-lines is $2\pi$, then the period of the short fibre is $2\pi/n$.

Note also that the construction in case (ii) of tori with $u\cdot B$ constant supporting commuting vector fields $u,J$ applies for a general quasisymmetry $u$, without requiring the derivatives of $u\cdot B$ and $\psi$ to be independent.  The formula of Theorem~\ref{thm:wr} for winding ratios extends to those for $u$ and $J$ on these tori, with $n$ replaced by $\nabla (u\cdot B)/|\nabla (u\cdot B)|^2$.

\section{Relation to standard treatments}

We begin with

\begin{defn}
\label{def:Bqs}
A magnetic field $B$ is {\em quasi-symmetric} if it has a quasi-symmetry $u$ and the associated flux function $\psi$ is global and $d\psi \ne 0$ a.e.  
\end{defn}

Using Definitions \ref{def:integrable} and \ref{def:qs}, if $B$ is quasi-symmetric then FGCM is integrable.

The standard approach to quasi-symmetry, as exemplified by \cite{H}, assumes a magnetohydrostatic field $B$ from the start.  
\begin{defn}
A magnetic field $B$ is {\em magnetohydrostatic} (MHS) if $J \times B = \nabla p$ for some function $p$, where $J = \curl B$.  
\end{defn}
It then assumes a flux function $\psi$, i.e.~a function such that $B\cdot \nabla \psi = 0$ and $\nabla \psi \ne 0$ a.e.  Given the MHS assumption, this is not a great restriction, because $p$ satisfies $B\cdot \nabla p=0$; the only catch is that $dp$ might not be non-zero a.e.~(in particular in the surrounding vacuum region, but also because except for special cases like axisymmetric fields $dp$ has to be zero at all rational flux surfaces \cite{Bo1}).  Then the level sets of $\psi$ are assumed to be bounded in the region of interest and hence 2-tori.  Actually, \cite{H} scales $\psi$ to be the toroidal flux bounded by the level set of $\psi$, but that is not essential.  The standard approach also assumes that $p$ is constant on flux surfaces, which is automatic if the field has density of irrational flux surfaces but might otherwise be a restriction.  Then it is proved that there are ``Boozer'' coordinates  \cite{Bo1} which in particular make the magnetic field lines straight.  Guiding-centre motion is formulated in the Boozer angles as a Lagrangian system and seen to have an ignorable linear combination of the angles if the field strength is constant along a family of straight lines on each flux surface (not in general the same straight lines as the fieldlines), and so guiding-centre motion is integrable.  The field is said to be quasi-symmetric if this is the case.

An alternative approach is due to Hamada \cite{Ha} but requires $dp \ne 0$ a.e.  It constructs a different coordinate system on flux surfaces, but with similar properties, and quasi-symmetry is identified as the result of an ignorable coordinate again.  \cite{H} identifies a whole class of coordinate systems that will do as well.

Here we explain how our approach connects to these.  In our definition and analysis of quasi-symmetry, we have not required the field to be MHS, but if it is MHS and has quasi-symmetry $u$ in our sense with a global flux function $\psi$ and if $p$ is constant on flux surfaces it turns out that $u\cdot B$ is also. The latter in this case will be denoted by $C$. We showed that quasi-symmetry implies that $[u,B]=0$ and $[u, B/|B|^2]=0$.  We claim that the resulting AL coordinates on flux surfaces augmented by $\psi$ give Hamada coordinates in the first case and Boozer coordinates in the second.  Because of $L_u|B|=0$ it follows that $|B|$ is constant along the $u$-lines, which are straight in either case.  So under the assumptions of MHS with $p$ constant on flux surfaces, quasi-symmetry in our sense implies quasi-symmetric in the standard sense.



We will now prove these statements.  Afterwards we will address the converse question.

\begin{thm}
\label{thm:u.B}
If $u$ is a quasi-symmetry for an MHS field $B$ and $p$ is constant on flux surfaces then $u\cdot B$ is constant on flux surfaces.  
\end{thm}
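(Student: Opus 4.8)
The plan is to show that the scalar $u\cdot B$ has vanishing derivative in two independent tangent directions to each flux surface, namely along $u$ and along $B$, and then to conclude that it is constant on the surface. The $u$-direction is already free of charge: Theorem~\ref{thm:add}(i) gives $L_u(u\cdot B)=0$, and this uses only quasi-symmetry, not MHS. The entire content of the statement therefore lies in the $B$-direction, and this is precisely where the MHS hypothesis and constancy of $p$ on flux surfaces must enter.

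First I would rewrite the $B$-derivative using the quasi-symmetry identity (\ref{eq:LubigBflat}), namely $\nabla(u\cdot B)=u\times J$. Contracting with $B$ and cycling the scalar triple product,
\begin{equation}
L_B(u\cdot B) = B\cdot\nabla(u\cdot B) = B\cdot(u\times J) = u\cdot(J\times B).
\end{equation}
Now I would invoke MHS, $J\times B=\nabla p$, to obtain $L_B(u\cdot B)=u\cdot\nabla p = L_u p$. Since $u$ is tangent to flux surfaces (from $i_u d\psi = i_u i_u i_B\Omega = 0$, i.e.\ $L_u\psi=0$, the $u$-flow preserves each flux surface) and $p$ is constant on them, it follows that $L_u p=0$, and hence $L_B(u\cdot B)=0$.

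It then remains to assemble these two facts into constancy on the surface. On a bounded regular component of a level set of $\psi$ the fields $u$ and $B$ are independent everywhere, because $i_u i_B\Omega=d\psi\ne0$, so they span the tangent plane, while $\nabla\psi$ is orthogonal to both ($u\cdot\nabla\psi=B\cdot\nabla\psi=0$) and hence spans the normal line. Having established $\nabla(u\cdot B)\cdot u=0$ and $\nabla(u\cdot B)\cdot B=0$, I would conclude that $\nabla(u\cdot B)$ is parallel to $\nabla\psi$, so $u\cdot B$ is constant along the surface, i.e.\ it is a function of $\psi$ alone.

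The main obstacle is conceptual rather than computational: isolating exactly where the extra hypotheses are needed. Everything up to $L_B(u\cdot B)=u\cdot\nabla p$ is pure quasi-symmetry, so the crux is that MHS converts the $B$-derivative of $u\cdot B$ into $u\cdot\nabla p$, which vanishes precisely because $p$ is constant on flux surfaces and $u$ is tangent to them. A secondary point to handle carefully is the passage from ``two independent directional derivatives vanish'' to ``constant on the surface'': this relies on $u,B$ spanning the tangent plane \emph{everywhere} on the flux surface and on $\nabla\psi$ being normal, which is why one works on a regular component rather than on an arbitrary level set.
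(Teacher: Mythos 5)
Your proposal is correct, and its skeleton is the same as the paper's: show that the derivative of $u\cdot B$ vanishes along both $u$ and $B$, then use that $u,B$ are independent and tangent to each flux surface to conclude constancy. The step along $u$ is identical (both cite Theorem~\ref{thm:add}). Where you genuinely diverge is in how you kill the $B$-derivative. The paper writes $L_B(u\cdot B)=L_Bi_uB^\flat=i_uL_BB^\flat$, which requires $[u,B]=0$ from Theorem~\ref{thm:conseq}, then inserts MHS in the form $L_BB^\flat=dp+d|B|^2$ and disposes of the two terms separately: $i_udp=0$ by tangency of $u$ plus constancy of $p$, and $i_ud|B|^2=0$ by $L_u|B|=0$. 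You instead use the quasi-symmetry consequence $L_uB^\flat=0$ in its vector-calculus form $\nabla(u\cdot B)=u\times J$ (equation~(\ref{eq:LubigBflat})), and a single cyclic permutation of the triple product,
\begin{equation}
B\cdot\nabla(u\cdot B)=B\cdot(u\times J)=u\cdot(J\times B)=u\cdot\nabla p=0,
\end{equation}
with MHS entering only once and in its native form $J\times B=\nabla p$. Your route is slightly leaner: it never invokes $[u,B]=0$ or $L_u|B|=0$ for this step, and the $d|B|^2$ term never appears. What the paper's version buys in exchange is an explicit display of how each separate quasi-symmetry consequence ($[u,B]=0$, $L_u|B|=0$) neutralises each term of $L_BB^\flat$, which is in keeping with its general strategy of cataloguing consequences. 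Your identification of the crux — that MHS converts the $B$-derivative of $u\cdot B$ into a $p$-derivative along $u$, which dies because $p$ is constant on surfaces to which $u$ is tangent — matches the logical structure of the paper's argument exactly, and your care about working on regular bounded components (where $d\psi\ne0$ guarantees $u,B$ span the tangent plane everywhere) is the same care the paper takes implicitly.
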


\begin{proof}
We already have $L_u (u\cdot B)=0$ (Theorem~\ref{thm:add}).
Now $L_B (u\cdot B) = L_B i_u B^\flat = i_u L_B B^\flat$ because $[u,B]=0$.
Translated to differential forms, the MHS equation says $i_BdB^\flat = dp$.  So $L_B B^\flat = dp + d|B|^2$.
Thus $L_B (u\cdot B) = i_u dp + i_u d|B|^2$.  The first term is zero because $i_ud\psi=0$ and $p$ is constant on flux surfaces.  The second is zero from $L_u|B|=0$.  $u$ and $B$ are independent and tangent to flux surfaces.  Combining these two results, $u\cdot B$ is constant on flux surfaces.
\end{proof}

Thus, in the MHS case, $u\cdot B$ is a function\footnote{Technically, this is not quite correct, because if a level set of $\psi$ has more than one regular component, then $u\cdot B$ could take different values on the different components, but we use the formulation $u\cdot B=C(\psi)$ with this understanding. Similarly, we will write $p$ being constant on flux surfaces as $p=p(\psi)$.} 
$C(\psi)$ and so together with $B\times u=\nabla\psi$ we can describe the magnetic field in terms of $u$ and $\psi$.  We can do the same for $J$.  The results are stated in the following theorem.

\begin{thm}
\label{thm:qmBJ}
If $u$ is a quasi-symmetry for an MHS field $B$ with $p$ constant on flux surfaces then $B$ has the form
\begin{equation}
B=\frac{1}{|u|^2}(C(\psi)u + u\times \nabla \psi).
\label{eq:Binqs}
\end{equation}
Assume $C$ and $p$ are absolutely continuous functions of $\psi$. Then 
\begin{equation}
J = -p'(\psi) u-C'(\psi) B.  
\label{eq:Jinqs}
\end{equation}
\end{thm}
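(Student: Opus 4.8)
The plan is to verify the formula for $B$ first, then differentiate to obtain $J$. For the formula \eqref{eq:Binqs}, I would start from the fact that $(B,u,n)$ is a basis (Theorem \ref{thm:Lug}), so every vector field can be expanded in it, but here it is more economical to work directly in the plane spanned by $u$ and $\nabla\psi$. Since $B\cdot\nabla\psi = 0$ and $u\cdot\nabla\psi = 0$ (both $u$ and $B$ are tangent to flux surfaces), $B$ lies in the plane orthogonal to $\nabla\psi$, which is spanned by $u$ and $u\times\nabla\psi$ wherever $u,B$ are independent. So I would write $B = \alpha u + \gamma(u\times\nabla\psi)$ and determine the scalars $\alpha,\gamma$ by dotting with $u$ and with $u\times\nabla\psi$. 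Dotting with $u$ gives $u\cdot B = \alpha |u|^2$ (since $u\cdot(u\times\nabla\psi)=0$), hence $\alpha = (u\cdot B)/|u|^2 = C(\psi)/|u|^2$ using Theorem \ref{thm:u.B}. Dotting with $u\times\nabla\psi$ and using $B\times u = \nabla\psi$ (equivalently $B\cdot(u\times\nabla\psi) = \nabla\psi\cdot\nabla\psi = |\nabla\psi|^2$) together with $|u\times\nabla\psi|^2 = |u|^2|\nabla\psi|^2$ gives $\gamma = 1/|u|^2$. This yields \eqref{eq:Binqs}.

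For the current \eqref{eq:Jinqs}, the natural route is to compute $J = \curl B$ directly from \eqref{eq:Binqs}, but I expect the cleaner approach is to use the MHS equation and the structural identities already established. From $J\times B = \nabla p = p'(\psi)\nabla\psi = p'(\psi)(B\times u)$, I get $J\times B = -p'(\psi)(u\times B)$, so $(J + p'(\psi)u)\times B = 0$, meaning $J + p'(\psi)u$ is parallel to $B$. Write $J = -p'(\psi)u + \lambda B$ for some scalar $\lambda$; the whole task reduces to identifying $\lambda = -C'(\psi)$. To pin down $\lambda$, I would dot with $u$: since $u\cdot(u\times\nabla\psi)=0$ the formula \eqref{eq:Binqs} gives $u\cdot B = C(\psi)$, and $J\cdot u = -p'(\psi)|u|^2 + \lambda C(\psi)$. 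I then need a second, independent expression for $J\cdot u$. This is where the MHS structure and the identity $i_u i_J\Omega = -d(u\cdot B)$ from \eqref{eq:iuiJ1} (namely $L_uB^\flat = 0 \Leftrightarrow u\times J = \nabla(u\cdot B)$) enter: $\nabla(u\cdot B) = \nabla C(\psi) = C'(\psi)\nabla\psi = C'(\psi)(B\times u)$, so $u\times J = C'(\psi)(B\times u)$, giving $(J - C'(\psi)B)$ parallel to $u$ wait, more directly $u\times(J - C'(\psi)B) = 0$. Combined with $(J+p'(\psi)u)\parallel B$, and using that $u,B$ are independent, these two parallelism conditions force $J = -p'(\psi)u - C'(\psi)B$ exactly.

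The hard part will be handling the absolute-continuity hypothesis cleanly: the chain-rule steps $\nabla p = p'(\psi)\nabla\psi$ and $\nabla(u\cdot B) = C'(\psi)\nabla\psi$ require justifying that $p$ and $C$, as functions of $\psi$, are differentiable in the weak sense along flux surfaces and that $p',C'$ may be evaluated pointwise a.e. This is precisely why the theorem assumes $C,p$ absolutely continuous in $\psi$; I would invoke that hypothesis to make the derivatives well-defined a.e.\ and note that the two parallelism relations $u\times J = C'(\psi)\,\nabla\psi$ and $J\times B = p'(\psi)\,\nabla\psi$ together determine $J$ uniquely in the basis $(u,B)$ wherever $u,B$ are independent (a set of full measure), so the identity \eqref{eq:Jinqs} holds a.e.\ and hence everywhere by continuity. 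The remaining bookkeeping — verifying the scalar coefficients from the two cross-product relations — is routine linear algebra in the $u,B$ frame.
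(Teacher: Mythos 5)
Your proposal is correct in substance and follows essentially the same route as the paper: (\ref{eq:Binqs}) is obtained by elementary vector algebra from $B\times u=\nabla\psi$ and $u\cdot B=C(\psi)$ (Theorem~\ref{thm:u.B}), and (\ref{eq:Jinqs}) by locating $J$ in the $(u,B)$ frame using the MHS relation $J\times B=\nabla p$ together with the quasi-symmetry relation $u\times J=\nabla(u\cdot B)$ of (\ref{eq:LubigBflat}); the only cosmetic difference is that the paper first notes $J$ is tangent to flux surfaces and writes $J=\kappa u+\lambda B$, whereas your two cross-product conditions place $J$ in the span of $u,B$ automatically. One sign slip should be fixed: from $u\times J=C'(\psi)\nabla\psi=C'(\psi)(B\times u)=-C'(\psi)(u\times B)$ the correct parallelism is $u\times\left(J+C'(\psi)B\right)=0$, not $u\times\left(J-C'(\psi)B\right)=0$; as written, your intermediate claims would yield $J=-p'(\psi)u+C'(\psi)B$, and only after this correction do they force the stated (and correct) conclusion $J=-p'(\psi)u-C'(\psi)B$.
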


\begin{proof}
Equation~(\ref{eq:Binqs}) comes straight from $B\times u=\nabla\psi$ if we cross with $u$ and use $u\cdot B = C(\psi)$. Absolute continuity of a function is enough for its derivative to exist a.e. and for the fundamental theorem of calculus to hold. 
The MHS condition implies that $J$ is tangent to flux surfaces, so it is a linear combination of $u$ and $B$, say $J=\kappa u+\lambda B$ for some functions $\kappa,\lambda$ of $\psi$. Then from $\nabla p=J\times B=\kappa u\times B=-\kappa\nabla\psi$ we find $\kappa=-p'$. Likewise from (\ref{eq:LubigBflat}) we have $\nabla C=u\times J=\lambda u\times B=-\lambda\nabla\psi$, and so $\lambda=-C'$.
\end{proof}

\begin{rem}\normalfont
For future reference, it is also useful to express the quasi-symmetry $u$ in terms of $B$ and $\psi$. So now, if we cross $B\times u=\nabla\psi$ with $B$, we have
\begin{equation}
u=\frac{1}{|B|^2}(C(\psi)B - B\times \nabla \psi),
\label{eq:uinms}
\end{equation}
using $u\cdot B=C(\psi)$. An alternative expression can be obtained by crossing $B\times u=\nabla\psi$ with $\nabla|B|$ and using $u\cdot\nabla|B|=0$. In this way, the quasi-symmetry can be written as
\begin{equation}
\label{eq:unoms}
u = \frac{\nabla\psi\times\nabla|B|}{B\cdot\nabla|B|}.
\end{equation}
We emphasise that the two expressions are not equivalent, as the former assumes MHS fields in using $u\cdot B=C(\psi)$, while the latter does not, but uses $L_u|B|=0$ instead.
\end{rem}

\begin{defn}
Given a magnetic field $B$ with a flux function $\psi$, coordinates $(\theta^1,\theta^2,\psi)$, $\theta^j \in \S^1 = \R/2\pi\Z$, are  {\em magnetic coordinates} if the $B$-lines are straight in them.  
\end{defn}
Note that this remains true under any linear transformation on the $\theta^j$ in $SL(2,\Z)$, so it is conventional to take $\theta^1$ in the poloidal direction on a flux surface and $\theta^2$ in the toroidal direction, defined (up to orientation) by its embedding in $\R^3$.  For topologists, a poloidal loop is a ``meridian'' and a toroidal one is a ``longitude''.

\begin{defn}
A set of magnetic coordinates is called {\em Hamada coordinates} if the current density takes the form $J = \nabla I \times \nabla \theta^1 + \nabla G \times \nabla \theta^2$ for some functions $I,G$ of $\psi$ only (in differential forms, the current flux-form $j = i_J\Omega = dI\wedge d\theta^1+ dG \wedge d\theta^2$).
\end{defn}
\begin{defn}
A set of magnetic coordinates is called {\em Boozer coordinates} if the magnetic field takes the form $B = I \nabla \theta^1 + G\nabla \theta^2 + K\nabla \psi$ with $I,G$ functions of $\psi$ only (in differential forms, $B^\flat = I d\theta^1 + G d\theta^2 + K d\psi$).
\end{defn}
Note that in each case there is freedom to choose where to put the origin of $(\theta^1,\theta^2)$ on each flux surface.

\begin{thm}
\label{thm:Hamada}
If $u$ is a quasi-symmetry for an MHS field $B$ then AL coordinates for $[u,B]=0$ give Hamada coordinates and $|B|$ is constant along a system of straight lines in these coordinates.
\end{thm}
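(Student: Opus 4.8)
The plan is to build the coordinate system directly from the commuting pair $u,B$ and then read off the Hamada structure from the explicit form of $J$ supplied by Theorem~\ref{thm:qmBJ}. First I would invoke Theorem~\ref{thm:AL}: since $[u,B]=0$ and $u,B$ are independent and tangent to each flux surface, there are AL angles $(\theta^1,\theta^2)\bmod 2\pi$ in which both are constant. Patching these across a family of flux surfaces gives coordinates $(\theta^1,\theta^2,\psi)$ in which $u = u^j(\psi)\partial_{\theta^j}$ and $B = B^j(\psi)\partial_{\theta^j}$, with contravariant components functions of $\psi$ alone and no $\partial_\psi$ component (tangency to flux surfaces). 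Because $B$ is a constant vector field on each surface, its integral curves are straight, so these are already magnetic coordinates.

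Next I would pin down the volume form. Writing $\Omega = \sqrt{g}\,d\theta^1\wedge d\theta^2\wedge d\psi$, the normalisation $i_u i_B\Omega = d\psi$ forces $\sqrt{g}\,(B^1u^2-B^2u^1)=1$, so $\sqrt{g}$ is a function of $\psi$ only (equivalently this follows from $\div u=\div B=0$, which makes $\partial_{\theta^1}\sqrt{g}=\partial_{\theta^2}\sqrt{g}=0$). A short interior-product computation then presents $i_u\Omega$ and $i_B\Omega=\beta$ as $\psi$-coefficient combinations of $d\theta^2\wedge d\psi$ and $d\theta^1\wedge d\psi$.

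The Hamada property now falls out of Theorem~\ref{thm:qmBJ}, which (using that $p$ is constant on flux surfaces, Theorem~\ref{thm:u.B}) gives $J=-p'(\psi)\,u-C'(\psi)\,B$. Hence the current flux-form is $j = i_J\Omega = -p'\,i_u\Omega - C'\,i_B\Omega$, and every coefficient appearing ($\sqrt{g}$, $u^j$, $B^j$, $p'$, $C'$) is a function of $\psi$. Collecting terms, $j$ reduces to $I'(\psi)\,d\psi\wedge d\theta^1 + G'(\psi)\,d\psi\wedge d\theta^2 = dI\wedge d\theta^1 + dG\wedge d\theta^2$ with $I,G$ functions of $\psi$ obtained by integrating the $\psi$-dependent coefficients; this is exactly the Hamada condition. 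For the last assertion, $L_u|B|=0$ (Theorem~\ref{thm:main}) says $|B|$ is constant along $u$-lines, and since $u$ is constant in AL coordinates those $u$-lines form a system of straight lines on each flux surface, in general distinct from the field lines.

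The main obstacle I anticipate is not the algebra but the coordinate bookkeeping: one must check that the per-surface AL angles can be chosen to depend smoothly on $\psi$, so that $(\theta^1,\theta^2,\psi)$ is a genuine chart and the components $u^j,B^j$ and Jacobian $\sqrt{g}$ really are functions of $\psi$ alone. A secondary point is the standing assumption that $p=p(\psi)$, needed to invoke $J=-p'u-C'B$; without it one can still argue more intrinsically, since $L_u j = L_B j = 0$ together with $i_J d\psi=0$ force the coefficients of $j$ to be $\theta$-independent, but establishing $i_J d\psi=0$ again uses that $p$ is constant on flux surfaces.
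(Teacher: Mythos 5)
Your proposal is correct and follows essentially the same route as the paper's proof: AL coordinates from $[u,B]=0$ via Theorem~\ref{thm:AL}, the expression $J=-p'(\psi)u-C'(\psi)B$ from Theorem~\ref{thm:qmBJ}, and the observation that the Jacobian depends on $\psi$ alone to read off the Hamada form of $j=i_J\Omega$. The only cosmetic difference is that you derive the $\psi$-dependence of the Jacobian from the normalisation $i_ui_B\Omega=d\psi$, whereas the paper argues via $\div u=\div B=0$ --- a variant you yourself note as equivalent.
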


\begin{proof}
Quasi-symmetry implies $[u,B]=0$ (Theorem~\ref{thm:conseq}).  Construct AL coordinates $(\theta^1,\theta^2)$ for this commutation relation and augment to 3D by $\psi$.  Then both $u$ and $B$ are constant vector fields on each flux surface in these coordinates, i.e.~$u = u^1\partial_{1}+u^2\partial_{2}$ and $B = B^1\partial_{1}+B^2\partial_{2}$, with the coefficients being functions of $\psi$ only, where $\partial_i$ is short for $\partial_{\theta^i}$.  In particular, the $B$-lines are straight, so $(\theta^1,\theta^2)$ are magnetic coordinates.  

Under the MHS condition with $p$ constant on flux surfaces, $J$ is given by (\ref{eq:Jinqs}).
Thus the current 2-form $j = i_J\Omega = -p' i_u\Omega  - C' i_B\Omega$.
Now $\Omega$ can be written as $\Omega = \mathcal{J} d\psi \wedge d\theta^1 \wedge d\theta^2$ where $\mathcal{J}$ is the Jacobian of the coordinate system.  The Jacobian in these coordinates is a function of $\psi$ only, because $\div u = \div B = 0$ implies $\div \partial_i = 0$ for $i=1,2$, so $0=di_{\partial_1}\Omega = -d\mathcal{J} \wedge d\psi \wedge d\theta^2$ and $0=di_{\partial_2}\Omega = d\mathcal{J}\wedge d\psi \wedge d\theta^1$.  Thus $d\mathcal{J}\wedge d\psi = 0$ and hence $\mathcal{J}$ is a function of $\psi$.
Finally, 
\begin{equation}
j=\mathcal{J} (-p'(u^2d\psi \wedge d\theta^1 - u^1d\psi \wedge d\theta^2) - C'(B^2d\psi \wedge d\theta^1 - B^1d\psi \wedge d\theta^2)),
\end{equation}
so has the form $j_1 d\psi \wedge d\theta^1 + j_2 d\psi \wedge d\theta^2$ for some $j_i(\psi)$, which (together with being magnetic coordinates) is the defining condition of Hamada coordinates.  Because $u$ is constant in these coordinates on a flux surface, it follows from $L_u|B|=0$ that $|B|$ is constant along a system of straight lines.
\end{proof}

Note that $J$ is also constant in Hamada coordinates on each flux surface, because we proved in (\ref{eq:Jinqs}) it is a linear combination (by functions of $\psi$) of $u$ and $B$, which are constant on each flux surface.  Thus instead of making AL coordinates for $[u,B]=0$, we could equally well make AL coordinates for $[J,B]=0$ or for $[J,u]=0$, the first of which holds for any MHS field and the second of which holds for any quasi-symmetric field.  To prove that $[J,B]=0$ for any MHS field, note simply that 
\begin{equation}
i_{[J,B]}\Omega = i_J L_B\Omega - L_B i_J\Omega = -L_BdB^\flat = -dL_B B^\flat = 0 ,
\end{equation}
because $L_B B^\flat = d(p+|B|^2)$.  $[J,B]=0$ gives Hamada coordinates and $|B|$ is constant along a system of straight lines if the field is quasisymmetric.  

\begin{thm}
\label{thm:Boozer}
If $u$ is a quasi-symmetry for an MHS field $B$ then AL coordinates for $[u,B/|B|^2]=0$ give Boozer coordinates and $|B|$ is constant along a system of straight lines in these coordinates.
\end{thm}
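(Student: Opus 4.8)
The plan is to mirror the proof of Theorem~\ref{thm:Hamada}, but built on the commutation relation $[u,B/|B|^2]=0$ (Theorem~\ref{thm:add}(ii)) rather than $[u,B]=0$. First I would set $W=B/|B|^2$ and note that $u$ and $W$ are independent (since $W$ is parallel to $B$ and $u,B$ are independent where $d\psi\ne0$) and tangent to each flux surface. Applying Theorem~\ref{thm:AL} to the commuting pair $u,W$ gives AL coordinates $(\theta^1,\theta^2)$ in which both are constant vector fields; augmenting by $\psi$ produces a 3D coordinate system $(\theta^1,\theta^2,\psi)$. Because $W=W^1(\psi)\partial_1+W^2(\psi)\partial_2$ is constant on each flux surface and $B=|B|^2W$ is a positive multiple of it, the $B$-lines trace the same straight lines as those of $W$, so $(\theta^1,\theta^2,\psi)$ are magnetic coordinates.

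The heart of the argument is to show that the covariant components $B_1,B_2$ of $B^\flat=B_1\,d\theta^1+B_2\,d\theta^2+K\,d\psi$ are functions of $\psi$ only. I would extract this from two Lie-derivative identities. Quasi-symmetry gives $L_uB^\flat=0$ directly (Theorem~\ref{thm:conseq}). For the second, I would compute $L_WB^\flat=i_W\,dB^\flat+d\,i_WB^\flat$: here $i_WB^\flat=W\cdot B=1$ so the exact part vanishes, and the MHS relation $i_B\,dB^\flat=dp$ gives $i_W\,dB^\flat=|B|^{-2}dp$. Assuming $p$ is constant on flux surfaces (as in Theorem~\ref{thm:u.B} and the Hamada case), $dp=p'(\psi)\,d\psi$, so $L_WB^\flat=|B|^{-2}p'(\psi)\,d\psi$ is a pure multiple of $d\psi$. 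Consequently both $L_uB^\flat$ and $L_WB^\flat$ have vanishing $d\theta^1$ and $d\theta^2$ components.

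Since $u$ and $W$ are constant in these coordinates with no $\partial_\psi$ component, for $j\in\{1,2\}$ the $\theta^j$-component of the Lie derivative of a 1-form along them reduces to a directional derivative (the term $\alpha_i\partial_jY^i$ drops out because $Y^i$ depends on $\psi$ only). Thus $u^1\partial_1B_j+u^2\partial_2B_j=0$ and $W^1\partial_1B_j+W^2\partial_2B_j=0$ for $j=1,2$; independence of $(u^1,u^2)$ and $(W^1,W^2)$ then forces $\partial_1B_j=\partial_2B_j=0$, i.e.\ $B_1=I(\psi)$ and $B_2=G(\psi)$. This is exactly the Boozer form $B^\flat=I\,d\theta^1+G\,d\theta^2+K\,d\psi$. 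Finally, $L_u|B|=0$ with $u$ constant in the coordinates shows $|B|$ is constant along the $u$-lines, which are straight.

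I expect the main obstacle to be the bookkeeping in the second-to-last step: correctly reducing the $\theta$-components of $L_uB^\flat$ and $L_WB^\flat$ to directional derivatives of $B_1,B_2$ and confirming that the extra Lie-derivative terms vanish because $u,W$ have flux-surface-constant components and no normal component. The conceptual crux, however, is the observation that normalising $B$ to $W=B/|B|^2$ is exactly what makes $i_WB^\flat$ constant, so that $L_WB^\flat$ collapses onto $d\psi$; this is the precise analogue, for Boozer coordinates, of the role the current two-form played in the Hamada proof.
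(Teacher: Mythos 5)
Your proof is correct, and its central step is genuinely different from the paper's. Both arguments share the same skeleton: take AL coordinates for the commuting pair $u$, $B/|B|^2$ (Theorem~\ref{thm:add}), observe that the $B$-lines are straight because $B$ is parallel to $B/|B|^2$, write $B^\flat$ in the coordinate coframe, and finish with $L_u|B|=0$ along the straight $u$-lines. But where the paper pins down $I,G$ \emph{algebraically} -- it first invokes Theorem~\ref{thm:u.B} to get $i_uB^\flat=u\cdot B=C(\psi)$, pairs this with $i_{B/|B|^2}B^\flat=1$, and solves the resulting $2\times2$ linear system with $\psi$-dependent coefficients -- you pin them down \emph{differentially}: you use $L_uB^\flat=0$ (Theorem~\ref{thm:conseq}) together with the Cartan-formula computation $L_WB^\flat=i_W\,dB^\flat=|B|^{-2}p'(\psi)\,d\psi$ (valid since $i_WB^\flat\equiv1$ and $i_B\,dB^\flat=dp$), and then show in coordinates that these two identities annihilate the $\theta$-derivatives of $B_1,B_2$. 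Your route has the advantage of not needing Theorem~\ref{thm:u.B} as an input -- constancy of $u\cdot B$ on flux surfaces drops out afterwards from $u\cdot B=Iu^1+Gu^2$ -- and it mirrors the structure of the Hamada proof more closely; the component bookkeeping you flag as the delicate point is indeed fine, since $u$ and $W$ have no $\partial_\psi$ component and their components depend on $\psi$ only, so the terms $B_i\partial_jY^i$ vanish for $j=1,2$. The paper's route, by contrast, is pointwise linear algebra with no coordinate Lie-derivative computation, and it records along the way the additional geometric fact that the lines of $\nabla\psi\times B$ are straight. One small point in your favour: the hypothesis that $p$ is constant on flux surfaces, which the theorem statement leaves implicit but the paper's proof also uses, is stated explicitly in your argument.
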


\begin{proof}
Quasi-symmetry implies $[u,B/|B|^2]=0$ (Theorem~\ref{thm:add}).  Construct AL coordinates $(\theta^1, \theta^2)$ for this commutation relation and augment by $\psi$.  Then both $u$ and $B/|B|^2$ are constant vector fields on each flux surface in these coordinates.  In particular the $B$-lines are straight, even if $|B|$ might vary along them.  So these are magnetic coordinates again.  

Under the MHS condition with $p$ constant on flux surfaces, then we see the AL coordinates have the additional property that lines of $\nabla \psi \times B$ are straight too.  This is because $u\cdot B$ is constant on flux surfaces, so $\nabla \psi \times B /|B|^2 = u - C B/|B|^2$ (coming from (\ref{eq:uinms})) is a constant vector field on each flux surface.  This is a property of Boozer coordinates, but
the defining property of Boozer coordinates (beyond being magnetic coordinates) is that $B^\flat = I d\theta^1 + G d\theta^2 + K d\psi$ for some functions $I,G,K$ with $I,G$ functions of $\psi$ only.  So now we prove this holds for our AL coordinates.  Without the constraints on $I$ and $G$, $B^\flat$ has the above form because the coordinate 1-forms form a basis for the cotangent space.  As before, we proved $C=u\cdot B$ is constant on flux surfaces, so $C=i_uB^\flat = I u^1 + G u^2$.  Also $1= i_{B/|B|^2} B^\flat = I h^1 + G h^2$, where $h^j = B^j/|B|^2$.  The coefficients $(u^j,h^j)$ in these two equations are functions of $\psi$ only, so it follows that $I$ and $G$ are too.  Thus the AL coordinates for $[u,B/|B|^2]=0$ in an MHS quasi-symmetric field are Boozer coordinates.  Because $u$ is constant in these coordinates and $|B|$ is constant along $u$-lines, it follows that $|B|$ is constant along a system of straight lines on each flux surface.
\end{proof}

Again, we note that the same coordinates are obtained if we start from the commutation relation $[u,\nabla\psi \times B/|B|^2]=0$, which holds in quasisymmetry, or from $[B/|B|^2,\nabla\psi \times B/|B|^2]=0$, whose significance comes from the next result.

\begin{thm}
\label{thm:Boozercomm}
If $B$ is MHS with flux function $\psi$ and $p$ constant on flux surfaces, then $[B/|B|^2,\nabla\psi \times B/|B|^2]=0$.  
\end{thm}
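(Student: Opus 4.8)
The plan is to work entirely with differential forms and deduce the vanishing of the bracket from nondegeneracy of $\Omega$. Write $h=B/|B|^2$ and $w=\nabla\psi\times B/|B|^2$, and abbreviate $f=1/|B|^2$, so that $h=fB$ and $h^\flat=fB^\flat$. Both fields are tangent to flux surfaces, since $i_hd\psi=f\,B\cdot\nabla\psi=0$ and $i_wd\psi=0$. I would prove $[h,w]=0$ by showing $i_{[h,w]}\Omega=0$ and invoking nondegeneracy, via the commutation identity
\begin{equation}
i_{[h,w]}\Omega = L_h i_w\Omega - i_w L_h\Omega ,
\end{equation}
exactly of the type used in (\ref{eq:iuBOmega}). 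The whole proof then reduces to computing the two terms on the right.

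For the first term, I would use the standard identity $i_{A\times C}\Omega=A^\flat\wedge C^\flat$ to record $i_w\Omega=d\psi\wedge h^\flat$. Since $L_h\psi=i_hd\psi=0$ gives $L_hd\psi=0$, this collapses $L_hi_w\Omega$ to $d\psi\wedge L_hh^\flat$. The substantive computation is $L_hh^\flat$, obtained from Cartan's formula $L_hh^\flat=i_h\,dh^\flat+d\,i_hh^\flat$ together with $dB^\flat=i_J\Omega$ and $i_hh^\flat=f$. This yields a term proportional to $B^\flat$ and the term $i_hi_J\Omega=f(J\times B)^\flat$. This is exactly where the hypotheses enter decisively: MHS with $p$ constant on flux surfaces gives $J\times B=\nabla p=p'(\psi)\nabla\psi$, so $(J\times B)^\flat$ is proportional to $d\psi$ and is annihilated upon wedging with $d\psi$. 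Hence $L_hi_w\Omega$ reduces to a single scalar multiple of $d\psi\wedge B^\flat$.

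For the second term, $\div B=0$ (that is, $d\beta=0$) gives $L_h\Omega=d\,i_h\Omega=df\wedge\beta$, and then $i_wL_h\Omega=(i_wdf)\beta-df\wedge i_w\beta$. The needed ingredient $i_w\beta$ I would extract from $B\times w=\nabla\psi$ (which uses $B\cdot\nabla\psi=0$), giving $i_w\beta=d\psi$. Finally I would collect the two contributions and compare them by evaluating on the basis of tangent vectors $(B,\nabla\psi\times B,\nabla\psi)$: on each of the three coordinate pairs the surviving pieces cancel identically, so $i_{[h,w]}\Omega=0$ and therefore $[h,w]=0$.

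The main obstacle I anticipate is the bookkeeping inside $L_hh^\flat$: one must track the several $df$-terms carefully, since the $d\,i_hh^\flat$ piece partially cancels the $df\wedge B^\flat$ piece, and one must recognise that the only genuinely obstructing contribution—the one carrying $J$—is precisely the one neutralised by the MHS hypothesis through $J\times B\parallel\nabla\psi$. It is worth noting that $\div B=0$ and $B\cdot\nabla\psi=0$ are also used, and that the argument requires no assumption $p'\ne0$, so it holds uniformly across all flux surfaces, including rational ones.
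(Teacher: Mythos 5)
Your proposal is correct, and the computation it outlines does close: with your $f=|B|^{-2}$, one gets $L_hh^\flat=(i_h\,df)\,B^\flat+f^2(J\times B)^\flat$ (the two $df$ contributions cancel exactly as you anticipated), so under MHS with $p$ constant on flux surfaces $L_hi_w\Omega=(i_h\,df)\,d\psi\wedge B^\flat$, while $i_wL_h\Omega=(i_w\,df)\,\beta-df\wedge d\psi$; the difference vanishes on all three pairs from the basis $(B,w,\nabla\psi)$, so $[h,w]=0$ by nondegeneracy of $\Omega$. But your route is genuinely different from the paper's. You contract the commutator once against the volume form, $i_{[h,w]}\Omega=L_hi_w\Omega-i_wL_h\Omega$, bring in the current through $dB^\flat=i_J\Omega$, and let MHS act in the cross-product form $J\times B=\nabla p\parallel\nabla\psi$, which is wiped out on wedging with $d\psi$. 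The paper instead applies $i_{[h,k]}=L_hi_k-i_kL_h$ to the three 1-forms $d\psi$, $B^\flat$, $k^\flat$ (the flats of a basis, so the vanishing of all three contractions forces $[h,k]=0$); it never introduces $J$, using MHS in the Lie-derivative form $L_BB^\flat=d(p+|B|^2)$ together with the operator splitting $L_h=|B|^{-2}L_B+d|B|^{-2}\wedge i_B$, and its third contraction needs the sub-argument that $[B,k]$ is parallel to $B$. Your version is more elementary and reuses the template of the paper's earlier commutator proofs (cf.\ (\ref{eq:iuBOmega}) and Theorem~\ref{thm:add}(iii)), and it makes transparent that the only genuine obstruction is the current term and that no hypothesis $p'\ne0$ is needed; the paper's version buys shorter individual steps, no final basis-pair bookkeeping, and no appearance of $J$ at all, at the price of the less obvious splitting of $L_h$.
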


\begin{proof}
Write $h = B/|B|^2$ and $k=\nabla \psi \times h$. We apply
\begin{equation}
i_{[h,k]} = L_h i_k - i_k L_h
\end{equation}
to $d\psi,B^\flat,k^\flat$ in turn, using the operator
\begin{equation}
L_h = |B|^{-2} L_B + d|B|^{-2} \wedge i_B,
\end{equation}
which reduces to $|B|^{-2}L_B$ on $0$-forms.
In the first case we get immediately $i_{[h,k]}d\psi=0$, since $L_hd\psi=0$ and $i_k d\psi = 0$. 
Next, inserting the MHS condition in the form $L_B B^\flat = d(p+|B|^2)$, we have
\begin{equation}
i_{[h,k]}B^\flat=-i_kL_hB^\flat=-i_k(|B|^{-2}d|B|^2+|B|^2d|B|^{-2})=0,
\end{equation}
as well. For the last case, note first that
\begin{equation}
i_{[h,k]}k^\flat=L_h|k|^2-i_kL_hk^\flat=|B|^{-2} L_Bi_kk^\flat-|B|^{-2}i_kL_Bk^\flat=|B|^{-2}i_{[B,k]}k^\flat,
\end{equation}
since $i_Bk^\flat=0$.
But, using $i_k\Omega = d\psi \wedge h^\flat$, we also see that
\begin{equation}
i_Bi_{[B,k]}\Omega=i_B(L_Bi_k\Omega-i_kL_B\Omega)=i_B(d\psi\wedge L_Bh^\flat)=-d\psi\wedge L_B1=0,
\end{equation}
since $L_B\psi=0$, $L_B\Omega=0$ and $i_Bd\psi=0$. Therefore the vector field $[B,k]$ is parallel to $B$, and so $i_{[h,k]}k^\flat=0$ too.  Since $(\nabla\psi,B,k)$ form a basis for the tangent space, we deduce that $[h,k]=0$.
\end{proof}

Thus, starting with an MHS field $B$ with a flux function $\psi$ and $p$ constant on flux surfaces, one can construct Boozer coordinates as AL coordinates for the commutation relation of Theorem~\ref{thm:Boozercomm}.  If $B$ is in addition quasisymmetric then $|B|$ is constant along a system of straight lines, namely the $u$-lines, because $L_u|B|=0$ and $u$, given by (\ref{eq:uinms}), is a constant vector field on each flux surface.

The standard approaches to quasi-symmetry are based on a symmetry (ignorable coordinate) of the gyro-averaged Lagrangian in Boozer or Hamada coordinates.  The usual presentations are flawed, however, because the gyro-averaged Lagrangian
\begin{equation}
L(X,\dot{X})=\frac12 m (b\cdot \dot{X})^2 + eA\cdot \dot{X} -\mu|B(X)|
\end{equation}
does not define a Lagrangian system in the standard sense, neither on $(X,\dot{X}) \in T\R^3$ nor on $(\theta,\dot{\theta}) \in T(\S^1\times \S^1)$, because it is degenerate (the second derivative of $L$ with respect to the three components of $\dot{X}$ has rank 1).  The good way to deal with this \cite{L} is to write the variational principle in Arnol'd-Cartan form \cite{A} in extended state space $(X,v_\pl,t)$ as $\delta  \int_\gamma \alpha = 0$ over variations of compact support of paths $\gamma$ with 
\begin{equation}
\alpha = mv_\pl b^\flat + eA^\flat - (\frac12 mv_\pl^2 + \mu|B(X)|)\ dt .
\end{equation}
The vector field $V$ resulting from such a variational principle is given by the unique choice $(V,1)$ in the kernel of $d\alpha$ (which is one-dimensional because the extended state space is odd-dimensional and $B$ is assumed non-zero).
Then a continuous symmetry is defined to be a vector field $U$ such that $L_U \alpha = dZ$ for some function $Z$.  This is because flowing with such a vector field does not change the variational principle (actually, one could replace $dZ$ by any closed 1-form, but it is conventional to take it exact).  It follows, as in the standard Noether theorem, that $i_Ud\alpha = d(Z-i_U\alpha)$ and so
$i_{(V,1)}d(Z-i_U\alpha) = -i_U i_{(V,1)}d\alpha = 0$, so $K=Z-i_U\alpha$ is conserved by $(V,1)$.
If $U=(u,0)$ then 
\begin{equation}
i_U\alpha = e i_u A^\flat + mv_\pl i_u b^\flat = e u\cdot A + mv_\pl u\cdot b.
\end{equation}

Finally, we address the converse question:~given a quasi-symmetric system in the standard sense, identify the quasi-symmetry in our sense.

\begin{thm}
If magnetic field $B$ has a flux function $\psi$ and is MHS with $p$ constant on flux surfaces and density of irrational surfaces and $p'(\psi)\ne 0$ a.e., and $|B|$ is constant along a family of straight lines in Hamada coordinates, then it is quasisymmetric with $u=-(J+C'B)/p'$ for a function $C$ of $\psi$ such that $C=u\cdot B$.
\end{thm}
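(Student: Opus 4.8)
The plan is to reverse the chain that produced Theorems~\ref{thm:qmBJ} and~\ref{thm:Hamada}: exhibit $u$ explicitly, dispatch the cheap conditions of Theorem~\ref{thm:main} directly, and reduce quasi-symmetry to a single scalar identity per flux surface. I work in Hamada coordinates $(\theta^1,\theta^2,\psi)$, which exist under the stated hypotheses and in which $B$ and $J=\curl B$ are constant vector fields on each flux surface (functions of $\psi$), with $J\times B=p'\nabla\psi$ and $[J,B]=0$. Define $u$ by (\ref{eq:unoms}), $u=\nabla\psi\times\nabla|B|/(B\cdot\nabla|B|)$, on the open set where $B\cdot\nabla|B|\ne0$ (it extends smoothly across the degenerate locus via the formula $-(J+C'B)/p'$ found below). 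By construction $u\cdot\nabla|B|=0$ and $u\cdot\nabla\psi=0$, so $L_u|B|=0$ and $u$ is tangent to flux surfaces; and crossing $B$ into $u$ recovers $B\times u=\nabla\psi$, i.e.\ $i_ui_B\Omega=d\psi$, whence (using $d\beta=0$) $L_u\beta=d\,i_u\beta=d\,d\psi=0$. The straight-line hypothesis is exactly what makes the level sets of $|B|$ within a flux surface straight, so $\nabla\psi\times\nabla|B|$ has constant direction there; with the normalisation this makes $u=u^i(\psi)\partial_i$ a field constant on each flux surface, giving $\div u=0$, $[u,B]=0$ and $[u,J]=0$. By Theorem~\ref{thm:main}, only the third condition $L_ub^\flat=0$, equivalently (since $L_u|B|=0$) $L_uB^\flat=0$, remains.

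Now reduce this last condition to the claimed form. Where $p'\ne0$, $u$ and $B$ are independent and span the tangent plane, and $J\times B=p'\nabla\psi$ forces $J=-p'u+\delta\,B$ for a scalar $\delta$; since $u,B,J$ are all constant on the surface, $\delta=\delta(\psi)$. Moreover $L_B(u\cdot B)=i_uL_BB^\flat=i_u\,d(p+|B|^2)=p'(u\cdot\nabla\psi)+2|B|(u\cdot\nabla|B|)=0$, using $p=p(\psi)$ and $L_u|B|=0$; hence $u\cdot B$ is constant along $B$-lines, and by the density of irrational surfaces (where the $B$-lines are dense) together with continuity it is constant on every flux surface, so $u\cdot B=C(\psi)$. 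Then $u\times J=\delta(u\times B)=-\delta\nabla\psi$ while $\nabla(u\cdot B)=C'\nabla\psi$, so by (\ref{eq:iuiJ1})--(\ref{eq:LubigBflat}) one has $L_uB^\flat=(\delta+C')\,d\psi$. Thus the whole theorem collapses to the single identity $\delta=-C'$, equivalently the vector relation $J=-p'u-C'B$ (the inverse of (\ref{eq:Jinqs})), equivalently $J\cdot B=-p'C-C'|B|^2$ on each flux surface. Granting it, set $u=-(J+C'B)/p'$ with $C=u\cdot B$; then $L_uB^\flat=0$, hence $L_ub^\flat=0$, and Theorem~\ref{thm:main} makes $u$ a quasi-symmetry, so by Definition~\ref{def:Bqs} $B$ is quasi-symmetric.

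The main obstacle is this identity $\delta=-C'$. It cannot be reached by any Lie derivative along $u$, $B$ or $J$, because $L_uB^\flat$ points along $\nabla\psi$, which is annihilated by contraction with the surface-tangent fields; detecting $\delta+C'$ needs a transverse contraction. Indeed, with $n=\nabla\psi/|\nabla\psi|^2$ one finds $i_nL_uB^\flat=L_u(n\cdot B)-i_{[u,n]}B^\flat=-[u,n]\cdot B$ (since $n\cdot B=0$), so $\delta=-C'$ is exactly the statement $[u,n]\cdot B=0$. I expect to establish this by a Hamada-coordinate computation: matching components of $dB^\flat=i_J\Omega$ expresses $\partial_\psi B_i-\partial_iB_\psi$ through $J^i$ and the $\psi$-independent Jacobian, and differentiating $C=u^iB_i$ in $\psi$ reduces the required identity to $(u^i)'B_i+L_uB_\psi=0$, a metric relation into which the straight-line condition $u^i\partial_i|B|^2=0$ must be fed (with density of irrational surfaces again used to pass from $B$-line averages to flux-surface constants). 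This is precisely where the straight-line hypothesis does work beyond supplying $L_u|B|=0$, and it is the step I expect to be genuinely laborious rather than formal.
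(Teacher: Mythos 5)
Your first half runs parallel to the paper's proof and is sound: Hamada coordinates, a $u$ tangent to the straight lines so that $L_u|B|=0$ and $L_u\beta=0$, the surface-tangent decomposition $J=-p'u+\delta B$, the density-of-irrational-surfaces argument giving $u\cdot B=C(\psi)$, and the reduction of quasi-symmetry to $L_uB^\flat=(\delta+C')\,d\psi$. But the proposal is incomplete at exactly the crux ($\delta=-C'$ is deferred to a future ``laborious'' computation), and --- more seriously --- that computation cannot succeed, because for the vector field you fixed the identity is false in general. By defining $u$ through (\ref{eq:unoms}), i.e.\ by imposing $i_ui_B\Omega=d\psi$ for the \emph{given} flux function together with $L_u|B|=0$, you have pinned $u$ completely and discarded the one degree of freedom the theorem needs. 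Concretely: take an axisymmetric MHS field with symmetry $u_0=\partial_\phi$, adapted flux function $\psi_0$ (so $i_{u_0}i_B\Omega=d\psi_0$) and $C_0=u_0\cdot B\ne0$, but feed the theorem the flux function $\psi=h(\psi_0)$ with $h'>0$, $h''\ne0$; every hypothesis of the statement holds. Since $\nabla\psi=h'\nabla\psi_0$ and $u_0$ itself satisfies (\ref{eq:unoms}) with respect to $\psi_0$, your $u$ equals $h'(\psi_0)\,u_0$, and
\begin{equation*}
L_uB^\flat \;=\; h'\,L_{u_0}B^\flat + (u_0\cdot B)\,dh' \;=\; C_0\,h''\,d\psi_0 \;\ne\; 0,
\end{equation*}
i.e.\ $\delta+C'=h''C_0/h'\ne0$ and $[u,n]\cdot B\ne0$. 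Your $u$ is not a quasi-symmetry, so no Hamada-coordinate manipulation can establish the identity you are after.

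The idea you are missing is precisely the one the paper's proof exploits: the hypotheses determine $u$ only up to a multiplicative function of $\psi$ (the straight lines fix the direction $\kappa_0 J+\lambda_0 B$; ``their magnitudes are otherwise free''), and this freedom must be spent on the last quasi-symmetry condition, not on normalising $u$ against the given flux function. Writing $u=s(\psi)\,(\kappa_0J+\lambda_0B)$ and $C_0=(\kappa_0J+\lambda_0B)\cdot B$, one gets $L_uB^\flat=\bigl(s\lambda_0p'+(sC_0)'\bigr)\,d\psi$, and $L_uB^\flat=0$ becomes a linear first-order ODE for $s$, solvable wherever $C_0\ne0$; equivalently, as in the paper's Remark following the theorem, one scales $u$ so that its orbits are closed of period $2\pi$ and circle-averages $L_uB^\flat=(\lambda p'+C')\,d\psi$, which forces the coefficient to vanish. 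The flux function appearing in the final formula $u=-(J+C'B)/p'$ is then the one adapted to this $u$ (a rescaling of the one you were handed), which is exactly why your rigid normalisation $i_ui_B\Omega=d\psi$ is incompatible with it. Keep your reduction, but reinstate the scaling freedom and close the argument with the ODE or the averaging step; as a lesser point, note also that your definition of $u$ degenerates wherever $B\cdot\nabla|B|=0$ (e.g.\ on surfaces where $|B|$ is constant), whereas the form $u=s(\kappa_0J+\lambda_0B)$ is defined everywhere.
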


\begin{proof}
For an MHS field $[J,B]=0$.  If $dp \ne 0$ then $J,B$ are independent.  Take AL coordinates for this commutation relation.  By the discussion after Theorem~\ref{thm:Hamada}, they are Hamada coordinates.  If $|B|$ is constant along a family of straight lines then that implies $|B|$ is constant along 
\begin{equation}
u=\kappa J + \lambda B
\label{eq:ukappa}
\end{equation}
for some functions $\kappa, \lambda$ of $p$ or equivalently of $\psi$ if a different flux function has been chosen with the same level sets.  Their ratio is determined by the lines of constant $|B|$, but their magnitudes are otherwise free.

Now $i_ui_B\Omega = \kappa i_J i_B\Omega = -\kappa dp = -\kappa p'd\psi$.  Therefore $L_u\beta=0$.  Choose, in particular, $\kappa = -1/p'$ to obtain $i_ui_B\Omega = d\psi$.  


It remains to prove that $L_uB^\flat=0$. From (\ref{eq:ukappa}), $i_ui_J\Omega =\lambda i_Bi_J\Omega = \lambda dp = \lambda p'd\psi$, by the MHS equation. And $[u,B]=0$ since $[J,B]=0$.  From the MHS condition again $L_BB^\flat=d(p+|B|^2)$, and $[u,B]=0$, we have $L_B(u\cdot B) = i_uL_B B^\flat = i_u d(p + |B|^2) = 0$, because both $p$ and $|B|$ are constant along $u$. Then density of irrational surfaces implies, assuming continuity, that $u\cdot B$ is constant on flux surfaces. In other words, $u\cdot B$ is a function $C$ of $\psi$. Therefore
\begin{equation}
L_uB^\flat = i_udB^\flat + di_uB^\flat = i_ui_J\Omega + d(u\cdot B) = (\lambda p' + C')d\psi.
\label{eq:LuBfH} 
\end{equation}
Thus, choosing $\lambda = -C'/p'$, we satisfy the last condition for quasisymmetry. 
\end{proof} 

\begin{rem}\normalfont
In the previous proof, we can show that $L_uB^\flat=0$ using circle averaging instead, as follows. First we note that if $|B|$ is not constant on a flux surface then the $u$-lines are closed.  If exceptionally, $|B|$ is constant on a flux surface then the ratio is undetermined, but we can choose it to make the $u$-lines closed. The period $\tau$ of the $u$-lines is constant on flux surfaces but in general varies with $\psi$.  We are free to simultaneously scale $\kappa$ and $\lambda$ by any function of $\psi$, however.  Thus we can scale them to make $\tau=2\pi$.  With this choice we can now apply circle-averaging (\ref{eq:circleavg}) to (\ref{eq:LuBfH}). The average of the lefthand side is zero, being $L_u$ of something.  All of $\lambda, p$ and $C$ are constant along $u$.  Thus the average of the right hand side is just itself. Consequently $0 = (\lambda p' + C')d\psi$. It follows that $L_uB^\flat = 0$.
\end{rem}

\begin{thm}
If magnetic field $B$ has a flux function $\psi$ and is MHS with $p$ constant on flux surfaces and density of irrational surfaces, and $|B|$ is constant along a family of straight lines in Boozer coordinates, then it is quasisymmetric with $u = (CB + \nabla \psi \times B)/|B|^2$ for a function $C$ of $\psi$ such that $(C^2/2)'=-|u|^2p'-u\cdot J$. 
\end{thm}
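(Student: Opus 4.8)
The plan is to build the candidate quasi-symmetry directly out of $B$ and $\psi$ and then verify the three conditions of Theorem~\ref{thm:main}. Writing $h = B/|B|^2$ and $k = \nabla\psi\times B/|B|^2$, the asserted field is $u = Ch + k = (CB + \nabla\psi\times B)/|B|^2$ for a function $C$ of $\psi$ still to be pinned down. By Theorem~\ref{thm:Boozercomm} we have $[h,k]=0$, and by the discussion following Theorem~\ref{thm:Boozer} the Boozer coordinates are precisely the AL coordinates for this commutation, so $h$ and $k$ are constant vector fields on each flux surface. A direct computation (using $i_hi_B\Omega = 0$ and $i_ki_B\Omega = d\psi$, the latter from $i_k\Omega = d\psi\wedge h^\flat$) gives $i_ui_B\Omega = d\psi$, and hence $L_u\beta = 0$, for \emph{any} choice of $C(\psi)$; likewise $u\cdot B = C$ since $h\cdot B = 1$ and $k\cdot B = 0$. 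Thus two of the three requirements, $L_u\beta=0$ and the identification of the flux function, come for free.

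Next I would fix $C$ using the hypothesis. Since $|B|$ is constant along a family of straight lines on each flux surface and $h,k$ are constant vector fields there, the unique combination $u = Ch+k$ pointing along those lines has $C$ constant on each flux surface, i.e.\ $C=C(\psi)$, and satisfies $L_u|B|=0$. With this choice $[u,h]=0$ (because $[h,k]=0$ and $L_hC=0$), whence $[u,B] = (L_u|B|^2)\,h = 0$; feeding $L_u\beta=0$ and $[u,B]=0$ into (\ref{eq:uB}) then yields $\div u = 0$.

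The remaining task is $L_uB^\flat=0$. First I would decompose the current: from the MHS equation $J\times B = \nabla p = p'\,\nabla\psi = p'(B\times u)$ one gets $J = -p'u + \lambda B$ for some $\lambda$, and taking the divergence, $0 = \div J = -p'\,\div u + L_B\lambda = L_B\lambda$, so $\lambda$ is constant along $B$-lines; density of irrational surfaces together with continuity then forces $\lambda=\lambda(\psi)$. Consequently $[u,J]=0$, and since $J\times u = \lambda\nabla\psi$, equation (\ref{eq:iuiJ1}) gives $L_uB^\flat = i_ui_J\Omega + d(u\cdot B) = (\lambda + C')\,d\psi$. What is left is to show $\lambda + C' = 0$.

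This last step is the main obstacle, and I would attack it by circle-averaging, as in the remark after the Hamada converse. Because $L_u|B|=0$ with $|B|$ nonconstant on a flux surface, the $u$-lines are closed; after rescaling $u$ by a function of $\psi$ (equivalently, renormalising the flux function) so that their common period is $2\pi$, $u$ generates a circle action and $\langle L_uB^\flat\rangle = 0$, while the right-hand side $(\lambda+C')\,d\psi$ is assembled from functions of $\psi$ and so equals its own average; hence $\lambda + C' = 0$ and $L_uB^\flat = 0$. Flux surfaces on which $|B|$ is constant are handled by continuity and density. By Theorem~\ref{thm:main} (equivalently Theorem~\ref{thm:alt}(i)) $u$ is then a quasi-symmetry. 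Finally, the stated relation is read off by dotting the resulting $J = -p'u - C'B$ with $u$: using $u\cdot B = C$ one finds $(C^2/2)' = CC' = -C\lambda = -|u|^2p' - u\cdot J$. The delicate point throughout is the interplay between the normalisation of $u$ (flux-normalised versus period-$2\pi$) and the circle-average, which is exactly why $\lambda+C'=0$ demands the density of irrational surfaces rather than a purely local computation.
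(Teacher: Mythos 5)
Your proposal is correct and is essentially the paper's own argument: the same ansatz $u=Ch+k$ built on the commuting pair of Theorem~\ref{thm:Boozercomm}, the same observations $i_ui_B\Omega=d\psi$, $u\cdot B=C$, $L_u|B|=0$, the same key identity $L_uB^\flat=(\lambda+C')\,d\psi$ (your $\lambda$ is exactly the paper's $\kappa=(Cp'+J\cdot B)/|B|^2$, as dotting $J=-p'u+\lambda B$ with $B$ shows), and density of irrational surfaces to make that coefficient a function of $\psi$. Two small deviations are worth recording. First, you obtain $\lambda=\lambda(\psi)$ from $\div J=0$ applied to the decomposition $J=-p'u+\lambda B$, whereas the paper applies $L_B$ to $i_ui_J\Omega=\kappa\,d\psi$ using $[u,B]=[J,B]=0$ and $L_B\Omega=0$; these are equivalent computations. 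Second, your closing step is not the main proof's but the paper's own alternative Remark immediately following the theorem: circle-averaging $L_uB^\flat=(\lambda+C')\,d\psi$ after rescaling $u$ to period $2\pi$. The main proof instead closes by ``choosing'' $\kappa=-C'$, i.e.\ by exploiting the residual freedom to rescale $C$ and $\lambda$ simultaneously by a function of $\psi$.

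One caveat, which you partly flag and which afflicts the paper's text equally, so I do not count it as a gap relative to the paper: the averaging argument yields $\tilde\lambda+\tilde C'=0$ for the period-normalised field $\tilde u=\rho u$, equivalently for the relabelled flux function with $d\tilde\psi=\rho\,d\psi$, not for the $\lambda=1$ normalisation you fixed at the outset, because
\begin{equation}
L_{\rho u}B^\flat=\bigl[\rho\lambda+(\rho C)'\bigr]\,d\psi ,
\end{equation}
and $\rho\lambda+(\rho C)'=0$ does not imply $\lambda+C'=0$ unless $\rho$ is constant, i.e.\ unless the period of the flux-normalised $u$-lines is independent of $\psi$. (A concrete illustration: take an axisymmetric MHS field but declare the flux function to be $\chi=f(\psi)$ with $f''\neq 0$; then the field satisfying $B\times u=\nabla\chi$ along constant-$|B|$ lines is $-f'\partial_\phi$, which is not a quasi-symmetry.) So, strictly, the conclusion holds with $\psi$ replaced by a suitably rescaled flux function; this is exactly what is hidden in the paper's word ``choose'', and your final sentence about the interplay of normalisations is pointing at the right place, though the role of density of irrational surfaces there is only to make $\lambda$ a function of $\psi$, not to resolve the normalisation issue.
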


\begin{proof}
For an MHS field with a flux function $\psi$ we proved that $[B/|B|^2, \nabla \psi \times B/|B|^2]=0$ (Theorem~\ref{thm:Boozercomm}), and the corresponding AL coordinates are Boozer (discussion after Theorem~\ref{thm:Boozercomm}).  If $|B|$ is constant along a family of straight lines in these coordinates then there are functions $C, \lambda$ of $\psi$ only such that $|B|$ is constant along 
\begin{equation}
u = (CB + \lambda \nabla \psi \times B)/|B|^2. 
\label{eq:Clambda}
\end{equation}
The ratio of $C,\lambda$ is determined by the lines of constant $|B|$, but their magnitudes are otherwise free.  We see that $C=u\cdot B$, hence $u\cdot B$ is constant on flux surfaces. 

Now $B\times u = \lambda \nabla \psi$, so we obtain $i_ui_B\Omega = \lambda d\psi$ and $L_u \beta = 0$ accordingly. Let us choose $\lambda=1$ to obtain $i_ui_B\Omega=d\psi$.


It remains to prove that $L_uB^\flat=0$. The MHS equation can be written as $i_Bi_J\Omega=dp$. Thus, from the above expression for $u$, we deduce that
\begin{equation}
i_ui_J\Omega = \frac{1}{|B|^2}(Ci_Bi_J\Omega-i_J(d\psi\wedge B^\flat) = \frac{1}{|B|^2}(Cdp+(i_JB^\flat)d\psi) = \kappa d\psi
\label{eq:iuiJ}
\end{equation}
since $i_Jd\psi=0$, with 
\begin{equation}
\kappa = (Cp'+ J\cdot B)/|B|^2 = (|u|^2p'+ u\cdot J)/C. 
\label{eq:alpha}
\end{equation}
Therefore
\begin{equation}
\label{eq:LuBfB}
L_uB^\flat = i_ui_J\Omega+d(u\cdot B)=(\kappa+C')d\psi.
\end{equation}
Next note that $[u,B/|B|^2]=0$ because $u$ is given by (\ref{eq:Clambda}), $B\cdot\nabla C=0$ and 
$[B/|B|^2, \nabla \psi \times B/|B|^2]=0$. But $|B|$ is constant along $u$, so that implies $[u,B]=0$.  Apply then $L_B$ to (\ref{eq:iuiJ}). We have $[u,B]=0$, $[J,B]=0$ from the MHS condition, and $L_B\Omega=0$.  Hence $L_B\kappa=0$. Using density of irrational surfaces it follows, assuming continuity, that $\kappa$ is constant on flux surfaces, i.e., $\kappa=\kappa(\psi)$.
Choose then $\kappa=-C'$ to obtain $L_uB^\flat=0$.
Inserting this in (\ref{eq:alpha}) proves the result for $(C^2/2)'$.
\end{proof}

\begin{rem}\normalfont
Alternatively in the previous proof we can show that $L_uB^\flat=0$ using circle averaging, as follows. If $|B|$ is not constant on a flux surface then the $u$-lines are closed.  If it is constant then we can choose the ratio $C:\lambda$ to make the $u$-lines closed.  In either case, the period of the $u$-lines is constant on a flux surface.  We can scale $C,\lambda$ simultaneously by a function of $\psi$ to make the period $2\pi$. Apply circle averaging to (\ref{eq:LuBfB}) and use $\kappa, C$ constant along $u$ to obtain $0=(\kappa+C')d\psi$. Thus $L_uB^\flat=0$.
\end{rem}

Another common treatment of quasi-symmetry for an MHS field with flux function and $p$ constant on flux surfaces (e.g. \cite{SH}), is based on the relation
\begin{equation}
\label{eq:He1}
B\times\nabla\psi = E\,\nabla\psi\times\nabla|B| + FB,
\end{equation}
where $E=-|B|^2/B\cdot\nabla|B|$ and $F=B \times \nabla \psi \cdot \nabla |B|/(B\cdot\nabla|B|)$. Quasi-symmetry in this approach is then formulated as $F$ being constant on flux surfaces,
\begin{equation}
\label{eq:He2}
F = F(\psi).
\end{equation}
This setup fits in our framework because equation (\ref{eq:He1}) is none other than $i_ui_B\Omega=d\psi$ and $L_u|B|=0$ combined together, and condition (\ref{eq:He2}) says that $u\cdot B$ is constant on flux surfaces in the MHS case. To see the first one, cross $B\times u=\nabla\psi$ with $B$ to obtain $B\times \nabla \psi = (u\cdot B)B - |B|^2 u$, and insert (\ref{eq:unoms}) for $u$. The second one follows from Theorem~\ref{thm:u.B}, since the function $F$ is precisely $u\cdot B$, i.e., $F=C$ for MHS fields.

Lastly we show that $L_ub^\flat=0$ implies $\int_c dl$ is constant when the curve $c$ is drawn from any continuous family of field line segments within a given flux surface and with fixed endpoint values of $|B|$. Let $\gamma:[s_0,s_1]\rightarrow \mathbb{R}^3$ be the restriction of a field line to an interval $[s_0,s_1]$ such that $|B|(\gamma(s_0)) = k_0$ and $|B|(\gamma(s_1)) = k_1$, where $k_0,k_1\in\mathbb{R}_+$. If $\phi_\lambda$ is the $u$-flow then $\gamma_\lambda = \phi_\lambda\circ \gamma$ is a field line segment contained in the same flux surface as $\gamma$ for each $\lambda$. In addition the integral $I_\lambda = \int_{\gamma_\lambda}dl$ is independent of $\lambda$ because
\begin{align}
\frac{d}{d\lambda}\int_{\phi_\lambda\circ c} dl = \frac{d}{d\lambda}\int_{\phi_\lambda\circ c} b^\flat = \frac{d}{d\lambda}\int_{ c} \phi_\lambda^* b^\flat = \int_c L_ub^\flat = 0.
\end{align}
Therefore the arc lengths of the field line segments $\gamma_\lambda$ are all the same. Moreover because $L_u|B| = 0$ the endpoint values of $|B|$ for $\gamma_\lambda$ are independent of $\lambda$, i.e. $|B|(\gamma_\lambda(s_0)) = k_0$ and $|B|(\gamma_\lambda(s_1)) = k_1$ for each $\lambda$. The desired result now follows upon noting that any continuous family of field line segments in a given flux surface with fixed endpoint values of $|B|$ can be generated by flowing some field line segment along $u$.

\section{Quasi-symmetric Grad-Shafranov equation}

In the axisymmetric case, magnetohydrostatics is reduced to a nonlinear elliptic partial differential equation called the Grad-Shafranov\footnote{but previous versions were published by L\"ust \& Schl\"uter, 1957 and by Chandrasekhar \& Prendergast, 1956, and in the fluids context the analogous equation was published by Hicks in 1899} 
(GS) equation \cite{GR, Sh}.  It takes as input two functions $p(\psi)$ and $C(\psi)$ and is an equation for $\psi(r,z)$ in cylindrical polar coordinates.  The GS equation has a nice variational principle \cite{BB}\footnote{but probably there are earlier references}, reasonable existence theory for solutions \cite{BB,AM} and well developed codes for its numerical solution, e.g.~Ch.4 of \cite{Ja}.

Here we generalise the GS equation to magnetohydrostatics with a general quasi-symmetry $u$.   

First we derive a pre-GS equation which does not assume magnetohydrostatics.  Furthermore, the only part of quasi-symmetry that it uses is $i_ui_B\Omega = d\psi$.

\begin{thm}
If $i_ui_B\Omega=d\psi$ then 
\begin{equation}
\Delta \psi - \frac{u\times v}{|u|^2}\cdot \nabla \psi + \frac{u\cdot v}{|u|^2}\,u\cdot B - u\cdot J = 0,
\label{eq:preGS}
\end{equation}
where $\Delta = \div\nabla$, $v=\curl u$ and $J= \curl B$.
\end{thm}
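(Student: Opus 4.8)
The plan is to recognise that the hypothesis $i_u i_B\Omega = d\psi$ is just $\nabla\psi = B\times u$ in vector calculus (as already noted in the Flux function section), and then to compute the Laplacian term directly. First I would take the divergence of $\nabla\psi = B\times u$ and apply the standard identity $\div(B\times u) = u\cdot\curl B - B\cdot\curl u$, giving $\Delta\psi = u\cdot J - B\cdot v$. This immediately accounts for the $\Delta\psi$ and $-u\cdot J$ terms of the claimed equation, so the whole statement reduces to showing that
\begin{equation}
B\cdot v = \frac{u\times v}{|u|^2}\cdot\nabla\psi - \frac{u\cdot v}{|u|^2}\,u\cdot B.
\end{equation}

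To establish this, I would solve the hypothesis for $B$. Crossing $\nabla\psi = B\times u$ with $u$ and using the vector triple product gives $u\times\nabla\psi = |u|^2 B - (u\cdot B)u$, hence $B = (u\times\nabla\psi + (u\cdot B)u)/|u|^2$. Dotting with $v$ yields $B\cdot v = \big((u\times\nabla\psi)\cdot v + (u\cdot B)(u\cdot v)\big)/|u|^2$. The only manipulation needed is the scalar triple product rearrangement $(u\times\nabla\psi)\cdot v = -(u\times v)\cdot\nabla\psi$, after which $B\cdot v$ takes exactly the required form and substituting back into $\Delta\psi = u\cdot J - B\cdot v$ produces (\ref{eq:preGS}).

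I do not expect any genuine obstacle here: once $\nabla\psi = B\times u$ is in hand, the pre-GS equation is an algebraic identity, and the real content of the result is that such an identity exists and isolates $\Delta\psi$ in a form ready for the MHS substitution $J = -p'u - C'B$ of Theorem~\ref{thm:qmBJ} later. The only points requiring care are bookkeeping of signs in the two triple-product identities and consistency with the orientation convention that makes $i_u i_B\Omega$ correspond to $B\times u$ rather than $u\times B$. As a check, one can verify the reduced identity instead by expanding $\frac{u\times v}{|u|^2}\cdot\nabla\psi$ through $(u\times v)\cdot(B\times u) = (u\cdot B)(u\cdot v) - |u|^2(v\cdot B)$ and confirming that every term cancels.
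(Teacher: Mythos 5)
Your argument is correct and is essentially the paper's own proof rendered in vector calculus: the paper obtains $\Delta\psi = u\cdot J - B\cdot v$ by applying $d$ to $B^\flat\wedge u^\flat = i_{\nabla\psi}\Omega$ (your divergence identity), then eliminates $B\cdot v$ by contracting that same identity with $v$ and $u$ (your triple-product step). The one slip is the sign in your displayed reduced identity: substituting $\Delta\psi = u\cdot J - B\cdot v$ into the claimed equation shows that what is actually needed is
\[
B\cdot v \;=\; -\,\frac{u\times v}{|u|^2}\cdot\nabla\psi \;+\; \frac{u\cdot v}{|u|^2}\,u\cdot B,
\]
i.e.\ the negative of what you wrote. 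Fortunately, the computation you then perform, via $B = \left(u\times\nabla\psi + (u\cdot B)\,u\right)/|u|^2$ and $(u\times\nabla\psi)\cdot v = -(u\times v)\cdot\nabla\psi$, produces exactly this correct version (as does your final Lagrange-identity check), so the derivation itself goes through and yields (\ref{eq:preGS}); only the transcription of the intermediate target needs fixing.
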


\begin{proof}
We have $B\times u = \nabla \psi$ and so
\begin{equation}
B^\flat \wedge u^\flat = i_{B\times u}\Omega = i_{\nabla\psi}\Omega.
\label{eq:Bwedgeu}
\end{equation}
Applying $d$ to the above equation, we obtain
\begin{equation}
\Delta \psi\ \Omega=d(B^\flat \wedge u^\flat) = dB^\flat \wedge u^\flat - B^\flat \wedge du^\flat = i_J\Omega \wedge u^\flat - B^\flat \wedge i_v\Omega
= (u\cdot J - B\cdot v)\,\Omega.
\end{equation}
Hence by non-degeneracy of $\Omega$,
\begin{equation}
\label{eq:preGS0}
\Delta\psi = u\cdot J - B\cdot v.
\end{equation}
For the last term of (\ref{eq:preGS0}), contract (\ref{eq:Bwedgeu}) with $v$ and $u$ to find
\begin{equation}
v\times u\cdot\nabla\psi = i_ui_v(B^\flat \wedge u^\flat) = i_u((B\cdot v)u^\flat-(u\cdot v)B^\flat) = (B\cdot v)|u|^2-(u\cdot v)(u\cdot B).
\end{equation}
Hence
\begin{equation}
B\cdot v = \left(v\times u \cdot \nabla \psi + (u\cdot v)(u\cdot B)\right)/|u|^2.
\end{equation}
Substituting this into (\ref{eq:preGS0}), we deduce equation (\ref{eq:preGS}).
\end{proof}

An immediate consequence of $i_ui_B\Omega = d\psi$ is also the additional equation $u\cdot \nabla \psi = 0$, which comes by contracting with $u$.

Next we derive two further conditions which follow from $\div B=0$ and $L_uB^\flat=0$, assuming in addition the properties $\div u=0$ and $L_u(u\cdot B)=0$ of a quasi-symmetry.

\begin{thm}
\label{thm:extra1}
If $i_ui_B\Omega=d\psi$, $\div u=0$ and $L_u(u\cdot B)=0$, then $\div B = 0$ iff $B\cdot w=0$ where $w^\flat=L_uu^\flat$.
\end{thm}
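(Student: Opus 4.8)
The plan is to show that $B\cdot w$ is simply $-|u|^2$ times $\div B$, so that the claimed equivalence is immediate wherever $u\ne 0$. First I would translate the hypothesis $i_ui_B\Omega=d\psi$ into vector calculus as $B\times u=\nabla\psi$, and recall the identity already used in the excerpt,
\begin{equation}
[u,B] = \curl(B\times u) + (\div B)\,u - (\div u)\,B.
\end{equation}
Since $\div u=0$ by hypothesis and $\curl(B\times u)=\curl\nabla\psi=0$, this collapses to the clean relation $[u,B]=(\div B)\,u$. In other words, the hypotheses force $[u,B]$ to be parallel to $u$, with coefficient exactly $\div B$.

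Next I would compute $B\cdot w = i_B L_u u^\flat$ (using $w^\flat=L_uu^\flat$) by applying the standard commutation rule $i_B L_u = L_u i_B - i_{[u,B]}$ to the $1$-form $u^\flat$:
\begin{equation}
i_B L_u u^\flat = L_u(i_B u^\flat) - i_{[u,B]}u^\flat = L_u(u\cdot B) - [u,B]\cdot u.
\end{equation}
The first term vanishes by the hypothesis $L_u(u\cdot B)=0$, and substituting $[u,B]=(\div B)\,u$ into the second gives $[u,B]\cdot u=(\div B)\,|u|^2$. Hence
\begin{equation}
B\cdot w = -(\div B)\,|u|^2.
\end{equation}

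Finally, since $i_ui_B\Omega=d\psi$ forces $u$ to be nonzero wherever $d\psi\ne0$, we have $|u|^2\ne0$ there, and the displayed identity yields $\div B=0$ iff $B\cdot w=0$, as claimed. I do not anticipate a serious obstacle: the only genuine insight is recognising that the two hypotheses collapse the bracket to $[u,B]=(\div B)\,u$, after which the commutation identity does all the work. The single point requiring a little care is the nonvanishing of $|u|^2$ needed to pass from the proportionality identity to the stated equivalence.
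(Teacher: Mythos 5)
Your proof is correct and is essentially the paper's own argument rendered in vector calculus: both derive the single identity $B\cdot w=-|u|^2\,\div B$ from the three hypotheses, using $\curl(B\times u)=\curl\nabla\psi=0$ (the paper's $di_u\beta=dd\psi=0$), $\div u=0$ (the paper's $L_u\Omega=0$), and the commutation rule $i_{[u,B]}u^\flat=L_ui_Bu^\flat-i_BL_uu^\flat$ together with $L_u(u\cdot B)=0$. The paper keeps the computation at the level of forms wedged against $u^\flat$ instead of extracting the intermediate relation $[u,B]=(\div B)\,u$ explicitly, but the content is the same, including the implicit caveat you correctly flag that passing from the identity to the stated equivalence requires $u\ne0$, which holds wherever $d\psi\ne0$.
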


\begin{proof}
Recalling $\beta=i_B\Omega$, we have
\begin{align}
|u|^2d\beta &= (i_ud\beta)\wedge u^\flat = (L_u\beta-di_u\beta)\wedge u^\flat = (i_{[u,B]}\Omega + i_BL_u\Omega)\wedge u^\flat = (i_{[u,B]}u^\flat)\,\Omega\\
& = (L_ui_Bu^\flat - i_BL_uu^\flat)\,\Omega = - (B\cdot w)\,\Omega,
\end{align}
because $di_u\beta$, $L_u\Omega$ and $L_ui_Bu^\flat$ are zero. $d\beta = (\div B)\,\Omega$, hence the result.
\end{proof}

\begin{thm}
\label{thm:extra2}
Under the assumptions of the previous theorem, $L_uB^\flat=0$ iff $B\times w = [u,\nabla\psi]$. 
\end{thm}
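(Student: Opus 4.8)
The plan is to derive a single exact identity relating $L_uB^\flat$, $[u,\nabla\psi]$ and $B\times w$, from which both implications drop out. The natural starting point is equation~(\ref{eq:Bwedgeu}), $B^\flat\wedge u^\flat = i_{\nabla\psi}\Omega$, which already repackages the hypothesis $i_ui_B\Omega=d\psi$. I would apply $L_u$ to both sides. On the left, Leibniz gives $(L_uB^\flat)\wedge u^\flat + B^\flat\wedge L_uu^\flat$; since $L_uu^\flat=w^\flat$ and $B^\flat\wedge w^\flat = i_{B\times w}\Omega$ (the same $X^\flat\wedge Y^\flat=i_{X\times Y}\Omega$ rule used in (\ref{eq:Bwedgeu})), this is $(L_uB^\flat)\wedge u^\flat + i_{B\times w}\Omega$. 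On the right, the commutation $L_u i_{\nabla\psi}=i_{\nabla\psi}L_u+i_{[u,\nabla\psi]}$ together with $L_u\Omega=0$ (which holds because $\div u=0$) collapses everything to $i_{[u,\nabla\psi]}\Omega$. Rearranging yields the key identity
\[
(L_uB^\flat)\wedge u^\flat = i_{[u,\nabla\psi]-B\times w}\,\Omega.
\]

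With this identity in hand the forward direction is immediate: if $L_uB^\flat=0$ the left-hand side vanishes, so $i_{[u,\nabla\psi]-B\times w}\Omega=0$, and non-degeneracy of $\Omega$ forces $[u,\nabla\psi]=B\times w$. No extra hypotheses are needed here beyond what produced the identity.

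The hard part will be the converse. Assuming $B\times w=[u,\nabla\psi]$ makes the right-hand side vanish, giving only $(L_uB^\flat)\wedge u^\flat=0$; this tells us merely that $L_uB^\flat$ is \emph{proportional} to $u^\flat$, say $L_uB^\flat=\phi\,u^\flat$, rather than zero outright. The remaining and essential step is to eliminate the coefficient $\phi$, and this is the one place where non-degeneracy of $\Omega$ alone does not suffice. I would contract with $u$: on one side this gives $i_uL_uB^\flat=\phi\,|u|^2$, while on the other the commutation relation gives $i_uL_uB^\flat = L_u(i_uB^\flat)-i_{[u,u]}B^\flat = L_u(u\cdot B)=0$ by the standing assumption $L_u(u\cdot B)=0$. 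Since $i_ui_B\Omega=d\psi$ forces $u\ne 0$ wherever $d\psi\ne0$, we conclude $\phi=0$ and hence $L_uB^\flat=0$ on that open set, extending by continuity. Thus all three hypotheses of the previous theorem are genuinely used: $i_ui_B\Omega=d\psi$ to launch the identity and to guarantee $u\ne0$, $\div u=0$ to supply $L_u\Omega=0$, and $L_u(u\cdot B)=0$ precisely to kill the spurious $u^\flat$-component in the converse.
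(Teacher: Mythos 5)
Your proposal is correct and follows essentially the same route as the paper: both apply $L_u$ to $B^\flat\wedge u^\flat = i_{\nabla\psi}\Omega$ to get the identity $(L_uB^\flat)\wedge u^\flat = i_{[u,\nabla\psi]-B\times w}\Omega$, and both use $i_uL_uB^\flat = L_u(u\cdot B)=0$ to eliminate the possible component of $L_uB^\flat$ along $u^\flat$ in the converse (the paper packages this as $i_u(L_uB^\flat\wedge u^\flat)=-|u|^2L_uB^\flat$, you as contracting $L_uB^\flat=\phi\,u^\flat$ with $u$). Your explicit remark that $u\ne 0$ wherever $d\psi\ne 0$ is a small point of added care over the paper's implicit assumption.
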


\begin{proof}
Note first that since $i_uL_uB^\flat=L_ui_uB^\flat=0$, we have $i_u(L_uB^\flat\wedge u^\flat)=-|u|^2L_uB^\flat$. Thus, $L_uB^\flat$ vanishes iff $L_uB^\flat\wedge u^\flat$ does. Applying then $L_u$ to (\ref{eq:Bwedgeu}), we obtain
\begin{equation}
(L_uB^\flat)\wedge u^\flat = L_ui_{\nabla\psi}\Omega - B^\flat\wedge L_uu^\flat = i_{[u,\nabla\psi]}\Omega - i_{B\times w}\Omega,
\end{equation}
since $L_u\Omega=0$. $\Omega$ is non-degenerate, hence the result.
\end{proof}

Thus the pre-GS equation requires extra conditions to guarantee that $\div B=0$ and $L_u B^\flat=0$. Both of them are automatic in the case of isometries, that is, if $u$ is a Killing field. To see this, write $w^\flat=L_uu^\flat=i_uL_ug$ and $[u,\nabla\psi]^\flat=L_u(d\psi)-i_{\nabla\psi}L_ug$, recalling Lemma \ref{lem:1}, and take into account the first condition, $L_u\psi=0$. In the general case, however, they appear to be non-trivial additional conditions.

The condition of Theorem \ref{thm:extra1} is also automatic if $w=0$. For a quasi-symmetry, the latter was precisely the condition for $u$ to be a Killing field (Theorem \ref{thm:killingqs}). The next result shows that this also true if instead $u$ satisfies the first and third supplementary conditions.

\begin{thm}
\label{thm:uKf}
If $\div u=0$, $L_u\psi=0$ and $B\times w = [u,\nabla\psi]$, then $w=0$ iff $L_ug=0$.
\end{thm}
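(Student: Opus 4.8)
The plan is to handle the two implications separately, the forward one being essentially free and the converse carrying the content. For $L_ug=0\Rightarrow w=0$, I would contract $g$ with $u$ using Lemma~\ref{lem:1}: $i_uL_ug = L_ui_ug - i_{[u,u]}g = L_uu^\flat = w^\flat$, so $L_ug=0$ immediately gives $w^\flat=0$, hence $w=0$. Notice this direction uses none of the three hypotheses.

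For the converse, assume $w=0$. The first payoff is that the hypothesis $B\times w=[u,\nabla\psi]$ degenerates to $[u,\nabla\psi]=0$, so $B$ drops out of the problem entirely. I would then show that $L_ug$ --- a symmetric covariant $2$-tensor --- is annihilated by contraction with each of $u$ and $\nabla\psi$. For $u$ this is the computation above, $i_uL_ug=w^\flat=0$. For $\nabla\psi$, Lemma~\ref{lem:1} with $X=\nabla\psi$ gives $i_{\nabla\psi}L_ug = L_ui_{\nabla\psi}g - i_{[u,\nabla\psi]}g = L_u\,d\psi$, since $i_{\nabla\psi}g=d\psi$ and $[u,\nabla\psi]=0$; and $L_u\,d\psi = d(L_u\psi)=0$ by the hypothesis $L_u\psi=0$. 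Thus both $u$ and $\nabla\psi$ lie in the radical of $L_ug$.

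To finish I would pass to linear algebra. Because $g(u,\nabla\psi)=i_u\,d\psi=L_u\psi=0$ and both vectors are nonzero on the region of interest (where $d\psi\neq0$), $u$ and $\nabla\psi$ are linearly independent, so their span is a $2$-plane contained in the radical of the symmetric form $L_ug$. Hence $L_ug$ has rank at most one and can be written $L_ug=\lambda\,\zeta\otimes\zeta$ for a scalar $\lambda$ and a covector $\zeta$ vanishing on $u$ and $\nabla\psi$. The remaining hypothesis $\div u=0$ now enters through the identity $\div u=\frac12\tr(g^{-1}L_ug)$ recorded after Theorem~\ref{thm:Lug}: it yields $0=\frac12\lambda\,g^{-1}(\zeta,\zeta)$, and positive-definiteness of $g$ (hence of $g^{-1}$) forces $\lambda=0$ whenever $\zeta\neq0$. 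Therefore $L_ug=0$, completing the converse.

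The routine parts are the two Lie-derivative contractions and the rank-one normal form; the step I expect to be load-bearing is the final one, where the vanishing trace is converted into vanishing of the whole tensor. This conversion is exactly where the Riemannian signature is used, and it would fail for an indefinite metric. I would also be careful to state that the converse is local to the set where $\nabla\psi\neq0$ (and $u\neq0$): at points where $\nabla\psi$ degenerates only $u$ is guaranteed in the radical, and the single trace relation no longer suffices.
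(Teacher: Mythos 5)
Your proof is correct, and it coincides with the paper's own argument for most of its length: the forward direction via $i_uL_ug=L_ui_ug=L_uu^\flat=w^\flat$ is identical, and so is the key step in the converse, namely that once $w=0$ collapses the hypothesis to $[u,\nabla\psi]=0$, Lemma~\ref{lem:1} shows both $u$ and $\nabla\psi$ annihilate $L_ug$. Where you genuinely diverge is the finish. The paper completes the pair to the basis $(u,\nabla\psi,u\times\nabla\psi)$: it first proves $[u,u\times\nabla\psi]=0$ from $i_{[u,u\times\nabla\psi]}\Omega=L_u(u^\flat\wedge d\psi)=L_uu^\flat\wedge d\psi=0$ (this is where $\div u=0$ enters there, through $L_u\Omega=0$), and then computes the third contraction $i_{u\times\nabla\psi}L_ug=L_ui_{\nabla\psi}i_u\Omega=i_{\nabla\psi}i_uL_u\Omega=0$, so that $L_ug$ vanishes against a full basis. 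You instead stop after two contractions, observe that a symmetric $2$-tensor on a $3$-dimensional space whose radical contains a $2$-plane must have the form $\lambda\,\zeta\otimes\zeta$, and kill $\lambda$ with the trace identity $\div u=\frac12\tr(g^{-1}L_ug)$ (recorded in the paper after Theorem~\ref{thm:Lug}) together with positive-definiteness of $g^{-1}$. Both routes spend $\div u=0$ exactly once and in equivalent guises ($L_u\Omega=0$ versus the trace formula), but the trade-offs differ: the paper's route stays entirely within the Cartan calculus and makes no explicit appeal to signature in its last step, while yours avoids the commutator computation for $u\times\nabla\psi$ altogether and isolates exactly where the definite character of $g$ is load-bearing --- a worthwhile observation, since both arguments already use definiteness tacitly when deducing independence of the orthogonal pair $u,\nabla\psi$ from their nonvanishing. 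Your closing caveat about restricting to the set where $u\neq0$ and $d\psi\neq0$ applies equally to the paper's proof, which needs the same nondegeneracy to form its basis, so it is not a defect of your approach relative to theirs.
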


\begin{proof}
If $L_ug=0$, then straightforwardly $w^\flat=L_uu^\flat=i_uL_ug=0$.

Let $w=0$. Then $[u,\nabla\psi]=0$. Using $L_u\Omega=0$ and $L_ud\psi=dL_u\psi=0$,
\begin{equation}
i_{[u,u\times\nabla\psi]}\Omega = L_ui_{u\times\nabla\psi}\Omega-i_{u\times\nabla\psi}L_u\Omega = L_u(u^\flat\wedge d\psi) = L_uu^\flat\wedge d\psi = 0.
\end{equation}
Thus, $[u,u\times\nabla\psi]=0$ too, since $\Omega$ is non-degenerate.

Consider then the basis $(u,\nabla\psi,u\times\nabla\psi)$. $i_uL_ug=L_uu^\flat=0$. Furthermore, since $u$ commutes with $\nabla\psi$ and $u\times\nabla\psi$,
\begin{align}
i_{\nabla\psi}L_ug &= L_ui_{\nabla\psi}g = L_ud\psi = dL_u\psi = 0,\\
i_{u\times\nabla\psi}L_ug & = L_ui_{u\times\nabla\psi}g = L_u(u\times\nabla\psi)^\flat = L_ui_{\nabla\psi}i_u\Omega = i_{\nabla\psi}i_uL_u\Omega = 0,
\end{align}
using $L_u\psi=0$ and $L_u\Omega=0$ again. Hence $L_ug=0$.
\end{proof}

We now turn to combining quasi-symmetry with magnetohydrostatics, obtaining another of our main results.

\begin{thm}\label{qsGS}
If MHS field $B$ is quasi-symmetric with quasi-symmetry $u$, flux function $\psi$, and $p$ a function of $\psi$, then
\begin{equation}
\Delta\psi -\frac{u\times v}{|u|^2}\cdot \nabla \psi+ \frac{u\cdot v}{|u|^2}\,C(\psi) + CC'(\psi) + |u|^2 p' (\psi)= 0,
\label{eq:qsGS}
\end{equation}
where $v = \curl u$ and $C=u\cdot B$.
\end{thm}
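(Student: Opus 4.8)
The plan is to specialise the pre-GS equation \eqref{eq:preGS} to the magnetohydrostatic setting by rewriting its two field-dependent terms, $u\cdot B$ and $u\cdot J$, purely in terms of the flux function $\psi$. Since \eqref{eq:preGS} was derived under the single hypothesis $i_ui_B\Omega=d\psi$, and this is part of quasi-symmetry (Theorem~\ref{thm:main} gives $L_u\beta=0$, whence $i_ui_B\Omega=d\psi$ locally, and $\psi$ is global by assumption), the pre-GS equation is already available. The remaining work is a short algebraic substitution exploiting the extra structure that MHS together with quasi-symmetry provides.

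First I would invoke Theorem~\ref{thm:u.B}: because $B$ is MHS, $u$ is a quasi-symmetry, and $p=p(\psi)$, the quantity $u\cdot B$ is constant on flux surfaces, i.e.\ $u\cdot B=C(\psi)$. This turns the third term $\frac{u\cdot v}{|u|^2}\,u\cdot B$ of \eqref{eq:preGS} directly into $\frac{u\cdot v}{|u|^2}\,C(\psi)$, matching the corresponding term of \eqref{eq:qsGS}.

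Next I would use the expression \eqref{eq:Jinqs} for the current density, $J=-p'(\psi)\,u-C'(\psi)\,B$, established in Theorem~\ref{thm:qmBJ}. Contracting with $u$ and again using $u\cdot B=C$ gives $u\cdot J=-p'|u|^2-C'C$, so that the final term $-u\cdot J$ of \eqref{eq:preGS} becomes $|u|^2p'+CC'$, which are exactly the last two terms of \eqref{eq:qsGS}. Substituting both rewrites into \eqref{eq:preGS} then yields \eqref{eq:qsGS} verbatim.

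I expect no substantive obstacle: once Theorems~\ref{thm:u.B} and~\ref{thm:qmBJ} are in hand the derivation is essentially a one-line substitution. The only points worth checking are bookkeeping ones, namely that the hypotheses of those theorems are subsumed by the present statement. In particular, the appearance of $p'$ and $C'$ in \eqref{eq:qsGS} presupposes the absolute continuity of $p$ and $C$ as functions of $\psi$ required in Theorem~\ref{thm:qmBJ}; since the statement already writes $p$ as a function of $\psi$ and the result is phrased in terms of its derivative, this is the natural regularity assumption, and I would simply note that it guarantees \eqref{eq:Jinqs} holds a.e.
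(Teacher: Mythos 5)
Your proposal is correct and takes essentially the same route as the paper's own proof: cite Theorem~\ref{thm:u.B} to replace $u\cdot B$ by $C(\psi)$, contract (\ref{eq:Jinqs}) with $u$ to get $u\cdot J=-CC'-|u|^2p'$, and substitute both into the pre-GS equation (\ref{eq:preGS}). Your closing remark on the absolute continuity of $p$ and $C$ matches the hypothesis the paper places in Theorem~\ref{thm:qmBJ}, so nothing further is needed.
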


\begin{proof}
In the MHS case with $p$ constant on flux surfaces, $u\cdot B$ is a function $C(\psi)$ from Theorem \ref{thm:u.B}, and $u\cdot J=-CC'-|u|^2p'$ from (\ref{eq:Jinqs}). Put these into the pre-GS equation (\ref{eq:preGS}) to obtain (\ref{eq:qsGS}).  
\end{proof}

Equation (\ref{eq:qsGS}) is our quasi-symmetric Grad-Shafranov equation.  For given $u$, $C(\psi)$, and $p(\psi)$ it comprises a semilinear elliptic PDE for the dependent variable $\psi$. 
Solutions of (\ref{eq:qsGS}), however, do not necessarily give MHS fields.  There are several additional conditions that are required.

First of all, equation (\ref{eq:qsGS}) needs supplementing by the condition $L_u\psi=0$, equivalently
\begin{equation}
\label{eq:extra0}
u\cdot \nabla \psi = 0,
\end{equation}
which says that $u$ leaves invariant the solutions $\psi$ of (\ref{eq:qsGS}) and reduces it effectively to 2D. 
For the special case of axisymmetry, $u = r\hat{\phi}$, then $|u|=r$, $v = 2\hat{z}$, $u\cdot v = 0$ and $u\times v = 2r \hat{r}$, so the usual GS equation is recovered in cylindrical polar coordinates. Similarly, for the case of helical symmetry, $u = r\hat{\phi}+l\hat{z}$, where $l$ is a constant, then $|u| = \sqrt{r^2+l^2}$, $v=2\hat{z}$, $u\cdot v = 2l$ and $u\times v = 2r\hat{r}$, so the helical GS equation \cite{JOKF} is obtained.

Recalling Theorem \ref{thm:qmBJ}, the magnetic field $B$ can be obtained by formula (\ref{eq:Binqs}).  If $w=0$, as for axisymmetry and helical symmetry, no further conditions beyond (\ref{eq:qsGS}) and (\ref{eq:extra0}) are required for MHS fields.

If $w\neq0$, however, it is not automatic from (\ref{eq:Binqs}) that $\div B = 0$ nor that $L_uB^\flat=0$. Thus, in general, one must add the conditions of Theorems \ref{thm:extra1} and \ref{thm:extra2} to ensure them.  The first one reads
\begin{equation}
\label{eq:extra1}
(u\times w)\cdot\nabla\psi - (u\cdot w) C(\psi)=0,
\end{equation}
as we can see by replacing $B$ from (\ref{eq:Binqs}) into $B\cdot w=0$. For the non-isometry case, the second one can be reduced to
\begin{align}
\label{eq:extra2}
\left[v\times w - 2(w\cdot\nabla)u\right]\cdot\nabla\psi&=0, \\
\label{eq:extra3}
\left[(u\cdot v)w + 2((u\times w)\cdot\nabla)u\right]\cdot\nabla\psi + |w|^2C(\psi)&=0,
\end{align}
as the next result shows. It is worth noting that, owing to $[u,\nabla\psi]$, the original condition in this case involves second-order partial differential equations, but they can be reduced to first-order ones making use of (\ref{eq:extra0}), as described in (\ref{eq:redcom}) below. Thus, in the end, the second-order quasi-symmetric Grad-Shafranov equation is augmented by four first-order quasilinear partial differential equations given by (\ref{eq:extra0})-(\ref{eq:extra3}).
\begin{prop}
Let $B$ be of the form (\ref{eq:Binqs}). If $u$ is not locally a Killing field, then $B\times w = [u,\nabla\psi]$ can be reduced to (\ref{eq:extra2})-(\ref{eq:extra3}) under (\ref{eq:extra0})-(\ref{eq:extra1}).
\end{prop}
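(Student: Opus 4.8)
The plan is to convert the single vector identity $B\times w=[u,\nabla\psi]$ into scalar equations by two reductions: first removing the second derivatives of $\psi$ hidden in the bracket, and then contracting against a frame adapted to $u$ so that only two of the three components carry information. The equivalence with $L_uB^\flat=0$ from Theorem~\ref{thm:extra2} will be used only implicitly; I work entirely at the vector-calculus level.

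For the first reduction I would differentiate the constraint (\ref{eq:extra0}). Writing the bracket as $[u,\nabla\psi]=(u\cdot\nabla)\nabla\psi-(\nabla\psi\cdot\nabla)u$ and expanding $\nabla(u\cdot\nabla\psi)=0$ by the product rule, using $\curl\nabla\psi=0$ and $v=\curl u$, one trades the second-order term $(u\cdot\nabla)\nabla\psi=-(\nabla\psi\cdot\nabla)u-\nabla\psi\times v$ for first-order quantities, giving
\begin{equation}
[u,\nabla\psi]=v\times\nabla\psi-2(\nabla\psi\cdot\nabla)u,
\label{eq:redcom}
\end{equation}
which is first order in both $u$ and $\psi$. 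The second reduction rests on the observation that contracting the equation with $u$ is vacuous: using (\ref{eq:redcom}), the definition $w=v\times u+\nabla|u|^2$, and $u\cdot\nabla\psi=0$, a short computation gives $u\cdot(B\times w)=-w\cdot\nabla\psi=(u\times v)\cdot\nabla\psi-\nabla\psi\cdot\nabla|u|^2=u\cdot[u,\nabla\psi]$, so the $u$-component holds identically (this is the vector-calculus counterpart of $i_uL_uB^\flat=0$ in Theorem~\ref{thm:extra2}). Hence the residual $D=B\times w-[u,\nabla\psi]$ lies in the plane orthogonal to $u$, and it suffices to test $D$ against two transverse directions, for which I would take $w$ and $u\times w$.

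The two contractions then produce exactly (\ref{eq:extra2})--(\ref{eq:extra3}). Dotting with $w$ annihilates the left side, $w\cdot(B\times w)=0$, and the right side collapses, after (\ref{eq:redcom}) and the identity $P\cdot(Q\cdot\nabla)u-Q\cdot(P\cdot\nabla)u=-(P\times Q)\cdot v$ (valid for any $P,Q$ with $v=\curl u$), to $[v\times w-2(w\cdot\nabla)u]\cdot\nabla\psi$, which is (\ref{eq:extra2}). Dotting with $u\times w$ and invoking (\ref{eq:extra1}) in the form $B\cdot w=0$ gives $(u\times w)\cdot(B\times w)=(B\cdot u)|w|^2-(B\cdot w)(u\cdot w)=C|w|^2$; the right side, via the same identity together with $(u\times w)\times\nabla\psi=-u\,(w\cdot\nabla\psi)$ (using $u\cdot\nabla\psi=0$), reduces to $-(u\cdot v)(w\cdot\nabla\psi)-2\,((u\times w)\cdot\nabla)u\cdot\nabla\psi$, and rearranging yields (\ref{eq:extra3}).

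I expect the delicate point to be the converse, i.e. that (\ref{eq:extra2})--(\ref{eq:extra3}) together with the automatic $u$-component actually force $D=0$; this requires $\{u,w,u\times w\}$ to be a basis, equivalently $u\times w\neq0$, so that the projections onto $w$ and $u\times w$ span the plane orthogonal to $u$. The hypothesis that $u$ is not locally a Killing field supplies $w\neq0$ by Theorem~\ref{thm:killingqs}, but the locus where $w$ is parallel to $u$ must be handled separately: there (\ref{eq:extra1}) forces $C=0$ and (\ref{eq:extra3}) degenerates to a triviality, so one either excludes this (generically codimension $\geq 2$) set or treats it by hand. The remaining work — checking that the curl-transpose identity converts the $(\nabla\psi\cdot\nabla)u$ terms arising from (\ref{eq:redcom}) into the $(w\cdot\nabla)u$ and $((u\times w)\cdot\nabla)u$ terms of (\ref{eq:extra2})--(\ref{eq:extra3}) with the correct signs — is routine index bookkeeping.
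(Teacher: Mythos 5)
Your proof is correct and takes essentially the same route as the paper's: the identity $[u,\nabla\psi]=v\times\nabla\psi-2(\nabla\psi\cdot\nabla)u$ that you obtain by differentiating $u\cdot\nabla\psi=0$ is exactly the paper's reduction (\ref{eq:redcom})--(\ref{eq:com2con}) in uncontracted form, and your projections onto $u$ (automatically satisfied), onto $w$ (yielding (\ref{eq:extra2})), and onto $u\times w$ with $B\cdot w=0$ (yielding (\ref{eq:extra3})) coincide with the paper's treatment of $X=u,w,u\times w$. Your closing caveat is, if anything, slightly more careful than the paper: at points where $u\times w=0$ the paper infers $u\cdot w=0$ from (\ref{eq:extra1}), which tacitly assumes $C\neq0$, whereas you correctly note that (\ref{eq:extra1}) there only forces $C=0$, so that degenerate locus needs the separate handling you describe.
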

\begin{proof}
First of all for any vector field $X$ we have
\begin{align}
\label{eq:redcom}
\nonumber[u,\nabla\psi]\cdot X &= i_{[u,\nabla\psi]}X^\flat = (L_ui_{\nabla\psi}-i_{\nabla\psi}L_u)X^\flat = L_ui_Xd\psi - i_{\nabla\psi}L_uX^\flat\\
& = (i_{[u,X]} + i_XL_u)d\psi - i_{\nabla\psi}L_uX^\flat = i_{\nabla\psi}([u,X]^\flat - L_uX^\flat),
\end{align}
using $L_ud\psi=dL_u\psi=0$ from (\ref{eq:extra0}). Moreover, switching to vector calculus,
\begin{align}
\label{eq:com2con}
\nonumber[u,X]^\flat - L_uX^\flat&=[(u\cdot\nabla)X-(X\cdot\nabla)u]^\flat - i_udX^\flat - d(i_uX^\flat)\\
\nonumber&=[(u\cdot\nabla)X-(X\cdot\nabla)u + u\times\curl X - \nabla(u\cdot X)]^\flat\\
&=[v\times X-2(X\cdot\nabla)u]^\flat.
\end{align}

Now, if $u\times w=0$ then $u\cdot w=0$ from (\ref{eq:extra1}) and so $w=0$, hence $L_ug=0$ from Theorem \ref{thm:uKf}. Therefore, if $u$ is not locally a Killing field, $(u,w,u\times w)$ is a basis. Project then $B\times w = [u,\nabla\psi]$ to the directions $X=u,w,u\times w$, and use (\ref{eq:redcom})-(\ref{eq:com2con}).

For $X=u$, we see directly from (\ref{eq:redcom}) that the projection of $B\times w = [u,\nabla\psi]$ to $u$ is trivially satisfied,
\begin{equation}
0 = u\cdot(B\times w - [u,\nabla\psi]) = w\cdot(u\times B) + i_{\nabla\psi}L_uu^\flat = -\,w\cdot\nabla\psi + w\cdot\nabla\psi.
\end{equation}

For $X=w$, we obtain (\ref{eq:extra2}),
\begin{equation}
0 = w\cdot(B\times w - [u,\nabla\psi]) = -\,i_{\nabla\psi}[v\times w-2(w\cdot\nabla)u]^\flat.
\end{equation}

For $X=u\times w$, using (\ref{eq:extra0}) and (\ref{eq:extra1}) in its original form $B\cdot w=0$, we arrive at (\ref{eq:extra3}),
\begin{align}
\nonumber0 &= u\times w\cdot(B\times w - [u,\nabla\psi]) \\
\nonumber&= [(B\times w)\times u]\cdot w-\,i_{\nabla\psi}[v\times(u\times w)-2((u\times w)\cdot\nabla)u]^\flat\\
\nonumber&= [Cw - (w\cdot u)B]\cdot w-\,i_{\nabla\psi}[(v\cdot w)u-(v\cdot u)w-2((u\times w)\cdot\nabla)u]^\flat\\
&= C|w|^2 + \,i_{\nabla\psi}[(v\cdot u)w+2((u\times w)\cdot\nabla)u]^\flat.
\end{align}
\end{proof}

Writing $\mathcal{V}_1=u\times w-2(w\cdot\nabla)u$ then (\ref{eq:extra2}) becomes $\mathcal{V}_1\cdot\nabla\psi=0$. Adding $|w|^2$ times (\ref{eq:extra1}) to $u\cdot w$ times (\ref{eq:extra3}), we obtain $\mathcal{V}_2\cdot\nabla\psi=0$ with $\mathcal{V}_2=|w|^2u\times w+(u\cdot w)[(u\cdot v) w+2(u\times w\cdot\nabla)u]$. Combining the two equations with $u\cdot\nabla\psi=0$ from (\ref{eq:extra0}) we deduce a further requirement.
\begin{thm}
\label{thm:MHSqscon}
For an MHS field with $p$ a function of $\psi$ to be quasi-symmetric, the quasi-symmetry $u$ must satisfy $u\times\mathcal{V}_1\cdot\mathcal{V}_2=0$.
\end{thm}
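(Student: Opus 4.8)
The plan is to recognise that the three first-order conditions assembled just above are nothing but three orthogonality relations against the single vector $\nabla\psi$, and that in three dimensions this forces the vanishing of a determinant. First I would collect the ingredients. From (\ref{eq:extra0}) we have $u\cdot\nabla\psi=0$. With $\mathcal{V}_1 = u\times w - 2(w\cdot\nabla)u$, condition (\ref{eq:extra2}) reads $\mathcal{V}_1\cdot\nabla\psi=0$. Finally, forming $|w|^2$ times (\ref{eq:extra1}) plus $(u\cdot w)$ times (\ref{eq:extra3}) cancels the terms proportional to $C(\psi)$ and leaves $\mathcal{V}_2\cdot\nabla\psi=0$, where $\mathcal{V}_2 = |w|^2\,u\times w + (u\cdot w)[(u\cdot v)w + 2(u\times w\cdot\nabla)u]$ as defined.

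The key step is then a pointwise linear-algebra observation. For an MHS field that is quasi-symmetric in the sense of Definition~\ref{def:Bqs}, the flux function satisfies $\nabla\psi\ne0$ a.e. At each such point the three vectors $u$, $\mathcal{V}_1$ and $\mathcal{V}_2$ all lie in the two-dimensional plane orthogonal to $\nabla\psi$. Three vectors confined to a plane are linearly dependent, so the determinant with rows $u$, $\mathcal{V}_1$, $\mathcal{V}_2$ vanishes; equivalently the scalar triple product $u\times\mathcal{V}_1\cdot\mathcal{V}_2$ is zero. Since this holds on a full-measure set and the fields are smooth, it holds everywhere, which is the claimed condition.

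I do not expect a genuine obstacle here, since the argument is purely the geometry of the orthogonal complement of $\nabla\psi$. The only points requiring care are bookkeeping ones: checking that the $C(\psi)$ contributions really do cancel in forming $\mathcal{V}_2$, and stating the conclusion only as a \emph{necessary} condition, valid where $\nabla\psi\ne0$ so that nothing is asserted at the degeneracies of $d\psi$. It is worth emphasising that the result is essentially a compatibility (solvability) condition: the three constraints on $\nabla\psi$ are over-determined in 3D unless their coefficient vectors $u,\mathcal{V}_1,\mathcal{V}_2$ are coplanar, and $u\times\mathcal{V}_1\cdot\mathcal{V}_2=0$ is exactly the requirement that a common nonzero $\nabla\psi$ can exist.
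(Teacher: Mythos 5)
Your proposal is correct and is essentially the paper's own argument: the paper likewise assembles $u\cdot\nabla\psi=0$, $\mathcal{V}_1\cdot\nabla\psi=0$, and $\mathcal{V}_2\cdot\nabla\psi=0$ (the latter from the same combination $|w|^2\times$(\ref{eq:extra1}) $+\,(u\cdot w)\times$(\ref{eq:extra3})), then invokes $d\psi\neq0$ a.e.\ from Definition~\ref{def:Bqs} to conclude that $u,\mathcal{V}_1,\mathcal{V}_2$ are linearly dependent a.e. Your cancellation check of the $C(\psi)$ terms and the remark that the conclusion is a solvability condition are accurate but add nothing beyond the paper's reasoning.
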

\begin{proof}
By Definition \ref{def:Bqs} of quasi-symmetric, $d\psi\neq0$ a.e. Since $u\cdot\nabla\psi=0$, $\mathcal{V}_1\cdot\nabla\psi=0$ and $\mathcal{V}_2\cdot\nabla\psi=0$, it follows that $u,\mathcal{V}_1,\mathcal{V}_2$ must be linearly dependent a.e.
\end{proof}
The condition of the above theorem merits analysis, but there are also the further restrictions given by (\ref{eq:extra1}) and (\ref{eq:qsGS}) itself. And it would be good to also remove the MHS condition.

Finally, the current density $J$ can be found from (\ref{eq:Jinqs}). But $J=\curl B$ is not automatic either. Still for solutions of (\ref{eq:preGS}) and therefore (\ref{eq:qsGS}) this amounts to $L_uB^\flat=0$ again as the next result shows.

\begin{thm}
Let $B,J$ be of the form (\ref{eq:Binqs})-(\ref{eq:Jinqs}). On the set of solutions of the pre-GS equation, $L_uB^\flat=0$ iff $J=\curl B$.
\end{thm}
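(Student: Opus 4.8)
The plan is to compute $L_uB^\flat$ explicitly and show it equals $((\curl B - J)\times u)^\flat$, so that $L_uB^\flat$ detects precisely the part of $\curl B - J$ transverse to $u$; the pre-GS equation will then supply the vanishing of the remaining component along $u$. Starting from (\ref{eq:iuiJ1}), which gives $L_uB^\flat = i_ui_{\curl B}\Omega + d(u\cdot B)$, I would use the form (\ref{eq:Binqs}) of $B$ to write $u\cdot B = C(\psi)$, so that $d(u\cdot B) = C'\,d\psi$, and $B\times u = \nabla\psi$, so that $d\psi = (B\times u)^\flat$. Together with $i_ui_{\curl B}\Omega = (\curl B\times u)^\flat$ this yields
\begin{equation}
L_uB^\flat = \big((\curl B + C'B)\times u\big)^\flat.
\end{equation}
Substituting $C'B = -J - p'u$ from the form (\ref{eq:Jinqs}) of $J$ and discarding $p'\,u\times u = 0$, one obtains the key identity
\begin{equation}
L_uB^\flat = \big((\curl B - J)\times u\big)^\flat.
\end{equation}

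Writing $D = \curl B - J$, the two directions now fall out. If $J = \curl B$ then $D = 0$ and $L_uB^\flat = 0$ immediately. For the converse, suppose $L_uB^\flat = 0$; then $D\times u = 0$, so $D = \alpha u$ for some scalar $\alpha$ (as $u\neq 0$). Here I would invoke the solution hypothesis: the pre-GS theorem is an identity whenever $i_ui_B\Omega = d\psi$, holding with the genuine $\curl B$ in its last term, whereas the pre-GS equation with the \emph{formula} $J$ of (\ref{eq:Jinqs}) is exactly (\ref{eq:qsGS}); subtracting the two (their only difference is that last term, since $u\cdot B = C$ in both) leaves $u\cdot(\curl B - J) = u\cdot D = 0$. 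Hence $\alpha|u|^2 = 0$, so $\alpha = 0$, $D = 0$, and $J = \curl B$.

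I do not expect a serious obstacle: the argument is a short computation followed by a linear-algebra split of $D = \curl B - J$ into its $u$-transverse part (measured by $L_uB^\flat$) and its $u$-parallel part (measured by the pre-GS equation). The one point requiring care is the bookkeeping that makes the $p'u$ term drop out of $L_uB^\flat$, together with keeping straight that the symbol $J$ in the original pre-GS theorem denotes the actual $\curl B$, while in the present hypothesis it denotes the closed-form expression (\ref{eq:Jinqs}); it is precisely the mismatch between these two readings that produces the scalar $u\cdot D$ needed to close the argument.
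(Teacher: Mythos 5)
Your proof is correct and follows essentially the same route as the paper's: your key identity $L_uB^\flat = ((\curl B - J)\times u)^\flat$ is exactly the paper's $L_uB^\flat = i_u s$ with $s = dB^\flat - i_J\Omega$, and your linear-algebra split of $D=\curl B - J$ into its $u$-transverse part (detected by $L_uB^\flat$) and $u$-parallel part (killed by the pre-GS equation) is precisely the paper's identity $|u|^2 s = i_u(u^\flat\wedge s) + u^\flat\wedge i_u s$. The only cosmetic difference is that you obtain $u\cdot D = 0$ by subtracting the pre-GS identity (with the genuine $\curl B$) from the assumed pre-GS equation (with the formula $J$), whereas the paper re-derives that same identity by computing $u^\flat\wedge s = (\Delta\psi + B\cdot v - J\cdot u)\,\Omega$ directly; the careful distinction you draw between the two readings of the symbol $J$ is indeed the crux, and you handle it correctly.
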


\begin{proof}
Write $L_uB^\flat=i_udB^\flat+d(i_uB^\flat)$ and $J=\curl B$ as $s=0$, where $s=dB^\flat-i_J\Omega$. Using $u\cdot B=C$ and $i_ui_J\Omega=-C'd\psi$, we see that
\begin{equation}
L_uB^\flat=i_us.
\end{equation}
For the converse, note that $|u|^2s=i_u(u^\flat\wedge s)+u^\flat\wedge i_us$ and
\begin{equation}
u^\flat\wedge s=-d(u^\flat\wedge B^\flat)+i_v\Omega\wedge B^\flat-(i_Ju^\flat)\Omega=(\Delta\psi+B\cdot v-J\cdot u)\,\Omega.
\end{equation}
Thus, in light of (\ref{eq:preGS0}), we deduce that on solutions of (\ref{eq:preGS})
\begin{equation}
s=-|u|^{-2}u^\flat\wedge L_uB^\flat,
\end{equation}
which completes the proof.
\end{proof}

In conclusion, the system of equations (\ref{eq:qsGS})-(\ref{eq:extra3}) describes the conditions that a quasi-symmetry $u$ and the corresponding flux surfaces $\psi$ must satisfy in magnetohydrostatics.

\section{Variational principle for the quasi-symmetric GS equation}

Two questions arise: (i) does (\ref{eq:qsGS}) have solutions, and (ii) how do we incorporate the supplementary conditions (\ref{eq:extra0})-(\ref{eq:extra3})?

In this section, we address principally the first question. 

There are a number of relevant results on the existence theory for semilinear elliptic PDEs, for example, Theorem 15.12 in \cite{GT} 
and Theorem 9.12 of \cite{Am}.  But even for the axisymmetric GS equation, there are regimes with no solutions \cite{AM}, and regimes with more than one solution.  We have to do more work to reach definitive conclusions.


In the meantime, however, we address here the question whether (\ref{eq:qsGS}) has a variational principle, because it would be one useful route to prove existence of a minimiser or some other critical point by variational means, cf.~\cite{BB} and to understand the set of solutions.  To that end we resort to the Helmholtz conditions, as formulated in \cite{O}. 

For a variational problem of the form $D{\mathcal L}_\psi = 0$ (often written $\delta{\mathcal L}[\psi]=0$) with ${\mathcal L}[\psi] = \int_Q L(\psi,\nabla \psi)\ dV$ (where $L$ is called the {\em Lagrangian} and may involve more derivatives) on smooth functions $\psi: Q \to \R$, the {\em Euler-Lagrange operator} $E$ on smooth functions $\psi$ is defined by writing
\begin{equation}
D{\mathcal L}_\psi v = -(E[\psi],v) 
\end{equation}
for all $v:Q\to \R$ (often written $\delta\psi$) satisfying suitable boundary conditions, where $(f,g) = \int_Q fg\ dV$ is the standard inner product on $L^2(Q,\R)$.  So the Euler-Lagrange equations are $E[\psi]=0$.

The Helmholtz conditions say that a differential operator $E$ on functions $\psi: Q \to \R$ is the Euler-Lagrange operator for some variational problem iff $(f,DEg) = (DEf,g)$ for all functions $f,g$ for which both sides are defined.  We write this as $DE^* = DE$ where $DE^*$ is the adjoint operator defined wherever $(DE f,g) = (f, DE^* g)$ makes sense, and refer to such operators as self-adjoint, but ignoring the question of equality of domains that is part of the standard definition.

A catch with applying the Helmholtz conditions is that the variational property is not preserved between equivalent equations, not even, for example, $\lambda E[\psi]=0$ and $E[\psi]=0$ where $\lambda$ is some non-zero function on $Q$. In fact the Helmholtz criterion for the lefthand side of (\ref{eq:qsGS}) as it stands implies the highly restrictive case $u\times v=0$ since $\Delta$ is self-adjoint. However, the axisymmetric and helical cases suggest the use of the factor $\lambda=|u|^{-2}$, so let us consider
\begin{equation}
\label{eq:qsGSf}
E[\psi] = |u|^{-2}\Delta\psi - |u|^{-4}u\times v\cdot \nabla \psi + |u|^{-4}u\cdot v\,C(\psi) + |u|^{-2}CC'(\psi) + p'(\psi).
\end{equation}

\begin{thm}\label{prop:vp}
$E$ given by (\ref{eq:qsGSf}) is the Euler-Lagrange operator for some variational problem $\mathcal{L}$ iff $L_uu^\flat=0$. In this case
\begin{equation}
\label{eq:LqsGS}
\mathcal{L}[\psi] = \int \left(\frac{1}{2|u|^2}(|\nabla \psi|^2-C(\psi)^2) + C(\psi)Y\cdot\nabla\psi - p(\psi) \right) dV,
\end{equation}
where $Y$ is a vector field such that $\div Y=u\cdot v/|u|^4$.
\end{thm}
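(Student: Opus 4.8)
The plan is to apply the Helmholtz criterion as stated: $E$ is an Euler--Lagrange operator exactly when its Fréchet derivative $DE_\psi$ is (formally) self-adjoint with respect to $(f,g)=\int_Q fg\,dV$. So the first step is to linearise \eqref{eq:qsGSf}. Differentiating in the direction $h$ produces a scalar second-order operator $DE_\psi h = a\,\Delta h + \mathbf b\cdot\nabla h + c\,h$, with principal coefficient $a=|u|^{-2}$, advection vector $\mathbf b=-|u|^{-4}\,u\times v$, and a zeroth-order multiplier $c$ assembled from $C,C',C'',p''$ and the geometric coefficients. The point I would emphasise is that the nonlinearities $C(\psi)$ and $p(\psi)$ enter only through $c$, so they cannot obstruct self-adjointness; the whole obstruction will sit in the first-order term.

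Next I would record the elementary fact, proved by integrating by parts twice against $\int_Q\cdot\,dV$ (using $\int_Q(\div X)f\,dV=-\int_Q X\cdot\nabla f\,dV$ with no boundary contribution), that the adjoint of $a\Delta+\mathbf b\cdot\nabla+c$ is $a\Delta+(2\nabla a-\mathbf b)\cdot\nabla+(\Delta a-\div\mathbf b+c)$. Matching the two forces $\mathbf b=\nabla a$ from the gradient terms, after which $\div\mathbf b=\Delta a$ holds automatically and the zeroth-order terms agree; hence self-adjointness is equivalent to the single condition $\mathbf b=\nabla a$. Substituting $a=|u|^{-2}$, $\mathbf b=-|u|^{-4}u\times v$ and using $\nabla(|u|^{-2})=-|u|^{-4}\nabla|u|^2$, this reads $u\times v=\nabla|u|^2$, i.e.\ $w=v\times u+\nabla|u|^2=0$. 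By the vector-calculus form of $L_uu^\flat=0$ recorded after Theorem~\ref{thm:killingqs} (namely $w^\flat=L_uu^\flat$), this is precisely $L_uu^\flat=0$, which gives the equivalence.

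For the ``in this case'' statement I would verify directly that the displayed $\mathcal L$ reproduces $E$. First note that a vector field $Y$ with $\div Y=u\cdot v/|u|^4$ exists (at least locally, by solving a single divergence equation), and that two admissible choices differ by a divergence-free field, which will turn out not to affect the Euler--Lagrange operator. Writing the Lagrangian density as $L$, one has $\partial L/\partial(\nabla\psi)=|u|^{-2}\nabla\psi+C(\psi)Y$, and with the sign convention $D\mathcal L_\psi v=-(E,v)$ one gets $E[\psi]=\div\bigl(\partial L/\partial(\nabla\psi)\bigr)-\partial L/\partial\psi$. The terms $C'(\psi)\,Y\cdot\nabla\psi$ arising from $\div(CY)$ and from $\partial L/\partial\psi$ cancel, the surviving piece of $\div(CY)$ is $C\,\div Y=|u|^{-4}(u\cdot v)C$, and $\div(|u|^{-2}\nabla\psi)=|u|^{-2}\Delta\psi-|u|^{-4}\nabla|u|^2\cdot\nabla\psi$; invoking $\nabla|u|^2=u\times v$ (i.e.\ $L_uu^\flat=0$ once more) collapses this to exactly \eqref{eq:qsGSf}. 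This computation also makes manifest that only $\div Y$, and not $Y$ itself, enters, confirming independence of the choice of $Y$.

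The main obstacle I anticipate is not any single calculation but keeping the bookkeeping honest: computing $DE_\psi$ correctly, carrying out the double integration by parts so that the zeroth-order self-adjointness requirement is seen to be automatic rather than a genuine second constraint, and tracking the sign convention in $D\mathcal L_\psi v=-(E,v)$ through the Euler--Lagrange computation. The conceptual crux, worth stating plainly, is that self-adjointness of a scalar second-order operator is governed entirely by the relation $\mathbf b=\nabla a$ between its advection and principal coefficients, so the entire content of the theorem reduces to the geometric identity $\nabla|u|^2=u\times v$, which is the vector-calculus rendering of $L_uu^\flat=0$.
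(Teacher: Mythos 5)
Your proposal is correct and follows essentially the same route as the paper: both reduce the Helmholtz criterion to the linear part $|u|^{-2}\Delta - |u|^{-4}(u\times v)\cdot\nabla$ (the $C$, $p$ terms linearise to multiplication operators, hence are harmlessly self-adjoint), compute the formal adjoint by integration by parts, and find that self-adjointness holds iff the first-order coefficient is the gradient of the principal one — your condition $\mathbf b=\nabla a$ is exactly the paper's $G=0$, since $G^\flat = L_uu^\flat/|u|^4 = w^\flat/|u|^4$ — before verifying the Lagrangian term by term. The only cosmetic differences are that you state the adjoint formula for a general operator $a\Delta+\mathbf b\cdot\nabla+c$ and explicitly note that the zeroth-order matching and the choice of $Y$ (only $\div Y$ matters) are automatic, points the paper handles implicitly or in surrounding remarks.
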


\begin{proof}
Note first that the last three terms of $E$ are functions of $\psi$ only. Therefore their derivative is a multiplication operator, which is always self-adjoint. Thus the problem is reduced to just $E[\psi]=|u|^{-2}\Delta \psi-F\cdot\nabla\psi$, where $F=u\times v/|u|^4$. Now $E$ is a linear differential polynomial, and so $DE=E$, i.e.
\begin{equation}
DE=|u|^{-2}\Delta-F\cdot\nabla .
\end{equation}
Integration by parts shows that $DE^*g = \Delta(g|u|^{-2})+\div(gF)$ for the formal adjoint of $DE$. Using the identities $\Delta(g|u|^{-2})=\div\nabla(g|u|^{-2})=|u|^{-2}\Delta g+2\nabla|u|^{-2}\cdot\nabla g+g\Delta|u|^{-2}$ and $\div(gF)=g\,\div F+F\cdot\nabla g$, we arrive at 
\begin{equation}
DE^*=|u|^{-2}\Delta-(2G+F)\cdot\nabla - \div G,
\label{eq:DE*}
\end{equation}
where $G=-\nabla|u|^{-2}-F$. In other words, $G^\flat=(d|u|^2+i_udu^\flat)/|u|^4=L_uu^\flat/|u|^4$. Hence $DE^*=DE$ iff $G=0$, i.e.~$L_uu^\flat=0$.

The first term of (\ref{eq:LqsGS}) introduces $-|u|^{-2}\Delta\psi-\nabla|u|^{-2}\cdot\nabla\psi$ into the Euler-Lagrange operator. For $L_uu^\flat=0$, $\nabla|u|^{-2}$ reduces to $-F$ and so we recover the first two terms of (\ref{eq:qsGSf}). The third term of the Lagrangian yields $-C \div Y = -Cu\cdot v/|u|^4$, and the remaining terms easily restore the rest of (\ref{eq:qsGSf}).
\end{proof}

Note that existence of a vector field $Y$ such that $\div Y=u\cdot v/|u|^4$ may look difficult to satisfy, but it can be expressed equivalently as saying that $a=u^\flat\wedge du^\flat/|u|^4$ is exact, since $du^\flat=i_v\Omega$ and so $u^\flat\wedge du^\flat=(i_vu^\flat)\Omega$ and $a=di_Y\Omega$ accordingly. In this way, we see that it is not much of a restriction, because $a$ is automatically closed, being a top-form.

In the case of axisymmetry where $u\cdot v=0$, the variational functional (\ref{eq:LqsGS}) for $Y=0$ recovers the Lagrangian \cite{BB} for the usual GS equation in cylindrical polar coordinates.

In the case of helical symmetry, $u\cdot v/|u|^4=2l/(r^2+l^2)^2$, then $Y=-l/(r(r^2+l^2))\hat{r}$, so we derive a Lagrangian for the helical GS equation.

Recall, however, Theorem \ref{thm:uKf}, for $u$ satisfying the supplementary conditions. In this case, unfortunately, $L_uu^\flat=0$ is the condition for $u$ to be a Killing field, and thus in Euclidean space holds only if $u$ generates an orbit of $SE(3)$.  Then we are back to the axisymmetric GS equation (rejecting translations and the helical case because they do not have bounded flux surfaces).  This is where the second question comes in.  Can the extra conditions be incorporated as constraints in a variational principle?  Perhaps, as in the previous section, one should view the problem as a simultaneous system of equations for $\psi$ and $u$.

An alternative approach to extending the GS equation to general quasi-symmetry is to circle-average the MHS equation in the form $i_BdB^\flat=dp$ and derive an equation for $\psi$ with respect to the circle-averaged Riemannian metric.  Strangely, the resulting GS equation always has a variational principle.  But as in our analysis here, there are extra conditions that must be satisfied and it is not clear that they can.  This will be written in a separate publication.

\section{Perspectives}

Is every quasi-symmetry a Killing field?  At least in the Euclidean case?  Or at least if one requires magnetohydrostatics?  A starting point is to analyse the condition of Theorem \ref{thm:MHSqscon}. Or might there be some ``Kovalevskaya" examples\footnote{Recall that Kovalevskaya found non-axisymmetric integrable cases for the dynamics of a top.}? It is not even clear whether these questions are global or local in nature. The isometry condition $L_ug = 0$ is certainly a local one, and this at least hints that the questions may be local. If this is indeed the case a prolongation analysis based on the Cartan-Kuranishi theorem may be sufficient to provide definitive answers. We will report on such an analysis in a future publication. 

The main point of stellarators is to achieve confined guiding-centre trajectories without significant toroidal current.  Is quasi-symmetry compatible with this goal?  The toroidal current enclosed by a flux-surface is just $\int_\gamma B^\flat$ round any poloidal loop on the flux-surface.  It is conventionally written as $2\pi I(\psi)$.  Not surprisingly, $\int_{pol}B^\flat=0$ iff $\iota_J=0$.  
We see no incompatibility between this and quasi-symmetry, but it depends on there being some non-axisymmetric quasi-symmetries.

Quasi-symmetry may be too strong an ideal to aim for.  Weaker conditions would suffice for single-particle confinement.  Omnigenity \cite{H} is one such concept which just requires flux surfaces and the average drift in flux function for a guiding centre to be zero to leading approximation.  This is automatic for circulating particles of ZGCM on irrational flux surfaces but requires a condition for all bouncing particles \cite{LC} and for circulating particles on rational surfaces\footnote{This is not usually recognised, but follows by the same arguments as for bouncing particles.}.  Quasi-symmetry implies omnigenity, but perhaps not vice versa, so omnigenity would allow a bit more scope \cite{LC}.

More generally, one invariant torus for FGCM at each value of energy and magnetic moment will confine all those inside it.  This might be too weak an approach, however, because particle interactions would lead to exchange of energy and magnetic moment and drive them across the confining tori.

Alternatively, approximate quasi-symmetry may be enough.  This can be achieved to some order by near-axis expansions \cite{LSP} and we intend to address it in more detail.

Quasi-symmetry may also be too weak an ideal to aim for. Even if quasisymmetric field configurations do exist they may do a poor job of confining the hot alpha particles generated by thermonuclear burn. The issue is such particles have much larger gyroradii than bulk plasma particles. In the best case, confinement properties of alpha particles might be well-captured by guiding-centre theory with higher-order corrections, in which case it would be interesting to study possible hidden symmetries of these high-order terms. In the worst case guiding-centre theory is useless for describing  alpha particle orbits, and other approaches, perhaps based on a more brute-force optimization approach, should be pursued.

Finally, we have restricted attention to symmetries of FGCM of the form $U=(u,0)$ with $u$ a vector field on guiding-centre position, not involving the parallel velocity.  Might there be parallel-velocity-dependent symmetries that render FGCM integrable? Likewise we have concentrated here entirely on the context of Hamiltonian symmetries.  Might there be relevant non-Hamiltonian symmetries? This is work in progress.

\section*{Acknowledgements}

This work was supported mainly by a grant from the Simons Foundation/SFARI (601970, RSM), and partly by the National Science Foundation under Grant No.~DMS-1440140 while the first and third authors were in residence at the Mathematical Sciences Research Institute in Berkeley, California, during the Fall 2018 semester. Research presented in this article was supported by the Los Alamos National Laboratory LDRD program under project number 20180756PRD4.

We are grateful to other members of the MSRI semester for useful comments, and of the Simons collaboration team, in particular to Adam Golab for researching the literature on existence results for solutions of the GS equation, and discussion on the supplementary conditions for its quasi-symmetric generalisation. We also want to thank Miles Wheeler for helpful suggestions about existence of solutions of the quasi-symmetric GS equation.

\section*{Appendix: Additions of electrostatic potential and relativity}

To add the effect of an electrostatic potential $\Phi$ to the theory of this paper, add $e\Phi(q)$ to $H$ in both equations (\ref{eq:H2}) and (\ref{eq:H7}).  The drift equations (\ref{eq:drift})-(\ref{eq:vdrift}) gain additional terms $b\times \nabla \Phi /\widetilde{B}_\pl$ and $-e\nabla \Phi / \widetilde{B}_\pl$ respectively.
The conditions of Theorems~\ref{thm:main} and \ref{thm:alt} need augmenting by $L_u\Phi=0$.

To add relativistic effects, one simply replaces $p=mv$ by $p=\gamma m v$ with Lorentz factor $\gamma = (1-|v|^2/c^2)^{-1/2}$, and the kinetic energy in $H$ by $c\,(m^2c^2 + |p|^2)^{1/2}$.  This gives the particle motion with respect to proper time.  Likewise the magnetic moment changes to $\mu = p_\perp^2/(2m|B|)$ and the kinetic part of the guiding-centre Hamiltonian to $c\,(m^2c^2+p_\pl^2 + 2m\mu|B|)^{1/2}$.  The conditions for quasi-symmetry are unchanged.

Alternatively, taking a fully space-time view and allowing general time-dependent electric and magnetic fields, the equation of motion is
\begin{equation}
\frac{dp}{d\tau} = -\,e\,i_U\!F,
\end{equation}
where $F$ is the Faraday 2-form, $U$ is the contravariant 4-velocity, $p=mU^\flat$ is the covariant 4-momentum, and $\tau$ is proper time for the particle. In a time-space coordinate system $(t,x,y,z)$ with locally Minkowski metric $ds^2 = -c^2 dt^2 + dx^2 + dy^2 + dz^2$,
\begin{equation}
F = B_x dy\wedge dz  + B_y dz \wedge dx + B_z dx \wedge dy + E_x dx \wedge dt + E_y dy \wedge dt + E_z dz \wedge dt
\end{equation}
and $U=\gamma(c,v^x,v^y,v^z)$. Noting that two of Maxwell's equations are equivalent to saying $F$ is closed, the motion of the charged particle can be written in Hamiltonian form with
\begin{align}
H &= \frac{|p|^2}{2m} \\
\omega &= -d\vartheta - eF,
\end{align}
where $\vartheta$ is the tautological 1-form on $T^*Q$ for space-time $Q$.  The number of degrees of freedom is now 4.  Guiding-centre reduction can still be performed resulting in a 3DoF system.  Integrability would now require two further integrals beyond the Hamiltonian.  It would be interesting to find out whether time-translation symmetry can be replaced.


\begin{thebibliography}{XXXX}
\bibitem[Am]{Am} Amann H, Fixed point equations and nonlinear eigenvalue problems in ordered Banach spaces, \textit{SIAM Rev} 18 (1976), 620-709.
\bibitem[AM]{AM} Ambrosetti A, Mancini G, A free boundary problem and a related semilinear equation, \textit{Nonlin Anal} 4 (1980) 909--15.
\bibitem[Ar]{A} Arnol'd VI, \textit{Mathematical Methods of Classical Mechanics}, Graduate Texts in Mathematics Vol. 60 (Springer, 1978)
\bibitem[BB]{BB} Berestycki H, Br\'ezis H, On a free boundary problem arising in plasma physics, \textit{Nonlin Anal} 4 (1980) 415--36.
\bibitem[Bo1]{Bo1} Boozer AH, Plasma equilibrium with rational magnetic surfaces, \textit{Phys Fluids} 24 (1981) 1999--2003.
\bibitem[Bo2]{Bo2} Boozer AH, Transport and isomorphic equilibria, \textit{Phys Fluids} 26 (1983) 496.
\bibitem[BQ]{BQ} Burby JW, Qin H, Toroidal precession as a geometric phase, \textit{Phys Plasmas} 20 (2013) 012511
\bibitem[BE]{BE} Burby JW, Ellison CL, Toroidal regularization of the guiding center Lagrangian, \textit{Phys Plasmas} 24 (2017) 110703
\bibitem[DHCS]{DHCS} D'haeseleer WD, Hitchon WNG, Callen JD, Shohet JL, \textit{Flux Coordinates and Magnetic Field Structure} (Springer, 1991)
\bibitem[Fr]{Fr} Fried D, The geometry of cross-sections to flows, \textit{Topology} 21 (1982) 353--71.
\bibitem[GB]{GB} Garren DA, Boozer AH, Existence of quasihelically symmetric stellarators, \textit{Phys Fluids} B 3 (1991) 2822--34. 
\bibitem[GT]{GT} Gilbarg D, Trudinger NS, \textit{Elliptic Partial Differential Equations of Second Order}, Classics in Mathematics (Springer, 2001).
\bibitem[GR]{GR} Grad H, Rubin H, Hydromagnetic equilibria and force-free fields, Proc 2nd UN Conf on peaceful uses of atomic energy 31 (1958) 190--7.
\bibitem[Ha]{Ha} Hamada S, Hydromagnetic equilibria and their proper coordinates, \textit{Nucl Fusion} 2 (1962) 23--37
\bibitem[He]{H} Helander P, Theory of plasma confinement in non-axisymmetric magnetic fields, \textit{Rep Prog Phys} 77 (2014) 087001
\bibitem[Ja]{Ja} Jardin S, \textit{Computational Methods in Plasma Physics}, Computational Science Series (CRC Press, 2010).
\bibitem[JOKF]{JOKF} Johnson JL, Oberman CR, Kulsrud RM, Frieman EA, Some stable hydromagnetic equilibria, \textit{Phys Fluids} 1 (1958) 281--96.
\bibitem[KG]{KG} Kruger SE, Greene JM, The relationship between flux coordinates and equilibrium-based frames of reference in fusion theory, \textit{Phys Plasmas} 26 (2019) 082506
\bibitem[LC]{LC} Landreman M, Catto PJ, Omnigenity as generalised quasisymmetry, \textit{Phys Plasmas} 19 (2012) 056103.
\bibitem[LSP]{LSP} Landreman M, Sengupta W, Plunk GG, Direct construction of optimised stellarator shapes. Part 2. Numerical quasisymmetric solutions, \textit{J Plasma Phys} 85 (2019) 905850103.
\bibitem[Li]{L} Littlejohn RG, Variational principles of guiding centre motion, \textit{J Plasma Phys} 29 (1983) 111--25.
\bibitem[Ma]{M} MacKay RS, Tutorial on differential forms in plasma physics, submitted to \textit{J Plasma Phys}
\bibitem[NZ]{NZ} N\"uhrenberg J, Zille R, Quasihelically symmetric toroidal stellarators, \textit{Phys Lett} A 129 (1988) 113--7.
\bibitem[Ol]{O} Olver PJ, \textit{Applications of Lie Groups to Differential Equations}, Graduate Texts in Mathematics Vol. 107 (Springer, 1993).
\bibitem[Sh]{Sh} Shafranov VD, On magnetohydrodynamical equilibrium configurations, \textit{Sov Phys JETP} 6 (1958) 545--54.
\bibitem[SH]{SH} Simakov AN, Helander P, Plasma rotation in a quasisymmetric stellarator, \textit{Plasma Phys Control Fusion} 53 (2011) 024005.

\end{thebibliography}
\end{document}